\documentclass[12pt]{article}

\usepackage{amssymb,amsthm,amsmath,amscd}
\usepackage{url}
\usepackage{graphicx}
\usepackage{subfig}
\usepackage{natbib}

\oddsidemargin 0.0in
\textwidth 6.5in
\topmargin -.5in
\textheight 9.0in

\newtheorem{theorem}{Theorem}[section]
\newtheorem{lemma}[theorem]{Lemma}

\newtheorem{proposition}[theorem]{Proposition}

\theoremstyle{definition} 
\newtheorem*{example}{Example}

\theoremstyle{remark} 
\newtheorem*{remark}{Remark}


\newcommand{\bitem}{\begin{itemize}}
\newcommand{\eitem}{\end{itemize}}
\newcommand{\bcenter}{\begin{center}}
\newcommand{\ecenter}{\end{center}}
\newcommand{\benum}{\begin{enumerate}}
\newcommand{\eenum}{\end{enumerate}}
\newcommand{\beqar}{\begin{eqnarray*}}
\newcommand{\eeqar}{\end{eqnarray*}}
\newcommand{\beq}{\begin{equation}}
\newcommand{\eeq}{\end{equation}}
\newcommand{\bdesc}{\begin{description}}
\newcommand{\edesc}{\end{description}}

\newcommand{\Bin}{{\rm Bin}}
\newcommand{\length}{{\rm length}}

\newcommand{\Uniform}{{\rm Uniform}}
\newcommand{\eps}{\varepsilon}
\newcommand{\graph}{{\rm graph}}
\newcommand{\goto}{\rightarrow}

\newcommand{\Hold}{{\cal H}}

\newcommand{\cF}{{\cal F}}
\newcommand{\cP}{{\cal P}}

\newcommand{\cH}{{\cal H}}

\newcommand{\bbZ}{{\mathbb{Z}}}

\newcommand{\be}{{\bf e}}
\newcommand{\bi}{{\bf i}}

\newcommand{\bh}{{\bf h}}
\newcommand{\bm}{{\bf m}}

\newcommand{\bs}{{\bf s}}

\newcommand{\bx}{{\bf x}}
\newcommand{\by}{{\bf y}}
\newcommand{\bz}{{\bf z}}

\newcommand{\bone}{{\bf 1}}

\newcommand{\pix}{{\rm pixel}}
\def\oR{R_\delta}

\newcommand{\bbG}{\mathbb{G}}
\newcommand{\bbN}{\mathbb{N}}

\newcommand{\bbR}{\mathbb{R}}
\newcommand{\bbB}{\mathbb{B}}


\newcommand{\im}{{\rm im}}


\def\afloor{\lfloor \alpha \rfloor}

\newcommand{\pr}[1]{{\bf P}\left\{#1\right\}}

\newcommand{\QED}{{\unskip\nobreak\hfil\penalty50\hskip2em\vadjust{}
\nobreak\hfil$\Box$\parfillskip=0pt\finalhyphendemerits=0\par}\vspace{.3in}}

\newcommand{\bino}[2]{\left(\begin{array}{c} #1 \\ #2
    \end{array}\right)}

\title{Networks of Polynomial Pieces with Application to the Analysis of Point Clouds and Images
}
\author{Ery Arias-Castro$^1$\footnote{Corresponding author: \tt{eariasca@math.ucsd.edu}}, Boris Efros$^2$ and Ofer Levi$^2$}
\date{
$^1$ University of California, San Diego, USA\\
$^2$ Ben Gurion University of the Negev, Be'er Sheva, Israel\\
}

\begin{document}

\maketitle

\begin{abstract}
We consider H\"older smoothness classes of surfaces for which we construct piecewise polynomial approximation networks, which are graphs with polynomial pieces as nodes and edges between polynomial pieces that are in `good continuation' of each other.
Little known to the community, a similar construction was used by Kolmogorov and Tikhomirov in their proof of their celebrated entropy results for H\"older classes. 

We show how to use such networks in the context of detecting geometric objects buried in noise to approximate the scan statistic, yielding an optimization problem akin to the Traveling Salesman.
In the same context, we describe an alternative approach based on computing the longest path in the network after appropriate thresholding. 

For the special case of curves, we also formalize the notion of `good continuation' between beamlets in any dimension, obtaining more economical piecewise linear approximation networks for curves.

We include some numerical experiments illustrating the use of the beamlet network in characterizing the filamentarity content of 3D datasets, and show that even a rudimentary notion of good continuity may bring substantial improvement.       
\end{abstract}

\noindent {\it Keywords and Phrases:} Detection of filaments; Multiscale analysis; Beamlets; Extracting information from graphs; H\"older smoothness classes; Piecewise polynomials.\\

\noindent {\it Acknowledgments:} The authors are grateful to the anonymous referees and editors for their careful reading of an earlier version of the manuscript and their constructive comments. 
The first author was partially supported by NSF grant DMS-0603890.  

\section{Introduction}

\subsection{From function approximation to set approximation}

An important trend in Approximation Theory and Harmonic Analysis focuses on designing dictionaries (e.g. orthonormal bases) $\{\phi_n\}$ well-adapted to a given function class $\cF$, in the sense that any $f \in \cF$ is well-approximated by a linear combination of a few functions from the dictionary:
\begin{equation} \label{eq:f-decomposition}
f \doteq \sum_{n \in N(f)} a_n \phi_n.
\end{equation} 
Examples of dictionaries include the Fourier basis, polynomials, splines \cite{MR1045442}, radial basis functions, wavelets \cite{MR1614527} and others such as wedgelets \cite{wedgelets}, platelets \cite{1199635}, bandelets \cite{MR2128287}, rigelets \cite{MR1857974}, curvelets \cite{MR2012649}, chirplets \cite{chirp-tech}.  

In the context of function estimation in additive white noise \cite{johnstone1998feg, MR1635414, MR1256530}, approximations by sums of atoms as in (\ref{eq:f-decomposition}) are particularly suitable.
Consider instead a setting where a geometric object (i.e. a set) is buried in noise or clutter, a setting considered e.g. in \cite{2D-beamlets, MGD}.
This is for example relevant in target tracking, where the object of interest is the target's trajectory often modeled as a curve.
In this setting, set approximation plays the role of function approximation, and the aim of the present paper is to develop strategies to compute approximations of sets by unions of simple building blocks akin to how beamlets are used to approximate curves in \cite{2D-beamlets}.

Since a set may be equivalently represented by its indicator function, approximation of sets may appear to be a special case of approximation of functions.
Though indeed closely related, function approximation does not directly translate into set approximation.
In part, this comes from the fact that the image of a sum of functions is in general not the union of the functions' images.
However, when a parametrization of the set is available, approximating the parametrization in piecewise fashion (i.e.  when the supports of the functions involved in the sum do not overlap) does result in a proper approximation of the set itself.
This is for example the case when the set is the graph of a function, considered in \cite{kor-tsy, wedgelets} in the context of image processing.

The selection of building blocks in set approximation is also not necessarily parallel to that of atoms in function approximation.
The main difference is in the fact that overlapping building blocks create redundancy, while atoms with overlapping supports may cancel each other in appropriate ways to fit the target function, e.g. in trigonometric, polynomial or wavelet expansions. 
Typically, the approximation (\ref{eq:f-decomposition}) is constructed by first computing the coefficients $a_n$, often as the inner product of $f$ and $\phi_n$, and then keeping the largest ones in absolute value, which is in effect equivalent to thresholding \cite{MR1256530}.
A similar strategy may be implemented for sets, where building blocks with the most overlap with the set (as a fraction of their size) are selected; this corresponds to the simplest beamlet-based algorithm presented in \cite{2D-beamlets}.
However, the result is often redundant, as nearby building blocks tend to have a similar overlap with a given set.
This is avoided in the beamlet-based coder for curves introduced in \cite{1518933}, by choosing at most one beamlet per dyadic square in a a given recursive dyadic partitioning. 
The strategy we adopt here consists of looking for a union of building blocks that obey some sort of `good continuation'.
In the approximation of curves, this corresponds to the most complex algorithm presented in \cite{2D-beamlets} which amounts to chaining beamlets together that share one endpoint and have similar orientations.
This is formalized in \cite{MSDFS} for curves that are graphs of H\"older functions and in \cite{MR2379113} for chirps (highly oscillating functions), chaining chirplets in good continuation. 
Good-continuation principles are also considered in \cite{766676,MR2036391} inspired by the Gestalt theory of vision, with applications to the detection of (parametric) geometric object in images.

\subsection{Networks of polynomial pieces}

We define a network of polynomial pieces to be a graph with nodes indexing polynomial pieces and edges between polynomial pieces in good-continuation.
(We use the word `network' instead of `graph' so as to avoid confusion with the notion of `graph of a function'.)
We construct such networks to approximate surfaces of varying dimension and smoothness.
Just as beamlets are line-segments spanning a wide range of location, orientations and scale, so do the polynomial pieces.
The networks are akin to that of \cite{MSDFS} in that the notion of good-continuation is explicit and the scales are not mixed together, thus better adapted to surfaces with homogeneous smoothness, e.g. graphs of H\"older functions.
We do, however, suggest ways to go multiscale.

In the 1950's, Kolmogorov and Tikhomirov used piecewise polynomial approximations together with a similar kind of good-continuation notion to bound the $\epsilon$-entropy of H\"older function classes \cite{kolmogorov}.
Our construction may be seen as a formalization of their approach, with emphasis on the approximation of sets instead of functions.
Note that this ancestry was discovered after the fact; the present perspective was indeed independently suggested in \cite{ery-thesis} with the intention of generalizing the system used in \cite{MSDFS}.

\subsection{The special case of beamlets}

Beamlets were introduced in \cite{2D-beamlets} for the explicit purpose of approximating curves in 2D, with a 3D version latter developed in \cite{3D-beamlets}.
A variety of algorithms are proposed in \cite{2D-beamlets}, where the more elaborate ones are based on the chaining of beamlets in good continuation.
However, the notion of good-continuation remains implicit and only palpable through numerical experiments.
  
We formalize here this notion of good-continuation for beamlets.
This was previously done in \cite{MSDFS} for a beamlet-like system built to detect graphs of H\"older functions.
In the resulting beamlet network, small beamlets have a large number of neighbors; this seems unavoidable if the network is to enable accurate approximation of curves.
To address this issue, we develop an alternative network of line-segments, with emphasis on developing an economical system both in terms of number of nodes and connectivity.

\subsection{Application to the analysis of point clouds and images}

\subsubsection{Detection of geometric objects}

Consider a simple model for detection, where we observe a point cloud (i.e. a spatial point process) and the goal is to decide whether the points were generated uniformly at random or a fraction of them were sampled from a geometric object (i.e. a surface) belonging to a given class.

This is a standard setting where the scan statistic is used \cite{GlaNauWal,GlaBal}, which consists in computing the largest number of points belonging to one of the objects of interest.
Though the scan statistic achieves the best known detection rates, both for parametric \cite{MGD} and nonparametric \cite{CTD} classes of objects, it is not computationally friendly as it involves an optimization over a large (function) class of objects, though there are exceptions \cite{CTD-DP}.
Instead, we propose to replace the optimization over the class of objects with an optimization over paths in an approximating network.
The idea of replacing an optimization over a function space with an optimization over a carefully constructed graph is quite natural, and in fact appears in other situations, e.g. in the computation of minimal surfaces \cite{1238310,10.1109/ICCV.2005.252,kirsanov2004dgm,Hu_optimalminimum-surface}.
In our particular case, the resulting optimization problem is  akin to the Budget-Reward problem \cite{DasHesSon}; though still NP-hard, this problem admits polynomial time approximations.
Moreover, we show that this approximation achieves the best known detection rates.

Note that the optimization above is computationally more tractable (e.g. using dynamic programming ideas) when the network is direct and acyclic, which is the case for example in multiframe target tracking; see e.g. \cite{IMA_MFD}, where a beamlet network is used to track the time-space primitives.

We also consider an alternative approach based on computing the size of the longest path in the network after thresholding, which is suggested in \cite{2D-beamlets}.
Dynamic programming ideas may also be implemented here, as done in \cite{MSDFS}.
We show that this method also achieves the best known detection rates.

We mention that the same approaches may be implemented in the setting of an image, where a geometric object is buried  in white Gaussian noise.

\subsubsection{Characterization of spatial distributions}

In Astrophysics, the study of the galaxy distribution involves  quantifying the content in filaments, sheets and blobs in 3D galaxy catalogs \cite{MarSaa, sdss}.
Ultimately, scientists would like to know which of the many cosmological models is best (and well) supported by observations.
Practically, the task is to meaningfully compare simulated galaxy distributions from various cosmological models with the observed galaxy distribution.

With the presence of highly anisotropic features such as filaments,  traditional tools for analyzing spatial data become irrelevant, among them the classical two-point correlation function.
Instead, a method based on beamlets is very attractive, as beamlets provide good approximations for filaments.
And indeed, \cite{spie-beamlets} presents evidence that beamlets are useful at separating various cosmological models, even though the algorithm implemented in \cite{spie-beamlets} is of the simplest kind and in particular does not involve chaining (i.e. good-continuation).
We perform a number of numerical experiments on simulated data that show that chaining may bring substantial improvement. 

Note that such tools are in demand in other scientific fields, such as Medical Imaging, for example in the examination of vascular networks \cite{APPLIC_3D_NETWORK} or cancer cells \cite{APPLIC_FILAMENT_IN_CELLS}.

\subsection{Contents}

The contents are organized as follow.
In Section 2, we introduce piecewise polynomial networks designed to approximate surfaces of any intrinsic dimension and (H\"older) smoothness.
In Section 3, we consider the detection of geometric objects buried in noise and develop methods based on these networks.
In Section 4, we formalize a notion of good-continuation for beamlets in arbitrary dimension and show that the resulting network has desirable approximation properties for curves.
In Section 5, we perform some numerical experiments showing that the notion of good-continuation may bring practical improvement.
Some of the proofs and technical arguments are gathered in the Appendix.

\section{Networks of Polynomial Pieces}
\label{sec:hold}

We build an explicit, algorithmically friendly approximating network for H\"older surfaces, i.e. graphs or images of H\"older functions.
As a H\"older function is well-approximated locally by a polynomial, in fact its Taylor expansion, it is natural to construct piecewise polynomial approximations.
The idea is to partition the unit hypercube into smaller hypercubes and in each smaller hypercube provide a choice of approximation by polynomials; since the functions we consider are uniformly smooth, approximations in nearby hypercubes should be close, which we formalize as a neighboring condition.
We thus form a network with nodes indexing polynomial pieces and edges linking those in good-continuation, restricting the possible combinations to those useful in approximating functions of given smoothness, in such a way that functions and certain connected components in this network are in correspondence.
Though we build a different network for each smoothness class, it is possible to discretize the range of parameters resulting in a dyadic organization of this family of networks by scale.

Little known to the community, a similar construction was used by Kolmogorov and Tikhomirov in their seminal work on $\epsilon$-entropy of H\"older function classes (and others) \cite{MR0124720, kolmogorov}.
Note that the present construction was independently suggested in \cite{ery-thesis}, as a generalization of the beamlet-like system used in \cite{MSDFS}.

We first introduce some notation.
For $i \in \{1,\dots,k\}$, let $\be_i$ denote the $i$th
canonical vector in $\bbR^k$.
For a vector $\bx = (x_1,\dots,x_k) \in \bbR^k$, its supnorm is defined as $\|\bx\| = \max\{|x_i|:i=1,\dots,k\}$.  
For $\bs = (s_1,\dots,s_k) \in \bbN^k$, let $\bs! = s_1! \cdots s_k!$ and $|\bs| = s_1 + \cdots + s_k$.
For a function $f$ and $\bs = (s_1,\dots,s_k) \in \bbN^k$, $f^{(\bs)} = \partial_{x_1}^{s_1} \cdots \partial_{x_k}^{s_k} f$.

We define the following constants:
\begin{equation} \label{eq:c_def} 
c_1 = \sum_{|\bs| = \afloor} \frac{1}{\bs!}, \qquad
c_2 = \sum_{|\bs| \leq \afloor} \frac{2^{-|\bs|}}{\bs!}.
\end{equation}
Note that $c_1 \leq \exp(k)$ and $c_2 \leq \exp(k/2)$.

\subsection{H\"older smoothness classes}

For $\alpha, \beta > 0$, define $\Hold^k(\alpha,\beta)$ as the H\"older smoothness class of $\afloor$-times differentiable functions functions $f:[0,1]^k \goto [0,1]$ satisfying:
\begin{align}
|f^{(\bs)}(\bx)| &\leq \beta, \quad \forall \bx \in [0,1]^k, \ \forall \bs \in \bbN^k, |\bs| \leq \afloor; \label{eq:deriv_bound} \\
|f^{(\bs)}(\by) - f^{(\bs)}(\bx)| &\leq \beta \|\by -\bx\|^{\alpha-\afloor}, \quad \forall \bx, \by \in [0,1]^k, \ \forall \bs \in \bbN^k, |\bs| = \afloor. \label{eq:deriv_max}
\end{align}

\begin{example}{($k=1$, $\alpha \in (1,2]$)}
\begin{align}
|f^{'}(x)| &\leq \beta, \quad \forall x \in [0,1]; \nonumber \\
|f^{'}(y) - f^{'}(x)| &\leq \beta |y -x|^{\alpha-1}, \quad \forall x, y \in [0,1]. \nonumber
\end{align}
\end{example}

Functions in $\Hold^k(\alpha,\beta)$ are uniformly well-approximated locally by polynomials, specifically their Taylor expansions.
For $f \in \Hold^k(\alpha,\beta)$ and $\bx \in [0,1]^k$, the Taylor expansion of $f$ at $\bx$ of degree $\afloor$ is defined as follows:
$$\dot{f}_\bx(\by) = \sum_{|\bs| \leq \afloor} f^{(\bs)}(\bx)  \ \prod_{i=1}^d \frac{(y_i - x_i)^{s_i}}{s_i!}.$$ 

\begin{example}{($k=1$, $\alpha \in (1,2]$)}
$$\dot{f}_x(y) = f(x) + f^{'}(x) (y-x).$$
\end{example} 

\begin{lemma}
\label{lem:taylor}
For any $f \in \Hold^k(\alpha,\beta)$,
$$
|f(\by) - \dot{f}_\bx(\by)| \leq c_1 \beta \|\by - \bx\|^\alpha, \quad \forall \bx, \by \in [0,1]^k.
$$
\end{lemma}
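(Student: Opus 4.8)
\quad The plan is to reduce to a one‑dimensional Taylor expansion along the segment from $\bx$ to $\by$ and then invoke the H\"older condition (\ref{eq:deriv_max}) on the top‑order derivatives. Fix $\bx,\by\in[0,1]^k$, set $\bh=\by-\bx$, and define $g(t)=f(\bx+t\bh)$ for $t\in[0,1]$; this is legitimate since $[0,1]^k$ is convex, and $g$ is $\afloor$‑times differentiable with $g^{(\afloor)}$ continuous because the order‑$\afloor$ partials of $f$ are continuous (being H\"older). The multinomial chain rule gives, for $0\le j\le\afloor$,
\[
g^{(j)}(t)=\sum_{|\bs|=j}\frac{j!}{\bs!}\,f^{(\bs)}(\bx+t\bh)\prod_{i=1}^{k}h_i^{s_i},
\]
so that $\sum_{j=0}^{\afloor} g^{(j)}(0)/j!=\dot f_{\bx}(\by)$, and the assertion becomes $\bigl|g(1)-\sum_{j=0}^{\afloor} g^{(j)}(0)/j!\bigr|\le c_1\beta\|\bh\|^{\alpha}$. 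When $\afloor=0$ this is precisely (\ref{eq:deriv_max}) for $|\bs|=0$ (note $c_1=1$ then), so assume $\afloor\ge1$.

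Next I would apply Taylor's formula with the \emph{integral} form of the remainder — the integral form, not the Lagrange form, because $f$ is assumed only $\afloor$‑times (not $(\afloor+1)$‑times) differentiable. Since $\frac{1}{(\afloor-1)!}\int_0^1(1-t)^{\afloor-1}\,dt=\frac{1}{\afloor!}$, absorbing the $j=\afloor$ term yields the exact identity
\[
g(1)-\sum_{j=0}^{\afloor}\frac{g^{(j)}(0)}{j!}=\frac{1}{(\afloor-1)!}\int_0^1(1-t)^{\afloor-1}\bigl(g^{(\afloor)}(t)-g^{(\afloor)}(0)\bigr)\,dt.
\]
Using the chain‑rule expression above, the supnorm bound $\bigl|\prod_i h_i^{s_i}\bigr|\le\|\bh\|^{|\bs|}=\|\bh\|^{\afloor}$, and the H\"older estimate (\ref{eq:deriv_max}) in the form $|f^{(\bs)}(\bx+t\bh)-f^{(\bs)}(\bx)|\le\beta\,t^{\alpha-\afloor}\|\bh\|^{\alpha-\afloor}$, one obtains $\bigl|g^{(\afloor)}(t)-g^{(\afloor)}(0)\bigr|\le\afloor!\,c_1\,\beta\,t^{\alpha-\afloor}\|\bh\|^{\alpha}$, since $c_1=\sum_{|\bs|=\afloor}1/\bs!$.

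Plugging this into the integral identity, and then using $t^{\alpha-\afloor}\le1$ for $t\in[0,1]$ (valid since $\alpha\ge\afloor$) together with $\int_0^1(1-t)^{\afloor-1}\,dt=1/\afloor$, bounds $\bigl|g(1)-\sum_{j=0}^{\afloor} g^{(j)}(0)/j!\bigr|$ by $\frac{\afloor!}{(\afloor-1)!}\cdot\frac{1}{\afloor}\cdot c_1\beta\|\bh\|^{\alpha}=c_1\beta\|\bh\|^{\alpha}$, which is the claimed inequality. There is no serious obstacle here; the only points needing care are the choice of the integral (rather than Lagrange) remainder, and the bookkeeping with the multinomial coefficients, the supnorm estimate on $\prod_i h_i^{s_i}$, and the elementary cancellation $\afloor!/((\afloor-1)!\,\afloor)=1$, so that the constant comes out to exactly $c_1$ rather than something larger.
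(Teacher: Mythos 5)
Your proof is correct and follows essentially the same route as the paper's: a Taylor expansion of order $\afloor$ along the segment from $\bx$ to $\by$, with the remainder controlled by the H\"older condition (\ref{eq:deriv_max}) on the order-$\afloor$ partials together with the supnorm bound on $\prod_i h_i^{s_i}$, yielding exactly the constant $c_1$. The only difference is cosmetic: the paper uses the mean-value form of the remainder for the degree-$(\afloor-1)$ expansion, evaluating the top-order partials at a single intermediate point $\bz$ on the segment (which, contrary to the concern motivating your choice, also requires only $\afloor$ derivatives), whereas you use the integral form.
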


\begin{proof}
A Taylor approximation of degree $\afloor$ gives:
$$f(\by) = \dot{f}_\bx(\by) + \sum_{|\bs| = \afloor} (f^{(\bs)}(\bz) - f^{(\bs)}(\bx))  \ \prod_{i=1}^d \frac{(y_i - x_i)^{s_i}}{s_i!},$$
for some $\bz$ on the segment joining $\bx$ and $\by$.
Hence,
$$|f(\by) - \dot{f}_\bx(\by)| \leq c_1 \|\by - \bx\|^{\afloor} \max_{|\bs| = \afloor} |f^{(\bs)}(\bz) - f^{(\bs)}(\bx)|.$$
Now apply (\ref{eq:deriv_max}) and the fact that $\|\bz - \bx\| \leq \|\by - \bx\|$ to get
$$|f^{(\bs)}(\bz) - f^{(\bs)}(\bx)| \leq \beta \|\by - \bx\|^{\alpha - \afloor}, \quad \forall \bs \in \bbN^k, |\bs| = \afloor.$$

\end{proof}

\subsection{Nets of piecewise polynomials}

We now build a family of nets for $\Hold^{k}(\alpha,\beta)$ by dividing $[0,1]^k$ into hypercubes and then offering a choice of approximation by polynomials within each hypercube, which is most relevant in view of Lemma \ref{lem:taylor}.
Fix $\Delta \in (0,1)$ and $\delta > 0$, and define $\delta_s = \Delta^{-s} \delta, s=0,\dots,\afloor$.  
In the construction that follows, the parameter $\Delta$ quantizes the variable space, while each parameter $\delta_s$ quantizes the range of values of derivatives of order $s$ of functions in $\Hold^{k}(\alpha,\beta)$.
Note that the quantization is coarser for higher order derivatives, and specifically chosen so that the approximation result in Lemma \ref{lem:local_approx} below holds. 
  
Divide $[0,1]^{k}$ into hypercubes indexed by $\bm \in \{1,\dots,\Delta^{-1}\}^k$ of the form:
\begin{eqnarray*}
I_\bm 
&=& \prod_{i=1}^k [(m_i-1) \Delta,m_i \Delta].
\end{eqnarray*}
Let $\bx_\bm = (x_{\bm,1}, \dots, x_{\bm,k})$ denote the center of $I_\bm$, i.e. $x_{\bm,i} = (m_i-1/2) \Delta$.
On each hypercube $I_\bm$, consider polynomials of the form 
\begin{equation}
\label{eq:local_poly}
p_{\bm, \bh}(\bx) = \sum_{|\bs| \leq \afloor} h^{(\bs)} \delta_{|\bs|} \ \prod_{i=1}^d \frac{(x_i - x_{\bm,i})^{s_i}}{s_i!},
\end{equation} 
where $\bh = (h^{(\bs)}: |\bs| \leq \afloor)$, with $h^{(\bs)} \in \bbZ$ and $|h^{(\bs)} \delta_{|\bs|}| \leq \beta$.

\begin{example}{($k=1$, $\alpha \in (1,2]$)} For $\bh = (h^{(0)}, h^{(1)})$,
$$p_{m, \bh}(x) = h^{(0)} \delta + h^{(1)} \delta_1 (x - x_m).$$
\end{example}

\begin{figure}[htbp]
  \centering
\subfloat[Linear]{\includegraphics[height = 2in]{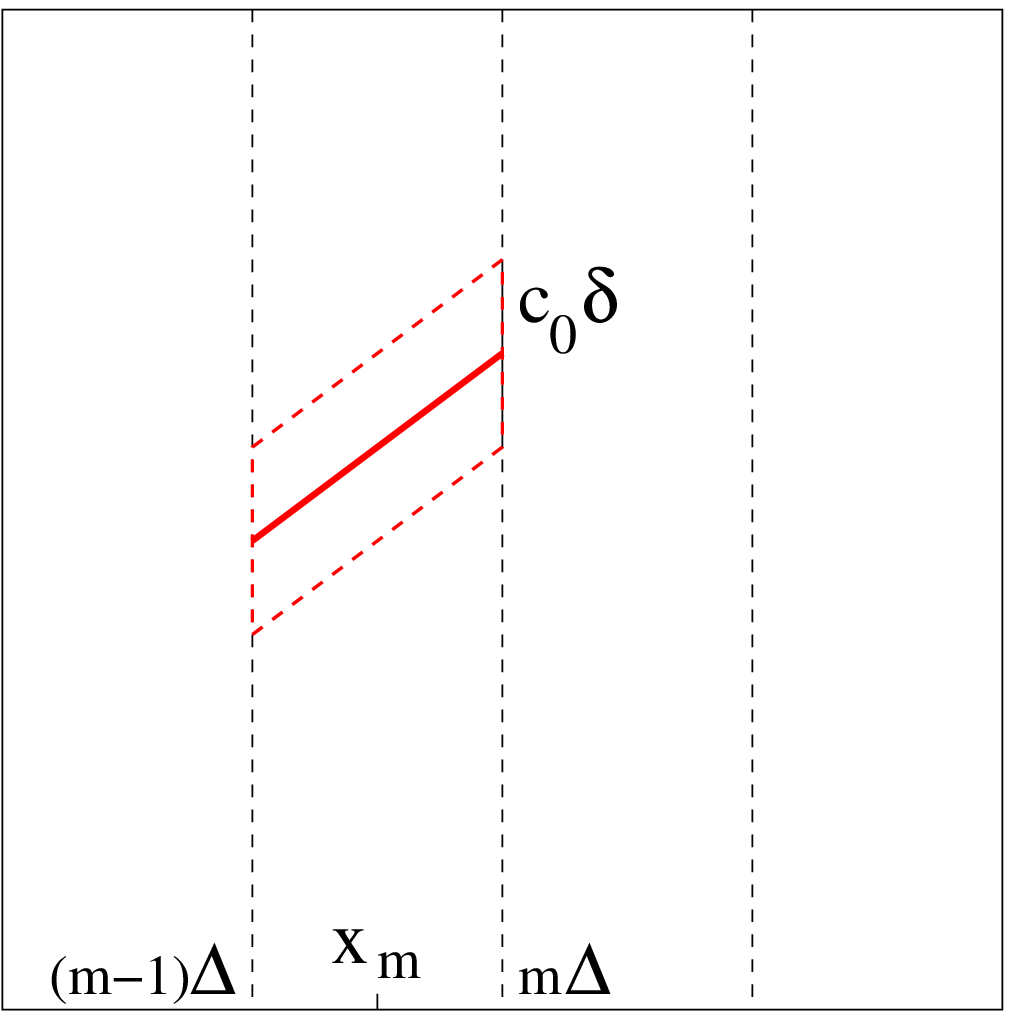}}
  \qquad%
\subfloat[Higher order]{\includegraphics[height = 2in]{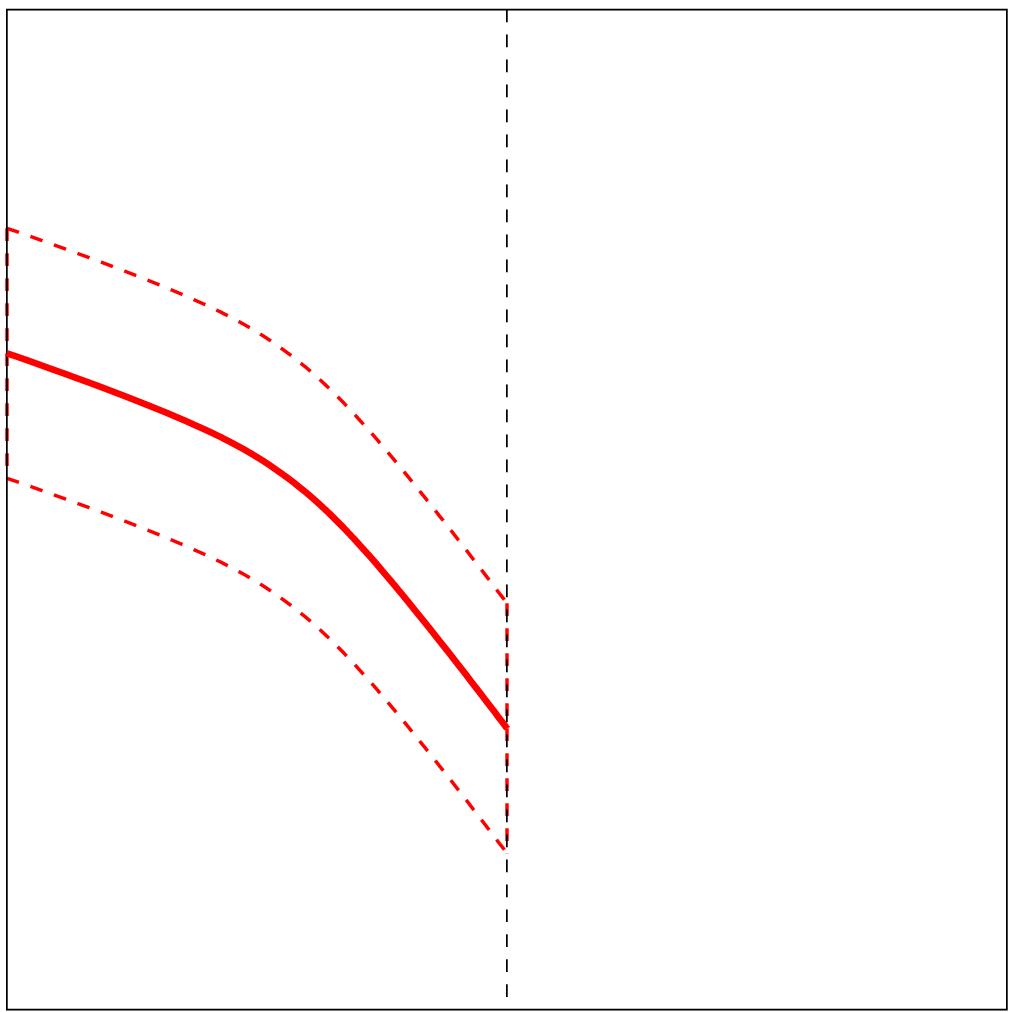}}
  \caption{Examples of polynomial pieces for different choices of $\Delta$ and $\delta$, together with their associated region as defined in (\ref{eq:R}).}
  \label{fig:1D-hold}
\end{figure}

For $f \in \Hold^{k}(\alpha,\beta)$, 
let $h^{(\bs)}(\bm,f)$ denote the closest integer to
$f^{(\bs)}(\bx_\bm)/\delta_{|\bs|}$.

\begin{lemma}
\label{lem:local_approx}
For $f \in \Hold^{k}(\alpha,\beta)$ and $\bm \in \{1,\dots,\Delta^{-1}\}^k$,
$$|f(\bx) - p_{\bm, \bh(\bm,f)}(\bx)| \leq (c_1/2^\alpha) \beta \Delta^\alpha + (c_2/2) \delta, \quad \forall \bx \in I_\bm.$$
\end{lemma}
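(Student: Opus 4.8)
The plan is to bound $|f(\bx) - p_{\bm,\bh(\bm,f)}(\bx)|$ by inserting the degree-$\afloor$ Taylor expansion $\dot f_{\bx_\bm}$ of $f$ at the center $\bx_\bm$ of $I_\bm$ and splitting via the triangle inequality,
$$|f(\bx) - p_{\bm,\bh(\bm,f)}(\bx)| \le |f(\bx) - \dot f_{\bx_\bm}(\bx)| + |\dot f_{\bx_\bm}(\bx) - p_{\bm,\bh(\bm,f)}(\bx)|.$$
The first term is handled directly by Lemma \ref{lem:taylor}: since $\bx \in I_\bm$ forces $\|\bx - \bx_\bm\| \le \Delta/2$, we get $|f(\bx) - \dot f_{\bx_\bm}(\bx)| \le c_1 \beta (\Delta/2)^\alpha = (c_1/2^\alpha)\beta\Delta^\alpha$, which is exactly the first term in the asserted bound.

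For the second term, the point is that $\dot f_{\bx_\bm}$ and $p_{\bm,\bh(\bm,f)}$ are both expanded in the same monomial basis $\prod_i (x_i - x_{\bm,i})^{s_i}/s_i!$ about $\bx_\bm$, so their difference is $\sum_{|\bs|\le\afloor}\bigl(f^{(\bs)}(\bx_\bm) - h^{(\bs)}(\bm,f)\,\delta_{|\bs|}\bigr)\prod_i (x_i - x_{\bm,i})^{s_i}/s_i!$. By definition $h^{(\bs)}(\bm,f)$ is the nearest integer to $f^{(\bs)}(\bx_\bm)/\delta_{|\bs|}$, so each coefficient satisfies $|f^{(\bs)}(\bx_\bm) - h^{(\bs)}(\bm,f)\,\delta_{|\bs|}| \le \delta_{|\bs|}/2$. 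Using $|x_i - x_{\bm,i}| \le \Delta/2$, hence $\prod_i |x_i - x_{\bm,i}|^{s_i} \le (\Delta/2)^{|\bs|}$, each term of the sum is at most $\tfrac12 \delta_{|\bs|}(\Delta/2)^{|\bs|}/\bs!$.

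The only thing that really needs checking — and it is the reason the scaling $\delta_s = \Delta^{-s}\delta$ was chosen — is that $\delta_{|\bs|}(\Delta/2)^{|\bs|} = \Delta^{-|\bs|}\delta \cdot \Delta^{|\bs|} 2^{-|\bs|} = 2^{-|\bs|}\delta$, so the $\Delta$-powers cancel. Substituting, the second term is bounded by $\tfrac{\delta}{2}\sum_{|\bs|\le\afloor} 2^{-|\bs|}/\bs! = (c_2/2)\delta$ by the definition (\ref{eq:c_def}) of $c_2$. Adding the two estimates yields the claim. There is no genuine obstacle: this is a one-paragraph computation, and the only care required is tracking the powers of $\Delta$ so that they cancel against the $\delta_s$'s as designed.
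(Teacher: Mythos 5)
Your proof is correct and is evidently the argument the authors intended: the two terms $(c_1/2^\alpha)\beta\Delta^\alpha$ and $(c_2/2)\delta$ in the statement correspond exactly to your split into the Taylor remainder at the cell center (Lemma \ref{lem:taylor} with $\|\bx-\bx_\bm\|\le\Delta/2$) and the coefficient-quantization error, with the scaling $\delta_{|\bs|}(\Delta/2)^{|\bs|}=2^{-|\bs|}\delta$ producing $c_2$ as defined in (\ref{eq:c_def}). Nothing is missing.
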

\begin{proof}
The proof of Lemma \ref{lem:local_approx} is in the Appendix.
\end{proof}

Partly justified by Lemma \ref{lem:local_approx}, we now assume that 
\begin{equation} \label{eq:Delta-delta}
\delta = c_1 \beta \Delta^\alpha.
\end{equation}

The system $\{p_{\bm, \bh}\}$ is therefore rich enough to provide a certain degree of approximation locally.
The same degree of approximation may be achieved globally by simply considering functions that coincide with the polynomials above within each hypercube, namely 
$$g_\bh(\bx) = \sum_{\bm} \bone_{\{\bx \in I_\bm\}} \ p_{\bm, \bh(\bm)}(\bx),$$
where this time $\bh$ also depends on $\bm$, representing a different choice for each hypercube.

As Kolmogorov and Tikhomirov realized, generating a $g_\bh$ by simply picking a polynomial in each hypercube independently would result in a wasteful system, for in fact polynomials in neighboring hypercubes may be restricted to have similar coefficients.
This comes from the fact that the derivatives up to order $\afloor$ of a function in $f \in \Hold^{k}(\alpha,\beta)$ have a certain degree of smoothness.

\begin{lemma}
\label{lem:taylor_ei}
For $f \in \Hold^{k}(\alpha,\beta)$ and $\bs$ such that $|\bs| \leq \afloor$, 
$$\left|f^{(\bs)}(\bx + \eta \be_i) - \sum_{t \leq \afloor - |\bs|} f^{(\bs + t \be_i)}(\bx)\ \frac{\eta^t}{t!}\right| \leq c_1 \beta \eta^{\alpha-|\bs|},$$
for all $\bx \in [0,1]^k$, $i = 1, \dots, k$ and $\eta$ such that $\bx + \eta \be_i \in [0,1]^k$.
\end{lemma}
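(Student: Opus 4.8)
The plan is to read this off from Lemma~\ref{lem:taylor} applied to the function $g := f^{(\bs)}$, restricted to displacements in the coordinate direction $\be_i$. The key first observation is that $g$ is itself a H\"older function of reduced order. Since $|\bs|$ is a nonnegative integer with $|\bs| \le \afloor$, the function $g = f^{(\bs)}$ is $(\afloor - |\bs|)$-times differentiable, and a derivative of $g$ of order $q$ is a derivative of $f$ of order $q + |\bs|$; thus from (\ref{eq:deriv_bound}) and (\ref{eq:deriv_max}) one reads off that $g$ satisfies the same two bounds with $\alpha$ replaced by $\gamma := \alpha - |\bs|$ (so that $\afloor$ is replaced by $\lfloor\gamma\rfloor = \afloor - |\bs|$ and $\gamma - \lfloor\gamma\rfloor = \alpha - \afloor$), and with the \emph{same} constant $\beta$: the bound (\ref{eq:deriv_bound}) is inherited because $q + |\bs| \le \afloor$ when $q \le \afloor - |\bs|$, and (\ref{eq:deriv_max}) because $q + |\bs| = \afloor$ exactly when $q = \afloor - |\bs|$. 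The only respect in which $g$ fails the literal hypotheses of Lemma~\ref{lem:taylor} is that it takes values in $[-\beta,\beta]$ rather than $[0,1]$, but the proof of that lemma uses nothing about the range, only (\ref{eq:deriv_max}); so its conclusion applies to $g$ with exponent $\gamma$.

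The second step applies that conclusion at $\bx$ and $\by := \bx + \eta\be_i$ (both lying in $[0,1]^k$ by hypothesis), giving $|g(\by) - \dot g_\bx(\by)| \le c_1\,\beta\,\|\by - \bx\|^{\gamma} = c_1\,\beta\,|\eta|^{\alpha - |\bs|}$, and then identifies $\dot g_\bx(\by)$ with the finite sum in the statement. By definition $\dot g_\bx(\by) = \sum_{|\bt| \le \afloor - |\bs|} g^{(\bt)}(\bx)\prod_{j=1}^k (y_j - x_j)^{t_j}/t_j!$; since $\by - \bx = \eta\,\be_i$, any multi-index $\bt$ with $t_j \neq 0$ for some $j \neq i$ contributes a zero factor, so only $\bt = t\,\be_i$ with $0 \le t \le \afloor - |\bs|$ survives, and $g^{(t\be_i)} = f^{(\bs + t\be_i)}$. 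Hence $\dot g_\bx(\by) = \sum_{t \le \afloor - |\bs|} f^{(\bs + t\be_i)}(\bx)\,\eta^t/t!$ and the displayed bound is exactly the assertion (with $\eta^{\alpha - |\bs|}$ understood as $|\eta|^{\alpha - |\bs|}$ when $\eta < 0$).

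The whole argument is a one-line reduction, so there is no substantial obstacle; the points that deserve a moment's care are (i) the verification in the first step that differentiating $f$ by $\bs$ lands one in a H\"older class with the \emph{same} constant $\beta$ and order $\afloor - |\bs|$ --- which is exactly where it matters that $|\bs|$ is an integer not exceeding $\afloor$ --- and (ii) the constant bookkeeping: the reduction actually produces the factor $\sum_{|\bt| = \afloor - |\bs|} 1/\bt!$ out front, which, like $c_1$, is at most $e^k$ and which we denote $c_1$ in the final estimate. A self-contained variant avoiding any appeal to Lemma~\ref{lem:taylor} is to Taylor-expand the one-variable function $\eta' \mapsto f^{(\bs)}(\bx + \eta'\be_i)$ to order $\afloor - |\bs|$ with Lagrange remainder; its $(\afloor - |\bs|)$-th derivative equals $f^{(\bs + (\afloor - |\bs|)\be_i)}(\bx + \xi\be_i)$ for some $\xi$ between $0$ and $\eta$, a derivative of $f$ of order exactly $\afloor$, so the remainder is controlled by (\ref{eq:deriv_max}) together with $|\xi| \le |\eta|$.
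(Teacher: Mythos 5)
Your argument is correct and is essentially the paper's: the paper's proof is exactly your ``self-contained variant'' (a one-dimensional Taylor expansion of $f^{(\bs)}$ of degree $\afloor-|\bs|$ along $\be_i$ with remainder controlled by (\ref{eq:deriv_max})), and your main route of viewing $f^{(\bs)}$ as a member of $\Hold^k(\alpha-|\bs|,\beta)$ and invoking Lemma \ref{lem:taylor} is just a repackaging of the same computation. Your remark on the constant is well taken --- the reduction yields $\sum_{|\bt|=\afloor-|\bs|}1/\bt!$ rather than $c_1$, an imprecision the paper shares (cf.\ its own remark that $\beta$ may be replaced by $\beta/(\afloor-|\bs|)!$) and which is harmless for the later uses.
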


\begin{proof}
As with Lemma \ref{lem:taylor}, perform a Taylor approximation of degree $\afloor - |\bs|$ along $\be_i$ and apply (\ref{eq:deriv_max}).
In fact, $\beta$ may be replaced by $\beta/(\afloor - |\bs|)!$.
\end{proof}

\subsection{Approximating networks for H\"older graphs}
\label{sec:hold-graph}

For a function $f:[0,1]^k \to [0,1]$, define its graph as
$$\graph(f) = \{(\bx, f(\bx)): \bx \in [0,1]^k\}.$$
Assuming (\ref{eq:Delta-delta}), we define a network of polynomial pieces $\bbG_{\delta}^{k}(\alpha,\beta)$ with the property that a certain kind of connected components index piecewise polynomial approximations for graphs of H\"older functions.
The network $\bbG_{\delta}^{k}(\alpha,\beta)$ has nodes of the form $(\bm,\bh)$ indexing the polynomials $p_{\bm, \bh}$ defined in (\ref{eq:local_poly}).
Two nodes in the network $(\bm,\bh)$ and $(\bm_\star,\bh_\star)$ are neighbors if the corresponding hypercubes, $I_\bm$ and $I_{\bm_\star}$, are adjacent, and if the corresponding polynomials, $p_{\bm, \bh}$ and $p_{\bm_\star, \bh_\star}$, and their derivatives assume nearby values both at $\bx_\bm$ and $\bx_{\bm_\star}$.
Formally, this corresponds to $\bm_\star = \bm + \xi \be_i$ for some $i \in \{1,\dots,k\}$ and $\xi \in \{-1,+1\}$, and for all $\bs \in \bbN^k, |\bs| \leq \afloor$,
\begin{equation} \label{eq:hold_neigh}
\left|h_\star^{(\bs)} - \sum_{t \leq \afloor - |\bs|}
  \frac{\xi^t}{t!} h^{(\bs+ t \be_i)}\right| < 3 \quad {\rm and}
\quad \left|h^{(\bs)} - \sum_{t \leq \afloor - |\bs|}
  \frac{(-\xi)^t}{t!} h_\star^{(\bs+ t \be_i)}\right| < 3.
\end{equation} 
This last property is a discrete version of Lemma \ref{lem:taylor_ei}.

\begin{example}{($k=1$, $\alpha \in (1,2]$)}
The nodes $(m,h^{(0)},h^{(1)})$ and $(m_\star,h_\star^{(0)}, h_\star^{(1)})$ are neighbors if $|m_\star - m| = 1$ and 
$$
|h_\star^{(1)} - h^{(1)}| < 3, \quad 
|h_\star^{(0)} - h^{(0)} - (m_\star - m) h^{(1)}| < 3, \quad 
|h^{(0)} - h_\star^{(0)} + (m_\star - m) h_\star^{(1)}| < 3.
$$
\end{example}

\begin{figure}[htbp]
  \centering
\subfloat[In good continuation]{\includegraphics[height = 2in]{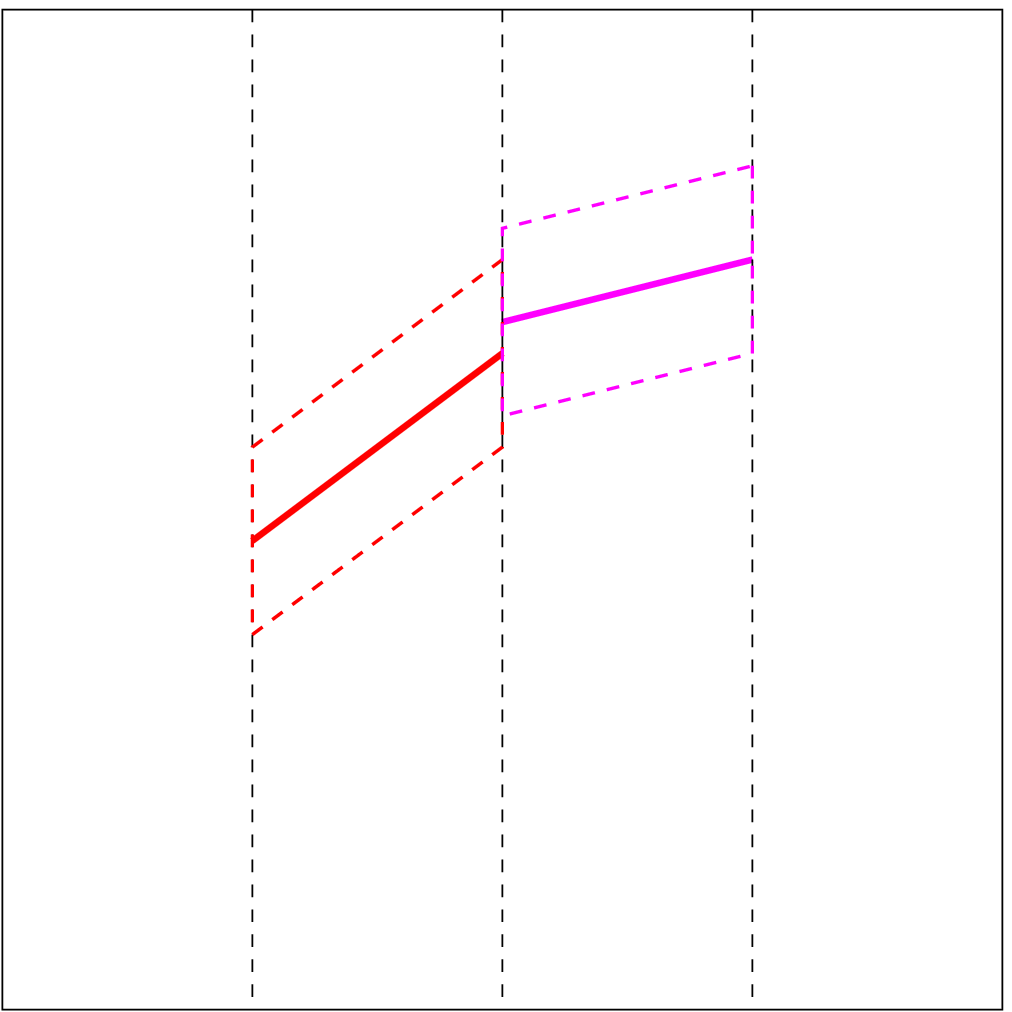}}
  \qquad%
\subfloat[Not in good continuation]{\includegraphics[height = 2in]{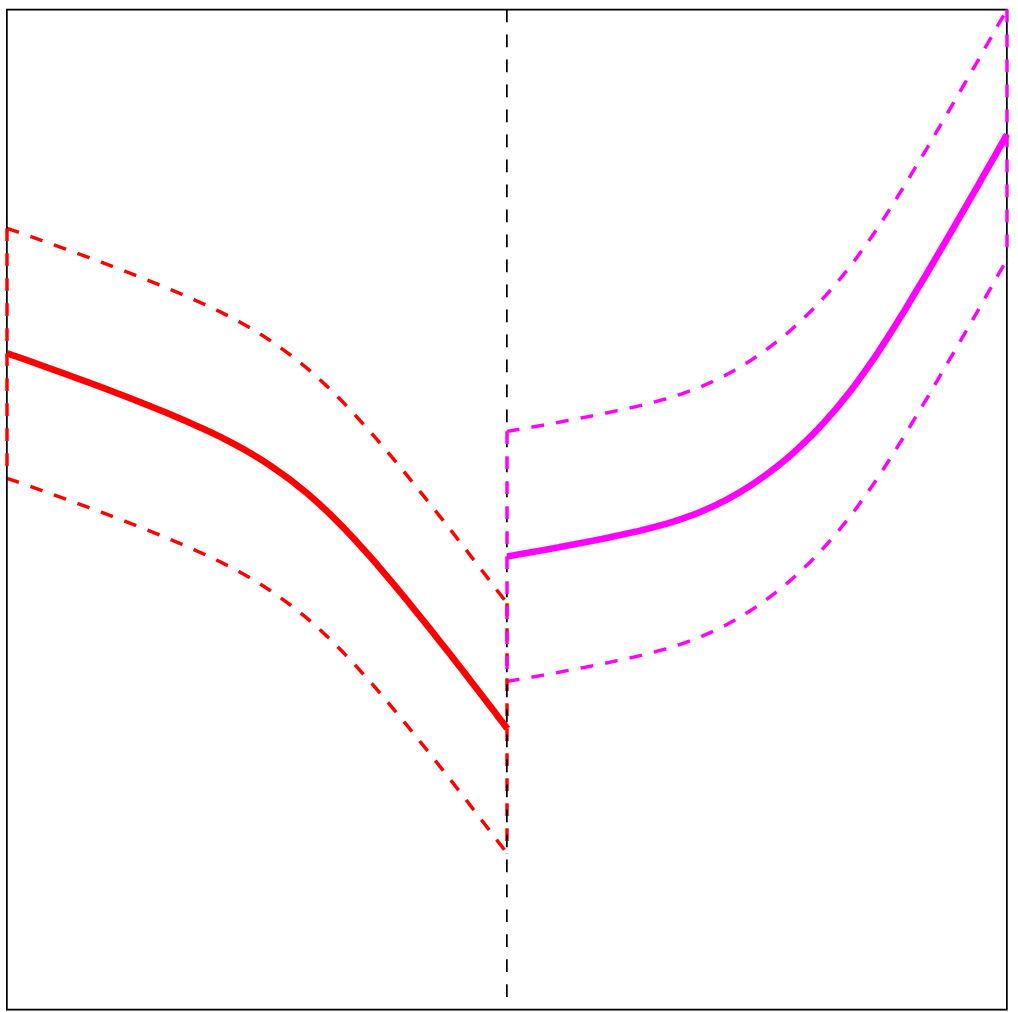}}
  \caption{Examples of polynomial pieces in good continuation (left) and not in good continuation (right).}
  \label{fig:1D-hold-continuation}
\end{figure}

The family of networks $\bbG_{\delta}^{k}(\alpha,\beta)$ effectively generalizes the system presented in \cite{MSDFS} for the case $k=1, \alpha \in (1,2]$.

Define constants
$$c_3 = \sum_{s=0}^{\afloor} \bino{s+k-1}{k-1}, \qquad c_4 = \sum_{s=0}^{\afloor} s \bino{s+k-1}{k-1}.$$

\begin{lemma}
\label{lem:hold-count}
$\bbG_{\delta}^{k}(\alpha,\beta)$ has $O(\beta^{c_3} \delta^{-c_3} \Delta^{c_4-k}) = O(\beta^{c_3-(c_4-k)/\alpha} \delta^{-c_3 + (c_4-k)/\alpha})$ nodes and each node has at most $2k 6^{c_3}$ neighbors.
\end{lemma}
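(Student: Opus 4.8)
The plan is to estimate separately the number of nodes and the degree of each node, in each case reducing everything to counting lattice points in a box.

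\textbf{Counting nodes.} A node is a pair $(\bm, \bh)$. There are $\Delta^{-k}$ choices for $\bm$, since $\bm$ ranges over $\{1,\dots,\Delta^{-1}\}^k$. For fixed $\bm$, a valid $\bh = (h^{(\bs)}: |\bs| \leq \afloor)$ consists of integers $h^{(\bs)}$ subject to the single constraint $|h^{(\bs)}\delta_{|\bs|}| \leq \beta$, i.e. $|h^{(\bs)}| \leq \beta/\delta_{|\bs|} = \beta \Delta^{|\bs|}/\delta$. So the number of admissible values of the coordinate $h^{(\bs)}$ is $O(1 + \beta \Delta^{|\bs|}/\delta)$, and since $\Delta < 1$ and (by \eqref{eq:Delta-delta}) $\delta = c_1\beta\Delta^\alpha$ with $\alpha \geq \afloor$ in the relevant regime — more carefully, $\beta\Delta^{|\bs|}/\delta = \Delta^{|\bs| - \alpha}/c_1 \geq 1$ for $|\bs| \leq \afloor$ — this is $O(\beta \Delta^{|\bs|}/\delta)$. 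The coordinates of $\bh$ are indexed by multi-indices $\bs$ with $|\bs| = s$ for each $s = 0,\dots,\afloor$, and there are $\binom{s+k-1}{k-1}$ such multi-indices. Multiplying over all coordinates gives, for fixed $\bm$,
$$
\prod_{s=0}^{\afloor} \left(\frac{\beta \Delta^{s}}{\delta}\right)^{\binom{s+k-1}{k-1}} = \left(\frac{\beta}{\delta}\right)^{c_3} \Delta^{c_4},
$$
by the definitions of $c_3 = \sum_s \binom{s+k-1}{k-1}$ and $c_4 = \sum_s s\binom{s+k-1}{k-1}$. Multiplying by the $\Delta^{-k}$ choices for $\bm$ yields $O(\beta^{c_3}\delta^{-c_3}\Delta^{c_4-k})$ nodes; substituting $\Delta = (\delta/(c_1\beta))^{1/\alpha}$ from \eqref{eq:Delta-delta} converts the $\Delta^{c_4-k}$ factor into $\beta^{-(c_4-k)/\alpha}\delta^{(c_4-k)/\alpha}$ up to constants, giving the second expression.

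\textbf{Counting neighbors.} Fix a node $(\bm,\bh)$. A neighbor $(\bm_\star,\bh_\star)$ must have $\bm_\star = \bm + \xi\be_i$ for some $i \in \{1,\dots,k\}$ and $\xi \in \{-1,+1\}$; that is $2k$ choices of direction (fewer at the boundary, which only helps). For each such choice, I claim $\bh_\star$ is determined up to a bounded number of possibilities by the first family of inequalities in \eqref{eq:hold_neigh}. Order the multi-indices $\bs$ by decreasing $|\bs|$. For $|\bs| = \afloor$ the sum $\sum_{t \leq \afloor - |\bs|} \frac{\xi^t}{t!} h^{(\bs + t\be_i)}$ has only the $t=0$ term, so $|h_\star^{(\bs)} - h^{(\bs)}| < 3$ pins $h_\star^{(\bs)}$ to at most $5$ integer values. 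Proceeding downward in $|\bs|$, the inequality $|h_\star^{(\bs)} - \sum_{t \leq \afloor - |\bs|} \frac{\xi^t}{t!} h^{(\bs + t\be_i)}| < 3$ involves only the already-known coordinates $h^{(\bs + t\be_i)}$ of the fixed node $\bh$ (not of $\bh_\star$), so again $h_\star^{(\bs)}$ is confined to at most $5$ values regardless of the other coordinates. Hence for each of the $2k$ directions there are at most $5^{c_3}$ candidates for $\bh_\star$, giving at most $2k\cdot 5^{c_3} \leq 2k\cdot 6^{c_3}$ neighbors. (The second family of inequalities in \eqref{eq:hold_neigh} can only reduce this count further.)

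\textbf{Main obstacle.} The only genuinely delicate point is verifying that the per-coordinate count $1 + \beta\Delta^{|\bs|}/\delta$ collapses to $O(\beta\Delta^{|\bs|}/\delta)$, i.e. that $\beta\Delta^{|\bs|}/\delta \gtrsim 1$ for all $|\bs| \leq \afloor$; this uses \eqref{eq:Delta-delta} together with $c_1 \leq \exp(k)$ and $\afloor \leq \alpha$, and one should note the estimate is stated as an $O(\cdot)$ with constants depending on $k$ and $\alpha$. Everything else is bookkeeping: expanding the product of powers using the definitions of $c_3$ and $c_4$, and the elementary observation that the triangular (in $|\bs|$) structure of \eqref{eq:hold_neigh} lets one solve for $\bh_\star$ coordinate by coordinate.
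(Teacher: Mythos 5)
Your proof is correct and follows essentially the same route as the paper's: count $\Delta^{-k}$ choices for $\bm$ times $O(\beta\delta_{|\bs|}^{-1})$ choices per coordinate of $\bh$, and bound the degree by $2k$ grid directions times a bounded number of choices per coordinate of $\bh_\star$ from \eqref{eq:hold_neigh}. One tiny caveat: for $|\bs| < \afloor$ the centering quantity $\sum_t \frac{\xi^t}{t!} h^{(\bs+t\be_i)}$ need not be an integer, so an open interval of length $6$ around it can contain up to $6$ (not $5$) integers — which is exactly why the paper states $6^{c_3}$ — but this does not affect your final bound.
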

\begin{proof}
The proof of Lemma \ref{lem:hold-count} is in the Appendix.
\end{proof}

\begin{example}{($k=1$, $\alpha \in (1,2]$)}
$\bbG_{\delta}^{1}(\alpha,\beta)$ has $O(\beta \delta^{-2})$ nodes, each with degree at most 72.
\end{example}

To each vertex $(\bm,\bh)$ we associate a region around the graph of $p_{\bm,\bh}$ restricted to $I_\bm$, of thickness given by the error bound of Lemma  \ref{lem:local_approx}:
\begin{equation} \label{eq:R}
R(\bm,\bh) = \{(\bx,z) \in I_\bm \times [0,1]: |z-p_{\bm,\bh}(\bx)| \leq c_0 \delta\},
\end{equation}
where $c_0 = (2^{-\alpha} + c_2/2)$.  See Figure \ref{fig:1D-hold}.

\begin{example}{($k=1$, $\alpha \in (1,2]$)}
The regions are in this case parallelograms.
\end{example}
 
For a subset of nodes $\pi$, define $R(\pi) = \bigcup_{(\bm,\bh) \in \pi} R(\bm,\bh)$.

Let $\Pi_{\delta}^{k}(\alpha,\beta)$ denote the set of connected components of $\bbG_{\delta}^{k}(\alpha,\beta)$, homeomorphic to the square grid $\{1, \dots, \Delta^{-1}\}^k$.

\begin{theorem}
\label{th:graph_cover}
For each $f \in \Hold^{k}(\alpha,\beta)$, there is a connected component $\pi \in \Pi_{\delta}^{k}(\alpha,\beta)$ such that $\graph(f) \subset R(\pi)$.
\end{theorem}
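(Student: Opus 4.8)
The plan is to produce, for a given $f \in \Hold^k(\alpha,\beta)$, the canonical ``Taylor piece'' in each hypercube and show that together they form the connected component we want. Concretely, I would take
\[
\pi_f = \bigl\{\, (\bm, \bh(\bm,f)) \;:\; \bm \in \{1,\dots,\Delta^{-1}\}^k \,\bigr\},
\]
where $\bh(\bm,f) = (h^{(\bs)}(\bm,f) : |\bs| \le \afloor)$ is the coefficient vector already singled out before Lemma~\ref{lem:local_approx}, i.e.\ $h^{(\bs)}(\bm,f)$ is the nearest integer to $f^{(\bs)}(\bx_\bm)/\delta_{|\bs|}$. There are two things to establish: that $\graph(f) \subset R(\pi_f)$, and that $\pi_f$ is (the vertex set of) a connected component of $\bbG_\delta^k(\alpha,\beta)$ isomorphic, hence homeomorphic, to the grid $\{1,\dots,\Delta^{-1}\}^k$, so that $\pi_f \in \Pi_\delta^k(\alpha,\beta)$.

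First I would dispatch the containment, which is immediate from the local bound. By Lemma~\ref{lem:local_approx} and the normalization \eqref{eq:Delta-delta} ($\delta = c_1\beta\Delta^\alpha$), for every $\bm$ and every $\bx \in I_\bm$,
\[
|f(\bx) - p_{\bm,\bh(\bm,f)}(\bx)| \;\le\; (c_1/2^\alpha)\beta\Delta^\alpha + (c_2/2)\delta \;=\; (2^{-\alpha} + c_2/2)\,\delta \;=\; c_0\delta .
\]
By the definition \eqref{eq:R} of $R(\bm,\bh)$ this says exactly that $(\bx,f(\bx)) \in R(\bm,\bh(\bm,f))$ for all $\bx \in I_\bm$; since the hypercubes $I_\bm$ cover $[0,1]^k$, we get $\graph(f) \subset \bigcup_\bm R(\bm,\bh(\bm,f)) = R(\pi_f)$.

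The substantive step is connectivity: for adjacent hypercubes with $\bm_\star = \bm + \xi\be_i$, $\xi \in \{-1,+1\}$, I would verify the neighbor condition \eqref{eq:hold_neigh} for the two $\pi_f$-nodes over them. Fix $\bs$ with $|\bs| \le \afloor$ and write $h^{(\bs)} = h^{(\bs)}(\bm,f)$, $h_\star^{(\bs)} = h^{(\bs)}(\bm_\star,f)$. Since $\bx_{\bm_\star} = \bx_\bm + \xi\Delta\be_i$ and $|\xi\Delta| = \Delta$, Lemma~\ref{lem:taylor_ei} applied at $\bx_\bm$ with increment $\xi\Delta$ along $\be_i$ bounds $| f^{(\bs)}(\bx_{\bm_\star}) - \sum_{t \le \afloor - |\bs|} f^{(\bs + t\be_i)}(\bx_\bm)(\xi\Delta)^t/t!|$ by $c_1\beta\Delta^{\alpha-|\bs|}$. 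Using $\delta_{|\bs|+t} = \Delta^{-t}\delta_{|\bs|}$ and dividing through by $\delta_{|\bs|} = \Delta^{-|\bs|}\delta$, the left side becomes exactly the pre-rounding version of the first difference in \eqref{eq:hold_neigh}, while the right side becomes $c_1\beta\Delta^\alpha/\delta = 1$ by \eqref{eq:Delta-delta}. Replacing each of $f^{(\bs)}(\bx_{\bm_\star})/\delta_{|\bs|}$ and $f^{(\bs+t\be_i)}(\bx_\bm)/\delta_{|\bs|+t}$ by its nearest integer costs at most $\tfrac12$ each, the latter weighted by $1/t!$; since $\sum_{t\ge 0} 1/t! = e$, the triangle inequality gives
\[
\Bigl|\, h_\star^{(\bs)} - \sum_{t \le \afloor - |\bs|} \frac{\xi^t}{t!}\, h^{(\bs + t\be_i)} \,\Bigr| \;\le\; \tfrac12 + 1 + \tfrac{e}{2} \;<\; 3 ,
\]
and the second inequality of \eqref{eq:hold_neigh} follows symmetrically, applying Lemma~\ref{lem:taylor_ei} at $\bx_{\bm_\star}$ (using $\bm = \bm_\star + (-\xi)\be_i$). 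Hence every pair of $\pi_f$-nodes over adjacent hypercubes is joined by an edge; since a node of $\bbG_\delta^k(\alpha,\beta)$ has neighbors only over adjacent hypercubes, $\pi_f$ is connected and the hypercube projection $(\bm,\bh(\bm,f)) \mapsto \bm$ carries it isomorphically onto the grid graph. Combined with $\graph(f) \subset R(\pi_f)$, this proves the theorem.

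I expect two points to need care. The computational one is the constant in \eqref{eq:hold_neigh}: the bound just obtained is about $2.86$, so the threshold $3$ leaves essentially no slack, and one must be scrupulous with the rounding budget, the rescaling $\delta_{|\bs|+t} = \Delta^{-t}\delta_{|\bs|}$, the use of \eqref{eq:Delta-delta}, and the sign bookkeeping for $\xi = -1$ (reading the bound in Lemma~\ref{lem:taylor_ei} as $c_1\beta|\eta|^{\alpha-|\bs|}$). The bookkeeping one is confirming that $\pi_f$ genuinely qualifies as an element of $\Pi_\delta^k(\alpha,\beta)$: a priori it is only a grid-isomorphic connected \emph{subgraph}, so one must check either that it is a maximal component or that $\Pi_\delta^k(\alpha,\beta)$ is meant to collect such subgraphs — in any event $R(\pi) \supseteq R(\pi_f) \supseteq \graph(f)$ for the component $\pi \supseteq \pi_f$. (A harmless side issue: nearest-integer rounding can push $|h^{(\bs)}(\bm,f)\,\delta_{|\bs|}|$ a hair past $\beta$ near the extremes permitted by \eqref{eq:deriv_bound}, which is absorbed by allowing the coefficient range in \eqref{eq:local_poly} an $O(\delta_{|\bs|})$ margin.)
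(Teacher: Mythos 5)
Your proposal is correct and follows essentially the same route as the paper's own proof: the containment is dispatched via Lemma \ref{lem:local_approx} together with (\ref{eq:Delta-delta}), and the neighbor condition (\ref{eq:hold_neigh}) is verified by splitting into the two rounding errors (costing $\tfrac12$ and $\tfrac{e}{2}$) plus the Taylor-discrepancy term bounded by $1$ via Lemma \ref{lem:taylor_ei} and the rescaling $\Delta^t/\delta_{|\bs|} = 1/\delta_{|\bs|+t}$, giving the same total $\tfrac12 + 1 + \tfrac{e}{2} < 3$. Your closing remarks on the maximal-component bookkeeping and the boundary rounding are minor points the paper glosses over, and you resolve them correctly.
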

\begin{proof}
The proof of Theorem \ref{th:graph_cover} is in the Appendix.
\end{proof}

\begin{figure}[htbp]
  \centering
\includegraphics[height = 2.5in]{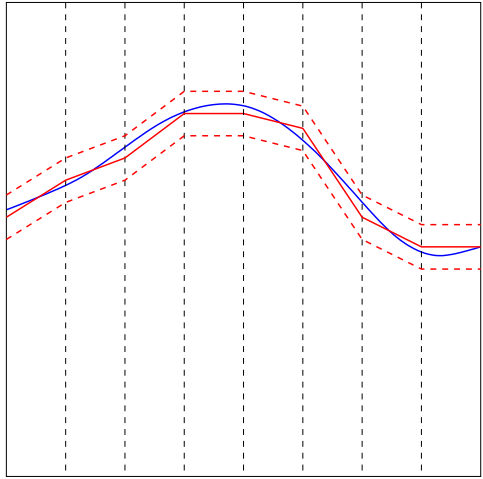}
  \caption{Example of covering of an H\"older graph (blue) with polynomial (linear) pieces in good continuation (red).}
  \label{fig:1D-hold-covering}
\end{figure}

Theorem \ref{th:graph_cover} implies that the system of functions
$$\sum_{(\bm, \bh) \in \pi} \bone_{\{\bx \in I_\bm\}} \ p_{\bm, \bh},  \quad \pi \in \Pi_{\delta}^{k}(\alpha,\beta),$$ 
is an $\eps$-net for $\Hold^{k}(\alpha,\beta)$ with $\eps = c_0 \delta$.
From Lemma \ref{lem:hold-count} and Theorem \ref{th:graph_cover}, we see that the system has entropy of order $O(\eps^{-k/\alpha})$, which is essentially the smallest possible \cite{MR0124720}.

\subsection{Approximating networks for H\"older immersions}
\label{sec:hold-im}

Define $\Hold^{k,d}(\alpha,\beta)$ as the class of functions $f:[0,1]^k \goto [0,1]^d$ with coordinates in $\Hold^k(\alpha,\beta)$, i.e. $f = (f_1,\dots,f_d)$ with $f_r \in
\Hold^k(\alpha,\beta)$ for $r = 1,\dots,d$.

For a function $f:[0,1]^k \to [0,1]^d$, define its image as
$$\im(f) = \{f(\bx): \bx \in [0,1]^k\}.$$
An approximation network for images of H\"older functions is simply built out of a tensor product of copies of the network built in the previous section.
Specifically, define the network $\bbG_\delta^{k,d}(\alpha,\beta)$, with nodes of the form $(\bm,\bh_1, \dots, \bh_d)$, indexing the multivariate polynomial $(p_{\bm,\bh_1}, \dots, p_{\bm,\bh_d})$; and edges between nodes indexing polynomial pieces on adjacent hypercubes and satisfying (\ref{eq:hold_neigh}) coordinate-wise.

\begin{example}{($k=1$, $\alpha \in (1,2]$)}
The polynomial pieces are of the form:
$$(h_1^{(0)} \delta + h_1^{(1)} \delta_1 (x - x_m), h_2^{(0)} \delta + h_2^{(1)} \delta_1 (x - x_m)), \quad |x - x_m| \leq \Delta/2.$$
\end{example}

\begin{lemma}
\label{lem:hold-count-im}
$\bbG_{\delta}^{k,d}(\alpha,\beta)$ has $O(\beta^{dc_3} \delta^{-dc_3} \Delta^{dc_4-k}) = O(\beta^{d c_3-(dc_4-k)/\alpha} \delta^{-dc_3 + (dc_4-k)/\alpha})$ nodes and each node has at most $2k 6^{dc_3}$ neighbors.
\end{lemma}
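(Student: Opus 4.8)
The plan is to reduce the count to Lemma \ref{lem:hold-count} by viewing the immersion network $\bbG_\delta^{k,d}(\alpha,\beta)$ as a fibered product over the hypercube index $\bm$ of $d$ independent copies of the graph network $\bbG_\delta^k(\alpha,\beta)$. First I would recall that a node of $\bbG_\delta^{k,d}(\alpha,\beta)$ has the form $(\bm,\bh_1,\dots,\bh_d)$, where for each $r$ the pair $(\bm,\bh_r)$ is a node of $\bbG_\delta^k(\alpha,\beta)$ sitting over the same hypercube $I_\bm$. From the proof of Lemma \ref{lem:hold-count} (in the Appendix), the number of admissible coefficient vectors $\bh$ over a fixed $\bm$ is $O(\beta^{c_3}\delta^{-c_3})$; since the constraints on $\bh_1,\dots,\bh_d$ are imposed coordinate-wise and independently, the number of admissible $(\bh_1,\dots,\bh_d)$ over a fixed $\bm$ is the $d$-th power of that, i.e. $O(\beta^{dc_3}\delta^{-dc_3})$. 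Multiplying by the number of hypercubes, which one reads off from the same proof as $O(\Delta^{dc_4 - k}/\Delta^{(d-1)c_4})$ — more directly, the bound in Lemma \ref{lem:hold-count} is $O(\beta^{c_3}\delta^{-c_3}\Delta^{c_4-k})$, of which the factor $\Delta^{c_4}$ is exactly the per-hypercube count $\beta^{c_3}\delta^{-c_3}$'s dependence absorbed so that $\Delta^{-k}$ is the raw hypercube count — one obtains the total $O(\beta^{dc_3}\delta^{-dc_3}\Delta^{dc_4-k})$. The second expression follows by substituting $\delta = c_1\beta\Delta^\alpha$ from (\ref{eq:Delta-delta}), exactly as in Lemma \ref{lem:hold-count}, replacing $\Delta^{dc_4-k}$ by $(\delta/(c_1\beta))^{(dc_4-k)/\alpha}$ and collecting powers of $\beta$ and $\delta$.

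Next I would bound the degree. Fix a node $(\bm,\bh_1,\dots,\bh_d)$. A neighbor has the form $(\bm_\star,\bh_{1,\star},\dots,\bh_{d,\star})$ with $\bm_\star = \bm + \xi\be_i$, so there are at most $2k$ choices for the pair $(i,\xi)$ just as in the graph case. For each such choice and each coordinate $r$, the neighboring relation (\ref{eq:hold_neigh}) for $(\bm,\bh_r)$ and $(\bm_\star,\bh_{r,\star})$ constrains $\bh_{r,\star}$ exactly as in Lemma \ref{lem:hold-count}, so there are at most $6^{c_3}$ admissible $\bh_{r,\star}$ (the two-sided inequalities each restrict a linear combination of the $h_\star^{(\bs)}$ to an interval of length $6$, and solving triangularly in order of increasing $|\bs|$ yields at most $6$ choices per multi-index $\bs$ with $|\bs|\le\afloor$, hence $6^{c_3}$ since $c_3$ counts such multi-indices). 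Because the constraints decouple across coordinates, the number of admissible $(\bh_{1,\star},\dots,\bh_{d,\star})$ for a fixed $(i,\xi)$ is at most $(6^{c_3})^d = 6^{dc_3}$. Summing over the at most $2k$ directions gives the stated degree bound $2k\,6^{dc_3}$.

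The main obstacle is bookkeeping rather than mathematics: I must make sure the per-hypercube coefficient count $O(\beta^{c_3}\delta^{-c_3})$ and the hypercube count $\Delta^{-k}$ are separated cleanly in the proof of Lemma \ref{lem:hold-count} so that raising the former to the $d$-th power (and leaving the latter untouched) is legitimate — i.e. that the factor $\Delta^{c_4}$ appearing in Lemma \ref{lem:hold-count}'s bound genuinely comes from the derivative-range quantizations $\delta_s = \Delta^{-s}\delta$ inside the per-hypercube count and not from the spatial partition. Granting that (which is exactly how the Appendix proof must run), everything else is a routine tensor-product argument, and I would simply remark that the proof is ``identical to that of Lemma \ref{lem:hold-count}, applied coordinate-wise'' and defer the details to the Appendix alongside the proof of Lemma \ref{lem:hold-count}.
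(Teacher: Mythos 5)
Your proposal is correct and is exactly the argument the paper intends: the paper's own proof of Lemma \ref{lem:hold-count-im} just says it ``follows that of Lemma \ref{lem:hold-count},'' and your tensor-product bookkeeping (per-hypercube coefficient count $(2\beta)^{c_3}\delta^{-c_3}\Delta^{c_4}$ raised to the $d$-th power, times $\Delta^{-k}$ hypercubes; $2k$ choices of adjacent cell times $6^{c_3}$ admissible $\bh_{r,\star}$ per coordinate) is precisely how that reduction runs. Your one worry is also resolved the way you hoped: in the Appendix proof of Lemma \ref{lem:hold-count} the factor $\Delta^{c_4}$ does come from the quantizations $\delta_{|\bs|}=\Delta^{-|\bs|}\delta$ inside the per-cell count, cleanly separated from the spatial factor $\Delta^{-k}$.
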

\begin{proof}
The proof follows that of Lemma \ref{lem:hold-count}.
Details are omitted.
\end{proof}

\begin{example}{($k=1$, $\alpha \in (1,2]$)}
$\bbG_{\delta}^{1,2}(\alpha,\beta)$ has $O(\beta^{2-1/\alpha}\delta^{-4+1/\alpha})$ nodes, each with degree at most 2592.
\end{example}

As before, to each vertex $(\bm,\bh_1, \dots, \bh_d)$ we associate a region around the image of $(p_{\bm,\bh_1}, \dots, p_{\bm,\bh_d})$ restricted to $I_\bm$, of thickness given by the error bound of Lemma  \ref{lem:local_approx}:
$$R (\bm,\bh_1, \dots, \bh_d) = \{\bz \in [0,1]^d: \exists \bx \in [0,1]^k, \forall r=1,\dots,d, \ |z_r - p_{\bm,\bh_r}(\bx)| \leq c_0 \delta \}.$$

Let $\Pi_{\delta}^{k,d}(\alpha,\beta)$ denote the set of connected components of $\bbG_{\delta}^{k,d}(\alpha,\beta)$, homeomorphic to the square grid $\{1, \dots, \Delta^{-1}\}^k$.

\begin{theorem}
\label{th:im_cover}
For each $f \in \Hold^{k,d}(\alpha,\beta)$, there is a connected component $\pi \in \Pi_{\delta}^{k,d}(\alpha,\beta)$ such that $\im(f) \subset R(\pi)$.
\end{theorem}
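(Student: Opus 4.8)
The plan is to mimic the proof of Theorem \ref{th:graph_cover}, exploiting that $\bbG_\delta^{k,d}(\alpha,\beta)$ is a tensor product of $d$ copies of $\bbG_\delta^{k}(\alpha,\beta)$: a node is a tuple $(\bm,\bh_1,\dots,\bh_d)$ and two nodes are adjacent iff their hypercubes are adjacent \emph{and} (\ref{eq:hold_neigh}) holds in each coordinate $r$ separately. Because of this product structure, both the connectivity of the candidate component and its being homeomorphic to the grid can be checked one coordinate at a time, which reduces everything to the $d=1$ statement already in hand.

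Concretely, given $f=(f_1,\dots,f_d)\in\Hold^{k,d}(\alpha,\beta)$, I would first take, for each coordinate $r$ and each hypercube index $\bm$, the integer tuple $\bh(\bm,f_r)=(h^{(\bs)}(\bm,f_r):|\bs|\le\afloor)$ whose entries are the nearest integers to $f_r^{(\bs)}(\bx_\bm)/\delta_{|\bs|}$, and set $\pi=\{(\bm,\bh(\bm,f_1),\dots,\bh(\bm,f_d)):\bm\in\{1,\dots,\Delta^{-1}\}^k\}$. Next I would invoke the key step in the proof of Theorem \ref{th:graph_cover}, namely the discrete form of Lemma \ref{lem:taylor_ei}: for each fixed $r$ and each pair of adjacent hypercubes $\bm$ and $\bm+\xi\be_i$, the tuples $\bh(\bm,f_r)$ and $\bh(\bm+\xi\be_i,f_r)$ satisfy (\ref{eq:hold_neigh}). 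Since this holds simultaneously for all $r$, the corresponding two nodes of $\pi$ are adjacent in $\bbG_\delta^{k,d}(\alpha,\beta)$; conversely, adjacency in the network forces $\bm_\star=\bm+\xi\be_i$, and within $\pi$ there is exactly one node per index $\bm$, so $\bm\mapsto(\bm,\bh(\bm,f_1),\dots,\bh(\bm,f_d))$ is a graph isomorphism of the grid $\{1,\dots,\Delta^{-1}\}^k$ onto the subgraph induced by $\pi$. This is precisely what places $\pi$ in $\Pi_\delta^{k,d}(\alpha,\beta)$, by the same reasoning that places the analogous set in $\Pi_\delta^{k}(\alpha,\beta)$ in Theorem \ref{th:graph_cover}.

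For the covering $\im(f)\subset R(\pi)$ I would argue pointwise. Fix $\bx\in[0,1]^k$ and choose $\bm$ with $\bx\in I_\bm$. Applying Lemma \ref{lem:local_approx} to each $f_r$ and using the calibration (\ref{eq:Delta-delta}), $\delta=c_1\beta\Delta^\alpha$, to rewrite its bound as $(c_1/2^\alpha)\beta\Delta^\alpha+(c_2/2)\delta=(2^{-\alpha}+c_2/2)\delta=c_0\delta$, I obtain
$$|f_r(\bx)-p_{\bm,\bh(\bm,f_r)}(\bx)|\le c_0\delta\qquad\text{for }r=1,\dots,d.$$
Hence the point $\bz=f(\bx)$ has $|z_r-p_{\bm,\bh(\bm,f_r)}(\bx)|\le c_0\delta$ for every $r$, which by the definition of the region means $\bz\in R(\bm,\bh(\bm,f_1),\dots,\bh(\bm,f_d))\subset R(\pi)$; as $\bx$ was arbitrary, $\im(f)\subset R(\pi)$.

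I do not expect a real obstacle: the statement is essentially the $d$-fold repetition of Theorem \ref{th:graph_cover}, and the only genuinely new ingredient is the observation that the tensor-product definition of $\bbG_\delta^{k,d}(\alpha,\beta)$ makes adjacency — hence connectivity and the grid structure — separable over coordinates. The one point to handle with a little care is that the region $R(\bm,\bh_1,\dots,\bh_d)$ is defined through the \emph{existence} of a single $\bx\in[0,1]^k$ that is simultaneously good for all $d$ coordinates; in the argument above that common $\bx$ is simply the preimage under $f$ of the given point of $\im(f)$, so the quantifier causes no difficulty.
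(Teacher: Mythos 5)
Your proposal is correct and follows essentially the same route as the paper, which simply states that the proof of Theorem \ref{th:im_cover} follows that of Theorem \ref{th:graph_cover} applied coordinate-wise through the tensor-product structure of $\bbG_\delta^{k,d}(\alpha,\beta)$. Your observation about the existential quantifier in the definition of $R(\bm,\bh_1,\dots,\bh_d)$ being witnessed by the common preimage $\bx$ is exactly the point that makes the coordinate-wise reduction work.
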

\begin{proof}
The proof follows that of Theorem \ref{th:graph_cover}.
Details are omitted.
\end{proof}

\subsection{Networks organized by scale}
\label{sec:hold-scale}

In practice, the parameters $\alpha$ and $\beta$ are often unknown, so that it becomes necessary and/or useful to look through a discrete set.
We introduce a scale parameter and another parameter indexing the approximation order, and organize the graphs accordingly.
Fix $J \in \bbN$ as the maximum scale and a sequence $a_J \to \infty$ as $J \to \infty$. 
At scale $j \in \{0,\dots,J\}$ and approximation order $\iota \in \bbN$, let $\bbG_{j,J}^{k}(\iota)$ (resp. $\bbG_{j,J}^{k,d}(\iota)$) be defined as $\bbG_\delta^{k}(\alpha, \beta)$ (resp. $\bbG_\delta^{k,d}(\alpha, \beta)$) with $\afloor = \iota$, $\Delta = 2^{-j}$, $\delta= 2^{j-J}$ and $|h^{(\bs)} \delta_{|\bs|}| \leq a_J^{|\bs|+1}$.

We have the following corollary of Theorems \ref{th:graph_cover} and \ref{th:im_cover}.

\begin{proposition}
\label{prop:discrete_cover}
For $\alpha, \beta > 0$, let $j = j(\alpha,\beta) = \lceil \frac{J + \log_2 (c_1\beta)}{1 + \alpha} \rceil$.
Assume $J$ is large enough that $j(\alpha,\beta) \leq J$ and $a_J \geq \beta$.
Also, take $\iota = \afloor$.
Then the covering result of Theorem \ref{th:graph_cover}  (resp. Theorem \ref{th:im_cover}) applies with $\bbG_{j,J}^{k}(\iota)$ (resp. $\bbG_{j,J}^{k,d}(\iota)$).
\end{proposition}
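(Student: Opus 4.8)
The plan is to verify that the parameter choices specified in the proposition land inside the hypotheses of Theorem~\ref{th:graph_cover} (resp. Theorem~\ref{th:im_cover}), so that those theorems apply verbatim. The network $\bbG_{j,J}^{k}(\iota)$ is by definition $\bbG_\delta^{k}(\alpha,\beta)$ with $\afloor = \iota$, $\Delta = 2^{-j}$, $\delta = 2^{j-J}$ and the box constraint $|h^{(\bs)}\delta_{|\bs|}| \leq a_J^{|\bs|+1}$. Theorem~\ref{th:graph_cover} was proved under the standing assumption~(\ref{eq:Delta-delta}), namely $\delta = c_1\beta\Delta^\alpha$, together with the original coefficient constraint $|h^{(\bs)}\delta_{|\bs|}| \leq \beta$ used in~(\ref{eq:local_poly}). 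So there are three things to check: first, that taking $\iota = \afloor$ makes the degree-$\afloor$ machinery (Lemmas~\ref{lem:taylor}, \ref{lem:local_approx}, \ref{lem:taylor_ei}) the relevant one; second, that the scale choice $j = \lceil \frac{J + \log_2(c_1\beta)}{1+\alpha} \rceil$ forces $\delta = 2^{j-J}$ to satisfy (a suitable form of) the relation~(\ref{eq:Delta-delta}) with $\Delta = 2^{-j}$; and third, that the box constraint $a_J^{|\bs|+1}$ is permissive enough to contain every coefficient vector $\bh(\bm,f)$ arising from an $f \in \Hold^k(\alpha,\beta)$.

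The first point is immediate since $\iota = \afloor$ is exactly the stipulation made. For the second point, note that $\Delta = 2^{-j}$ and $\delta = 2^{j-J}$ give $\delta / \Delta^\alpha = 2^{j-J} 2^{j\alpha} = 2^{(1+\alpha)j - J}$. The defining choice $j = \lceil \frac{J + \log_2(c_1\beta)}{1+\alpha}\rceil$ means $(1+\alpha)j \geq J + \log_2(c_1\beta)$, hence $2^{(1+\alpha)j - J} \geq c_1\beta$, i.e. $\delta \geq c_1\beta\Delta^\alpha$. The key observation here is that Theorem~\ref{th:graph_cover} only needs the region half-width $c_0\delta$ of~(\ref{eq:R}) to dominate the local approximation error of Lemma~\ref{lem:local_approx}, which is $(c_1/2^\alpha)\beta\Delta^\alpha + (c_2/2)\delta$; with $c_0 = 2^{-\alpha} + c_2/2$, the inequality $\delta \geq c_1\beta\Delta^\alpha$ suffices to make $(c_1/2^\alpha)\beta\Delta^\alpha + (c_2/2)\delta \leq c_0\delta$, so the covering argument goes through with $\delta$ in place of the exact value $c_1\beta\Delta^\alpha$. (If one prefers the exact equality~(\ref{eq:Delta-delta}), the ceiling introduces at most a bounded multiplicative slack in $\delta$, which only enlarges the regions $R(\bm,\bh)$ and hence preserves the inclusion $\graph(f) \subset R(\pi)$.) The same computation works coordinatewise for the immersion network, yielding Theorem~\ref{th:im_cover}.

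For the third point, I would bound the coefficients $h^{(\bs)}(\bm,f)$, the nearest integers to $f^{(\bs)}(\bx_\bm)/\delta_{|\bs|}$. Since $f \in \Hold^k(\alpha,\beta)$, estimate~(\ref{eq:deriv_bound}) gives $|f^{(\bs)}(\bx_\bm)| \leq \beta$, so $|h^{(\bs)}(\bm,f)\, \delta_{|\bs|}| \leq \beta + \delta_{|\bs|}/2$ (accounting for the rounding). Because $\delta_{|\bs|} = \Delta^{-|\bs|}\delta$ and $\Delta \leq 1$, and using the hypothesis $a_J \geq \beta$ together with $a_J \to \infty$, for $J$ large enough we get $\beta + \delta_{|\bs|}/2 \leq a_J^{|\bs|+1}$ for every $\bs$ with $|\bs| \leq \afloor$; this is where the superlinear growth of the box bound $a_J^{|\bs|+1}$ in $|\bs|$ is used to absorb the $\Delta^{-|\bs|}$ factor. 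Hence the connected component $\pi$ produced by Theorem~\ref{th:graph_cover} for the network $\bbG_\delta^k(\alpha,\beta)$ actually lies inside $\bbG_{j,J}^k(\iota)$, and the conclusion $\graph(f) \subset R(\pi)$ (resp. $\im(f) \subset R(\pi)$) transfers. The main obstacle — though a mild one — is bookkeeping the interaction between the ceiling in the definition of $j$, the inequality $\delta \geq c_1\beta\Delta^\alpha$ rather than equality, and making the "$J$ large enough" quantifier uniform enough that both $j(\alpha,\beta)\leq J$ and the coefficient containment hold simultaneously; everything else is substitution into results already proved.
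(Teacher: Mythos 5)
Your proposal is correct in structure and conclusion, and it supplies exactly the verification that the paper omits: the paper's own ``proof'' is the single sentence that the proposition is a simple corollary of Theorems \ref{th:graph_cover} and \ref{th:im_cover}. Two of your three checks are sound. In particular, your observation that only the one-sided inequality $\delta \geq c_1\beta\Delta^\alpha$ is needed is right, and it is used in \emph{two} places, only one of which you name: besides making $c_0\delta$ dominate the error of Lemma \ref{lem:local_approx}, it is what keeps the middle term $c_1\beta\Delta^\alpha/\delta$ bounded by $1$ in the good-continuation verification inside the proof of Theorem \ref{th:graph_cover}; fortunately the inequality points the right way for both. The one step whose justification does not hold up is the coefficient containment. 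The mechanism you invoke --- that the growth of $a_J^{|\bs|+1}$ in $|\bs|$ ``absorbs'' the factor $\Delta^{-|\bs|} = 2^{j|\bs|}$ --- would require $a_J$ to be at least of order $2^{j}$, which is not assumed: $a_J$ may tend to infinity arbitrarily slowly. The containment is nevertheless true, but for a different reason, namely the choice of $j$: since $|\bs| \leq \afloor \leq \alpha$ and $(1+\alpha)j \leq J + \log_2(c_1\beta) + (1+\alpha)$, one has
$$\delta_{|\bs|} \;=\; 2^{\,j(|\bs|+1)-J} \;\leq\; 2^{\,(1+\alpha)j - J} \;\leq\; 2^{1+\alpha} c_1 \beta,$$
a constant, so $\beta + \delta_{|\bs|}/2 \leq (1 + 2^{\alpha}c_1)\beta$, which is eventually below $a_J \leq a_J^{|\bs|+1}$ because $a_J \to \infty$. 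With that substitution your argument is complete and matches the route the paper intends.
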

\begin{proof}
This is a simple corollary of Theorems \ref{th:graph_cover} and \ref{th:im_cover}.
\end{proof}

\subsubsection{A single multiscale network}
\label{sec:hold-multiscale} 

The networks $\bbG_{j,J}^{k}(\iota)$, $j = 1,\dots,J$, $\iota \in \bbN$, constitute a family of monoscale networks.
Instead, one may want to mix scales together (and possibly mix approximation orders too, though not done here) so as to better approximate functions with varying smoothness, akin to how Besov functions are decomposed into a sum of wavelets at various scales \cite{MR1228209}.
Specifically, consider $\cF^{k}(\iota)$ to be the class of $\iota$-times continuously differentiable functions of the form $f = \sum_{p \in P} f_p \chi_p$, where $P$ is a finite partition of $[0,1]^k$ into regions with boundaries of finite length and $f_p \in \cH^{k}(\alpha_p, \beta_p)$ with $\alpha_p \in (\iota, \iota+1]$ and $\beta_p > 0$.
For such a function class, we may want to involve a number of scales, each adapted to a different smoothness degree.

In particular, such a multiscale approximating network may be built out of the union of $\bbG_{j,J}^{k}(\iota)$, $j = 1,\dots,J$, with additional edges between nodes at different scales.
The neighboring condition across scales may be chosen to be identical to that defined in Section \ref{sec:hold-graph}, namely that $(\bm,\bh) \in \bbG_{j,J}^{k}$ and $(\bm_\star,\bh_\star)  \in \bbG_{j_\star, J}^{k}$ are neighbors if $I_{\bm}$ and $I_{\bm_\star}$ are adjacent, and if $p_{\bm, \bh}$ and $p_{\bm_\star, \bh_\star}$ together with their derivatives assume nearby values both at $\bx_\bm$ and $\bx_{\bm_\star}$.
(The condition translates into a precise statement involving $\bh$ and $\bh_\star$ akin to (\ref{eq:hold_neigh}), yet more cumbersome.  We omit details.)
Let $\bbG_J^k(\iota)$ denote this multiscale network.

Given a function $f \in \cF^{k}(\iota)$, $f = \sum_{p \in P} f_p \chi_p$, we use a recursive dyadic partitioning (RDP), the cornerstone of many multiscale algorithms \cite{MR1614527, 2D-beamlets, wedgelets, MR2128287}, to approximate the partition $P$.
Specifically, we start at $j=0$ and then recursively subdivide each dyadic hypercube $S$ at scale $j$ until $j \geq j(\alpha_p, \beta_p)$ for all $p \in P$ with $|p \cap S| > 0$.
We then use a polynomial piece from $\bbG_{j,J}^{k}(\iota)$ within each RDP cell at scale $j$.
See Figure \ref{fig:RDP-approx}.

For a subset of nodes $\pi$ in $\bbG_J^{k}(\iota)$, define $R(\pi) = \bigcup_{(\bm,\bh) \in \pi} R(\bm,\bh)$, where each region is defined with the appropriate scale.

\begin{proposition}
\label{prop:RDP-approx}
For each function $f \in \cF^{k}(\iota)$, there is a connected component $\pi$ within $\bbG_J^{k}(\iota)$ in correspondence with an RDP such that $\graph(f) \subset R(\pi)$.
\end{proposition}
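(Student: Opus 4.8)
The plan is to reprove Theorem~\ref{th:graph_cover} once more, but carried out leaf by leaf along an RDP adapted to $f$ rather than on a single uniform grid. Throughout, for a dyadic cell $S$ of side $2^{-j}$ I write $\bm_S$ for the unique multi-index with $I_{\bm_S}=S$ in the scale-$j$ network $\bbG_{j,J}^k(\iota)$. Given $f=\sum_{p\in P}f_p\chi_p\in\cF^k(\iota)$, I would first build the RDP as prescribed: starting from $j=0$, recursively halve each dyadic cube, stopping at a cube $S$ of side $2^{-j}$ once $j\ge j(\alpha_p,\beta_p)$ for every $p\in P$ with $|p\cap S|>0$ (refining the cells that straddle several pieces a little further when needed, or simply down to the maximal scale $J$; see below). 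Since $P$ is finite and, under the standing hypothesis on $J$ as in Proposition~\ref{prop:discrete_cover}, $j(\alpha_p,\beta_p)\le J$ for all $p$, this recursion terminates and produces finitely many leaves tiling $[0,1]^k$.

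Next I would attach to each leaf $S$, of side $2^{-j_S}$ and associated value-quantization $\delta=2^{j_S-J}$, the node $(\bm_S,\bh_S)\in\bbG_{j_S,J}^k(\iota)$ whose coefficients $h_S^{(\bs)}$ are the nearest integers to $f^{(\bs)}(\bx_{\bm_S})/\delta_{|\bs|}$. The point is that $f|_S$ is uniformly H\"older: if $S$ lies in a single piece $p$ then $f|_S=f_p|_S\in\cH^k(\alpha_p,\beta_p)$ with $\lfloor\alpha_p\rfloor=\iota$, while if $S$ straddles several pieces then, because $f\in C^\iota$ glues the $f_p$ across $\partial P$, one checks along segments (splitting a segment at the points where it meets $\partial P$ and using continuity of the order-$\iota$ derivatives there) that $f|_S\in\cH^k(\alpha_S,\beta_S)$ with $\alpha_S=\min\{\alpha_p:p\cap S\ne\emptyset\}$ and $\beta_S$ a fixed multiple of $\max\{\beta_p:p\cap S\ne\emptyset\}$. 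In either case the stopping rule (with the mild extra refinement of straddling cells) gives $j_S\ge j(\alpha_S,\beta_S)$, which is exactly the inequality $c_1\beta_S(2^{-j_S})^{\alpha_S}\le\delta$; the node is admissible since $|f^{(\bs)}|\le a_J$; and Lemma~\ref{lem:local_approx} applied to $f|_S$ then yields $|f(\bx)-p_{\bm_S,\bh_S}(\bx)|\le c_0\delta$ for all $\bx\in S$, that is $\graph(f|_S)\subset R(\bm_S,\bh_S)$. Alternatively, a straddling cell may just be refined all the way to scale $J$, where $\delta=1$ and $R(\bm_S,\bh_S)$ is the whole column $S\times[0,1]$, so the inclusion holds trivially for any admissible $\bh_S$; near $\partial P$ this adds only $O(2^{J(k-1)}\times\text{length}(\partial P))$ further cells, which is finite.

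Finally I would set $\pi=\{(\bm_S,\bh_S):S\text{ a leaf of the RDP}\}$ and check that it is a connected subgraph of $\bbG_J^k(\iota)$ in correspondence with this RDP. For two leaves $S,S'$ whose closures are adjacent: if they share a scale then $\bm_{S'}=\bm_S+\xi\be_i$, and since $h_S^{(\bs)},h_{S'}^{(\bs)}$ are nearest-integer roundings of the genuine derivatives of $f$ at the two cell centres --- related by Lemma~\ref{lem:taylor_ei} when the centres lie in a common smooth piece, and by plain continuity of these derivatives across $\partial P$ otherwise --- the roundings satisfy the good-continuation inequalities~(\ref{eq:hold_neigh}); if $S,S'$ are at different scales, the same estimate is run against the analogous cross-scale neighbouring condition, the discrepancy between the quantization ladders $\{\delta_{|\bs|}\}$ at the two scales being absorbed into the slack of that condition. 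Since the leaves tile $[0,1]^k$ the subgraph $\pi$ is connected, it is in correspondence with the RDP, and $\graph(f)=\bigcup_S\graph(f|_S)\subset\bigcup_S R(\bm_S,\bh_S)=R(\pi)$.

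The main obstacle is the cells straddling several pieces of $P$: on them $f$ is only a $C^\iota$ patchwork of H\"older pieces, so one has to show that $f|_S$ is nonetheless uniformly H\"older with parameters the RDP has resolved (or fall back on the trivial covering at the finest scale), and one has to see that good continuation survives across $\partial P$, where only $C^\iota$-regularity rather than a uniform H\"older bound is available --- in particular this is why the stopping rule may need the small extra refinement of straddling cells. The within-piece part and the cross-scale bookkeeping are routine localizations of Lemmas~\ref{lem:local_approx} and~\ref{lem:taylor_ei}, exactly as in Theorem~\ref{th:graph_cover}.
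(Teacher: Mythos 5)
Your argument is correct and is essentially the intended one: the paper itself only says the result is ``essentially a corollary of Proposition~\ref{prop:discrete_cover}'' and omits all details, and your leaf-by-leaf application of Lemma~\ref{lem:local_approx} on the RDP cells, glued via Lemma~\ref{lem:taylor_ei} and the (cross-scale) condition~(\ref{eq:hold_neigh}), is exactly that corollary spelled out. You also correctly isolate the one point the paper glosses over --- cells straddling several pieces of $P$, where $f$ is only a $C^{\iota}$ patchwork, so the stated stopping rule $j \geq j(\alpha_p,\beta_p)$ for all intersecting $p$ does not by itself guarantee $c_1\beta_S\Delta^{\alpha_S}\leq\delta$ for the combined pair $(\alpha_S,\beta_S)$ --- and your two fixes (refine straddling cells until the combined pair is resolved, or push them to scale $J$ where $R(\bm_S,\bh_S)$ is the full column) both work.
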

\begin{proof}
This is essentially a corollary of Proposition \ref{prop:discrete_cover}.
Details are omitted.
\end{proof}

Note however that the typical degree of a node in $\bbG_J^{k}(\iota)$ increases with $J$, as a result of connecting nodes across scales.

\begin{figure}[htbp]
  \centering
    \includegraphics[height = 2.5in]{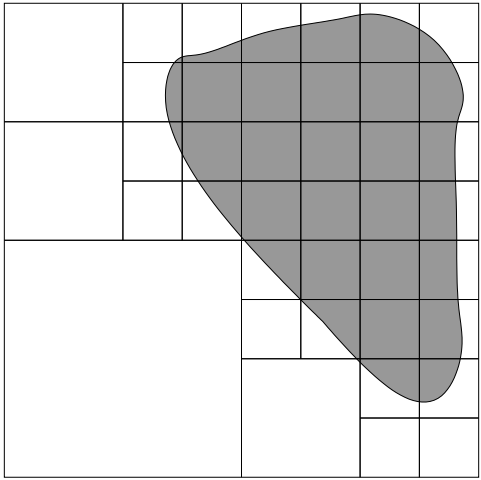}
  \caption{A partition into two regions $p_1$ (white) and $p_2$ (grey) associated with some function $f = f_{p_1} \chi_{p_1} + f_{p_2} \chi_{p_2}$, with corresponding scales $j(\alpha_{p_1}, \beta_{p_1}) = 1$ and $j(\alpha_{p_2}, \beta_{p_2}) = 3$.}
  \label{fig:RDP-approx}
\end{figure}

\section{Detection of Objects in Point Clouds and Images}
\label{sec:applications}

We consider a simple model of detection of objects in point clouds and design some algorithms based on the networks introduced in Section \ref{sec:hold}.
The objects are assumed to be of H\"older type.
Fix an H\"older class $\Hold^{k,d-k}(\alpha, \beta)$.
For $f \in \Hold^{k,d-k}(\alpha, \beta)$ and $\eta \geq 0$, define
$$\graph_\eta(f) = \{(\bx,\bz) \in [0,1]^d: \|\bz - f(\bx)\| \leq \eta\},$$
which is a region centered around the graph of $f$ and of thickness $2 \eta$.

Suppose we observe a point cloud $X_1, \dots, X_n \in [0,1]^d$, and want to decide between the following two hypotheses (generative models):
$$\begin{array}{rll}
H_0: &\quad X_1, \dots, X_n \sim^{\rm iid} &\Uniform[0,1]^d; \nonumber \\
H_1: &\quad X_1, \dots, X_n \sim^{\rm iid} & (1-\eps_n)\Uniform[0,1]^d + \eps_n \Uniform(\graph_\eta(f^*)),  \nonumber \\
& & \text{for some (unknown)} \ f^* \in \Hold^{k,d-k}(\alpha, \beta).
\end{array}$$
The same situation was considered in \cite{MSDFS, ery-thesis, CTD}.

For a (measurable) set $S \subset [0,1]^d$, let $N(S)$ denote the number of data points belonging to $S$, which under the null has the binomial distribution with parameters $n$ and $|S|_d$, the $d$-dimensional Lebesgue measure of $S$:
$$N(S) = \# \{i: X_i \in S\} \sim^{H_0} \Bin(n, |S|_d).$$

\subsection{Generalized Likelihood Ratio Test}

If $f^*$ were known, the most powerful test would be the likelihood ratio (i.e. Neyman-Pearson) test, which rejects for large values of $N(\graph_\eta(f^*))$.
The scan statistic is the maximum over those statistics:
$$M_\eta^{k,d}(\alpha, \beta) = \max_{f \in \Hold^{k,d}(\alpha, \beta)} N(\graph_\eta(f)).$$
The generalized likelihood ratio test (GLRT) rejects when $M_\eta^{k,d}(\alpha, \beta)$ is large.

Define 
$$\rho(k,d,\alpha) = \frac{k}{k + \alpha(d-k)}.$$

\begin{theorem}
\label{th:GLRT}
There are constants $A, B > 0$ not depending on $n$ such that
$$\pr{A (n^{\rho} \vee \eta^{d-k} n) \leq M_\eta^{k,d}(\alpha, \beta) \leq B (n^{\rho} \vee \eta^{d-k} n) | H_0} \to 1, \quad n \to \infty.$$ 
\end{theorem}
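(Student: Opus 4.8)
The plan is to control $M_\eta^{k,d}(\alpha,\beta)$ by a union bound over the approximating network from Section~\ref{sec:hold}, exploiting the fact (Theorem~\ref{th:graph_cover}) that every graph $\graph(f)$, and hence every tube $\graph_\eta(f)$ after a slight thickening, is contained in $R(\pi)$ for some connected component $\pi \in \Pi_\delta^{k,d-k}(\alpha,\beta)$. The strategy is essentially the one used in \cite{MSDFS,CTD}: replace the continuum supremum over the H\"older class by a maximum over a finite, structured family of sets, then use Bernstein/Chernoff bounds for binomials together with a counting bound (Lemma~\ref{lem:hold-count}, Lemma~\ref{lem:hold-count-im}) on the number of such sets.

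\textbf{Upper bound.} First I would fix the scale $\delta$ (equivalently $\Delta$ via~(\ref{eq:Delta-delta})) to balance the two regimes: choosing $\Delta \asymp n^{-1/(k+\alpha(d-k))}$ makes the Lebesgue measure of each piece's region $|R(\bm,\bh_1,\dots,\bh_{d-k})|_d \asymp \Delta^k \cdot \delta^{\,d-k} \asymp \Delta^{k+\alpha(d-k)} \asymp 1/n$, so that the expected count in a single region is $O(1)$; this is where the exponent $\rho(k,d,\alpha)=k/(k+\alpha(d-k))$ comes from. Since each connected component $\pi$ has exactly $\Delta^{-k}$ nodes, $|R(\pi)|_d = O(\Delta^{-k}/n) = O(n^{\rho-1})$, so under $H_0$, $\expect{N(R(\pi))} = O(n^\rho)$. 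A Bernstein bound gives $\pr{N(R(\pi)) \geq B n^\rho} \leq \exp(-c B n^\rho)$ for a suitable constant, and there are at most $(\#\text{nodes})$-many components, which by Lemma~\ref{lem:hold-count-im} is $\exp(O(\Delta^{c_4 \cdot(d-k)-k}))$-ish; one checks this is of order $\exp(O(n^{\gamma}))$ with $\gamma < \rho$, or more carefully that $\log(\#\text{components}) = O(n^\rho)$ as well (this is exactly the $\eps$-entropy estimate $O(\eps^{-k/\alpha})$ remarked after Theorem~\ref{th:graph_cover}), so the union bound survives with $B$ chosen large enough. The case $\eta^{d-k} n \geq n^\rho$, i.e. $\eta$ large, is handled separately: then the tube $\graph_\eta(f)$ has volume $\asymp \eta^{d-k}$ uniformly in $f$ (it always contains a product of a $k$-cube times an $\eta$-ball and is contained in a fixed dilate), and $N$ of it concentrates around $\eta^{d-k} n$ by the same Bernstein argument over the same finite family. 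Taking $B$ to be the max of the two constants gives the stated upper bound with probability $\to 1$.

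\textbf{Lower bound.} For the lower bound it suffices to exhibit, under $H_0$, a single set $S$ of the relevant form with $N(S)$ at least of order $n^\rho \vee \eta^{d-k} n$. For the $\eta^{d-k}n$ term: take any fixed $f$ (say $f\equiv 0$); then $|\graph_\eta(f)|_d \asymp \eta^{d-k}$ and $N(\graph_\eta(f)) \geq A\eta^{d-k}n$ with high probability by a lower Chernoff bound. For the $n^\rho$ term: this is the combinatorial heart of the argument. Here I would use a greedy/pigeonhole construction over the partition into $\Delta^{-k}$ hypercubes $I_\bm$ at scale $\Delta \asymp n^{-1/(k+\alpha(d-k))}$: in each column-cell one can, with high probability, find a polynomial piece $p_{\bm,\bh}$ (from the finitely many available) whose region $R(\bm,\bh)$ captures at least $\asymp \expect{N(\text{column cell})}$'s worth of points up to constants --- a ``maximal bucket'' argument. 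The subtlety is that these locally-chosen pieces must be \emph{chained in good continuation}, i.e. form a connected component $\pi\in\Pi_\delta$, so that they collectively correspond to an admissible $f^*\in\Hold^{k,d-k}(\alpha,\beta)$ and hence contribute to $M_\eta$. I would argue that the quantization is fine enough that any choice of per-cell derivative vectors that varies slowly (which the maximizing choices do, since the point density is spatially homogeneous under $H_0$) automatically satisfies~(\ref{eq:hold_neigh}); more robustly, one restricts the per-cell search to pieces extending a \emph{fixed} low-degree polynomial and shows that even this restricted family picks up $\asymp n^\rho$ points in expectation, with concentration. This yields $M_\eta \geq A n^\rho$ w.h.p.

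\textbf{Main obstacle.} The hard part will be the lower bound's chaining step --- ensuring the greedily/locally chosen polynomial pieces assemble into a genuine connected component of the network (equivalently, into a bona fide H\"older function $f^*$) while still capturing $\Theta(n^\rho)$ points. The upper bound is comparatively routine once the scale is tuned: it is a union bound over a set whose log-cardinality matches the Kolmogorov entropy $O(\eps^{-k/\alpha})$, combined with Bernstein's inequality. I would also need to double-check that thickening $\graph(f)$ by $\eta$ only inflates the region $R(\pi)$ by a controlled amount (replace $c_0\delta$ by $c_0\delta+\eta$ in~(\ref{eq:R})), which changes volumes by a factor $O(1+\eta/\delta)$ and is exactly what produces the ``$\vee\,\eta^{d-k}n$'' in the bound; this bookkeeping is straightforward but must be done carefully to see that the two regimes glue at $\eta \asymp \delta \asymp n^{-1/(k+\alpha(d-k))}$.
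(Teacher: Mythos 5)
Your upper bound is sound and is essentially the paper's route (the paper defers to Section~2.3 of \cite{ery-thesis}, but the mechanism is exactly what you describe and what is carried out in the appendix for Theorem~\ref{th:GLRT-approx}): tune $\Delta$ so that $\Delta^{k+\alpha(d-k)}\asymp 1/n$, cover the class by the net of Section~\ref{sec:hold}, note $\log|\Pi_\delta^{k,d-k}|=O(\Delta^{-k})=O(n^\rho)$, and beat it with Bernstein. The $\eta$-thickening bookkeeping you flag is also handled correctly by working at scale $\delta = n^{-\alpha/(k+\alpha(d-k))}\vee\eta$.

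The lower bound, however, has a genuine gap, and it is precisely at the step you identify as the ``combinatorial heart.'' Neither of your two proposed resolutions works. First, the claim that the per-cell maximizing pieces ``vary slowly, since the point density is spatially homogeneous under $H_0$'' is false: under $H_0$ the point configurations in disjoint columns are independent, so the per-cell optimal coefficient vectors $\bh$ are independent across cells and will generically violate the good-continuation constraint (\ref{eq:hold_neigh}); homogeneity of the \emph{distribution} does not make the \emph{argmaxes} of independent samples agree. Second, the ``more robust'' fallback of restricting to pieces extending a fixed low-degree polynomial collapses the supremum to (essentially) a single fixed $f$, and for a fixed $f$ one only gets $N(\graph_\eta(f))\asymp \eta^{d-k}n$, which is $o(n^\rho)$ in the interesting regime $\eta \le n^{-\alpha/(k+\alpha(d-k))}$; the whole point of the $n^\rho$ term is the excess of the supremum over any single member of the class. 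The paper's argument (``interpolation of carefully selected points'') avoids both problems by reversing the logic: one does not chase the per-cell optimum at all. Instead, restrict attention in each column $I_\bm$ to a thin slab of height $c\,\beta\Delta^\alpha$ around a fixed baseline; each such box has volume $\asymp \Delta^k(\beta\Delta^\alpha)^{d-k}\asymp 1/n$, so with high probability a constant fraction of the $\Delta^{-k}=n^\rho$ columns contain at least one data point in their slab. Because the slab height matches exactly the vertical excursion $\beta\Delta^\alpha$ that a $\Hold^{k,d-k}(\alpha,\beta)$ function can make over a cell of width $\Delta$ (realized by disjointly supported bump functions added to the baseline), \emph{any} such selection of one point per column can be interpolated by an admissible $f^*$, with no chaining condition to verify. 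This is the missing idea; without it your lower bound does not close.
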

\begin{proof}
The lower bound is obtained by interpolation of carefully selected points, while the upper bound is obtained using a precise-enough net for $\Hold^{k,d-k}(\alpha, \beta)$ of  near-optimal entropy (such as introduced in Section \ref{sec:hold}).
We refer the reader to Section 2.3 in \cite{ery-thesis} for more details.
\end{proof}

Since $M_\eta^{k,d}(\alpha, \beta) \geq N(\graph_\eta(f^*))$ and $\graph_\eta(f^*)$ contains at least an $\eps_n$ proportion of the point cloud (roughly), the GLRT asymptotically separates $H_0$ and $H_1$ if, for some fixed $B' > B$, $\eps_n \geq B' (n^{\rho} \vee \eta^{d-k} n)$, meaning that both the probabilities of false alarm (type I error) and missed detection (type II error) tend to zero as $n$ increases.

The GLRT as defined above is challenging, if not impossible to compute exactly.
We use instead an approximation based on the coverings constructed in Section \ref{sec:hold}, turning an optimization over a functional space into an optimization over paths in a network, for which a large number of algorithms have been developed.
See also \cite{kirsanov2004dgm,1238310}, where variational problems related to computing minimal surfaces are turned into combinatorial optimizations over paths and other structures within networks.

Recall the notation used in Section 2, and assume again that $\Delta$ and $\delta$ are related according to (\ref{eq:Delta-delta}).
We use the network $\bbG_\delta^{k,d-k}(\alpha,\beta)$ to build appropriate coverings, this time with slightly enlarged regions:
$$\oR(\bm, \bh_1, \dots, \bh_{d-k}) = \{(\bx,\bz): \bx \in I_\bm, \forall r=1, \dots, d-k, \ |z_r - p_{\bm,\bh_r}(\bx)| \leq (c_0 + 1) \delta\},$$
where $c_0 = (2^{-\alpha} + c_2/2)$ as in (\ref{eq:R}).
 
Fix a path $\{\bm_{\rm zz}(t): t=1, \dots, \Delta^{-k}\}$ in the square grid $\{1, \dots, \Delta^{-1}\}^k$ covering the whole grid in zig-zag fashion, and define $\cP_{\delta}^{k,d-k}(\alpha, \beta)$ as the set of paths in $\bbG_\delta^{k,d-k}(\alpha,\beta)$ of the form 
$$\{(\bm_{\rm zz}(t),\bh_1(t), \dots, \bh_{d-k}(t)): t=1, \dots, \Delta^{-k}\}.$$
Also, recall the definition of $\Pi_\delta^{k,d-k}(\alpha, \beta)$ in Section \ref{sec:hold}.
Now consider the following alternative statistics:
$$M_{\Pi,\delta}^{k,d}(\alpha, \beta) = \max_{\pi \in \Pi_{\delta}^{k,d-k}(\alpha,\beta)} N(\oR(\pi)), \qquad M_{\cP,\delta}^{k,d}(\alpha, \beta) = \max_{P \in \cP_\delta^{k,d-k}(\alpha, \beta)} N(\oR(P)).$$
In that case, 
$$M_\eta^{k,d}(\alpha, \beta) \leq M_{\Pi,\delta}^{k,d}(\alpha, \beta) \leq M_{\cP,\delta}^{k,d}(\alpha, \beta).$$
The first inequality comes from Theorem \ref{th:graph_cover} and the triangle inequality; the second from the fact that $\Pi_\delta^{k,d-k}(\alpha, \beta) \subset \cP_\delta^{k,d-k}(\alpha, \beta)$.
Actually, by Theorem \ref{th:GLRT-approx} below, with a proper choice for $\delta$ all three are of same order of magnitude with high probability.

\begin{theorem}
\label{th:GLRT-approx}
With $\delta = n^{-\alpha/(k+\alpha(d-k))} \vee \eta$, there is a constant $C = C(k,d,\alpha,\beta)$ such that
$$\pr{M_{\cP,\delta}^{k,d}(\alpha, \beta) \leq C (n^{\rho} \vee \eta^{d-k} n)| H_0} \to 1, \quad n \to \infty.$$
\end{theorem}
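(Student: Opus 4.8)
The plan is to control $M_{\cP,\delta}^{k,d}(\alpha,\beta)$ by a union bound over all paths $P \in \cP_\delta^{k,d-k}(\alpha,\beta)$, using the fact that each $N(\oR(P))$ is, under $H_0$, a $\Bin(n,|\oR(P)|_d)$ random variable, together with a Bernstein/Chernoff tail bound and a counting estimate for the number of such paths. First I would estimate $|\oR(P)|_d$ for a fixed path: since $\oR$ thickens the graph of a piecewise polynomial by $(c_0+1)\delta$ in each of the $d-k$ transverse coordinates, and $\oR(P)$ projects onto all of $[0,1]^k$ (as $P$ follows the zig-zag covering $\bm_{\rm zz}$ of the whole grid), we get $|\oR(P)|_d = O(\delta^{d-k})$ uniformly in $P$. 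Hence under $H_0$ each $N(\oR(P))$ has mean $\mu = O(n\delta^{d-k})$.

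Next I would count paths. A path in $\cP_\delta^{k,d-k}(\alpha,\beta)$ is determined by the coefficient vectors $(\bh_1(t),\dots,\bh_{d-k}(t))$ along the fixed grid traversal $\bm_{\rm zz}$; by Lemma~\ref{lem:hold-count-im} the first node can be chosen in $O(\beta^{dc_3}\delta^{-dc_3}\Delta^{dc_4 - k})$ ways and each subsequent step in at most $2k\,6^{dc_3}$ ways (bounded node degree), over $\Delta^{-k}$ steps, so
$$
|\cP_\delta^{k,d-k}(\alpha,\beta)| \;\leq\; \left(\tfrac{C'}{\delta}\right)^{dc_3} \Delta^{dc_4-k}\, \exp\!\left(C'' \Delta^{-k}\right) \;=\; \exp\!\left(O(\delta^{-k/\alpha})\right),
$$
using $\Delta = (\delta/(c_1\beta))^{1/\alpha}$ from (\ref{eq:Delta-delta}), so that $\log|\cP_\delta^{k,d-k}(\alpha,\beta)| = O(\Delta^{-k}) = O(\delta^{-k/\alpha})$, the polynomial prefactors being absorbed.

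The heart of the argument is then a Chernoff bound: for a threshold $t = C(n^\rho \vee \eta^{d-k}n)$ with $C$ large, $\pr{N(\oR(P)) \geq t} \leq \exp(-t\log(t/(e\mu)))$ (the standard Poisson/binomial upper-tail bound valid when $t \gg \mu$). With the prescribed $\delta = n^{-\alpha/(k+\alpha(d-k))} \vee \eta$, one checks that $\delta^{-k/\alpha} \le n^{\rho} \vee \eta^{d-k}n$ (indeed $\delta^{-k/\alpha} = n^{\rho} \wedge \eta^{-k/\alpha}$, and in the regime $\delta = \eta$ one has $\eta^{d-k}n \ge \eta^{d-k}\cdot\eta^{-k/\alpha - (d-k)} \cdot (\dots)$ — this numerical comparison is the one place to be careful), and that $\mu = O(n\delta^{d-k}) = O(n^\rho \vee \eta^{d-k}n)$ as well. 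Choosing $C$ large enough that $t\log(t/(e\mu)) \geq 2\log|\cP_\delta^{k,d-k}(\alpha,\beta)|$ makes the union bound $|\cP_\delta^{k,d-k}(\alpha,\beta)|\cdot\exp(-t\log(t/(e\mu)))$ tend to $0$, which gives the claim. The main obstacle I anticipate is precisely the bookkeeping in this last step: verifying, across both regimes $\delta = n^{-\alpha/(k+\alpha(d-k))}$ and $\delta = \eta$, that the exponential number of paths $\exp(O(\delta^{-k/\alpha}))$ is dominated by the tail decay $\exp(-c\,t\log(t/(e\mu)))$ for suitable $C$ — i.e. that the factor $\log(t/(e\mu))$ is bounded below by a positive constant (so that $t$ itself, which matches $\delta^{-k/\alpha}$ up to constants in the first regime, beats the log-cardinality), and that the $\eta$-dominated regime does not upset this. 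Everything else (the volume estimate, the degree/counting estimate from Lemma~\ref{lem:hold-count-im}, the Chernoff bound) is routine.
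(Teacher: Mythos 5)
Your proposal is correct and follows essentially the same route as the paper's proof: a union bound over the $\exp(O(\Delta^{-k})) = \exp(O(\delta^{-k/\alpha}))$ paths counted via Lemma \ref{lem:hold-count-im}, the exact volume computation $|\oR(P)|_d = (c_0+1)\delta^{d-k}$, a binomial upper-tail (Bernstein/Chernoff) bound giving decay $\exp(-a\,C\,\delta^{d-k}n)$, and the final check that $\delta^{-k/\alpha} \leq \delta^{d-k}n = n^{\rho}\vee \eta^{d-k}n$ under the prescribed choice of $\delta$ in both regimes. The one step you flag as delicate is exactly the comparison the paper also relies on, and your verification of it is sound.
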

\begin{proof}
The proof of Theorem \ref{th:GLRT-approx} is in the Appendix.
\end{proof}


Computing $M_{\cP,\delta}^{k,d}(\alpha, \beta)$ may be done efficiently using dynamic programming ideas, for example as implemented in \cite{MR2379113}; this is due to the fact that the paths in $\cP_\delta^{k,d-k}(\alpha, \beta)$ are oriented and with no loops.
Though $M_{\Pi,\delta}^{k,d}(\alpha, \beta)$ provides a better approximation to the scan statistic, we do not know of an efficient way to compute it directly.

The results above hold for H\"older immersions as well, though in that case the computations are much more challenging.
This comes from the fact that H\"older immersions may self-intersect, so that dynamic programming approaches do not apply.
In fact, if the optimization is over all paths of length $\Delta^{-k}$ instead, the setting is equivalent to the Budget-Reward Problem \cite{DasHesSon} (also called Bank Robber Problem), closely related to the Prize-Collecting Traveling Salesman problem \cite{prize-collecting}; though those problems are known to be NP-hard, there are polynomial-time approximations \cite{DasHesSon}.
Other approaches have been suggested in this situation, for example in \cite{2D-beamlets}, where ratios of additive criteria are used to recover chains of beamlets; or algorithms implemented to extract curves from saliency networks \cite{590008, 649458}.

\subsection{Longest Significant Run}

We propose an alternative approach based on the size of the longest path after discarding nodes with low counts within their associated region, which in effect generalizes the algorithm introduced in \cite{MSDFS}.

For a threshold $\tau > 0$, define 
$$S(\bm, \bh_1, \dots, \bh_{d-k}) = 
\left\{\begin{array}{ll}
1, & N(\oR(\bm,\bh_1, \dots, \bh_{d-k})) > \tau n \delta^{k/\alpha + d-k},\\
0, & \text{otherwise}.
\end{array}\right.
$$
Then define $L_\tau^{k,d}(\alpha, \beta)$ as the length of the longest path of the form 
$$\{(\bm_{\rm zz}(t), \bh_1(t), \dots, \bh_{d-k}(t)): t = t_0, \dots, t_0 + \ell -1\},$$ 
such that $S(\bm_{\rm zz}(t), \bh_1(t), \dots, \bh_{d-k}(t)) = 1$ for all $t = t_0, \dots, t_0 + \ell -1$.
If we see each $S(\bm, \bh_1, \dots, \bh_{d-k})$ as a test at node $(\bm, \bh_1, \dots, \bh_{d-k})$ and say that it is significant if it equals 1, then $L_\tau^{k,d}(\alpha, \beta)$ is the length of the longest significant run (LSR).

\begin{theorem}
\label{th:LSRT}
With $\delta = n^{-\alpha/(k+\alpha(d-k))} \vee \eta$, for $\tau$ large enough,
$$\pr{L_\tau^{k,d}(\alpha, \beta) \leq \log(1/\delta)|H_0} \to 1, \quad n \to \infty.$$
Also, there is a constant $C = C(k,d,\alpha,\beta)$ such that, if $\eps > C n^\rho \vee \eta^{d-k}n$,
$$\pr{L_\tau^{k,d}(\alpha, \beta) > \log(1/\delta)|H_1} \to 1, \quad n \to \infty.$$
\end{theorem}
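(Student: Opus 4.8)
The plan is to prove the two parts of Theorem~\ref{th:LSRT} --- the null control and the alternative detection --- separately, since they rely on opposite-direction estimates.

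\textbf{Null control.} Under $H_0$, I would first bound the probability that a \emph{single} node is significant. By Lemma~\ref{lem:hold-count-im} (applied with $d$ replaced by $d-k$), the region $\oR(\bm,\bh_1,\dots,\bh_{d-k})$ has $d$-dimensional volume of order $\Delta^k \cdot \delta^{d-k} \asymp \delta^{k/\alpha + d - k}$ (using $\Delta \asymp \delta^{1/\alpha}$ from (\ref{eq:Delta-delta})), so $N(\oR) \sim \Bin(n, q)$ with $q \asymp \delta^{k/\alpha+d-k}$. A node is significant exactly when $N(\oR) > \tau n q'$ for the nominal volume $q' \asymp \delta^{k/\alpha+d-k}$; for $\tau$ a large enough multiple of the implied volume constant, a Chernoff/Bernstein bound on the upper tail of a binomial gives $\pr{S(\text{node})=1} \leq \exp(-c\tau n q)$ when $nq \gtrsim 1$, and a cruder bound (e.g. $(enq/\tau nq')^{\tau nq'}$) when $nq$ is small. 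Next, an LSR of length $\ell$ requires $\ell$ specific consecutive nodes (along a fixed zig-zag path) to all be significant; here I would use that consecutive nodes sit on \emph{adjacent} hypercubes with essentially disjoint regions, so that their counts are independent (or negative associated), making $\pr{\text{run of length }\ell} \leq p^{\ell}$ with $p = \pr{S=1}$. Finally, a union bound over the $O(\delta^{-c})$ starting nodes and $\Delta^{-k}$ starting positions $t_0$ shows $\pr{L_\tau > \log(1/\delta) \mid H_0} \leq \delta^{-c} \cdot p^{\log(1/\delta)}$. With $\delta = n^{-\alpha/(k+\alpha(d-k))}\vee\eta$ one checks $nq$ is at least a constant, so $p \leq \delta^{\gamma}$ for $\gamma$ growing with $\tau$; taking $\tau$ large makes the exponent beat the polynomial factor, giving the first claim. (One must treat the two regimes $\delta = n^{-\alpha/(k+\alpha(d-k))}$ versus $\delta = \eta$ --- the latter being the noise-limited regime --- but in both $nq \gtrsim 1$.)

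\textbf{Alternative detection.} Under $H_1$ with signal $f^* \in \Hold^{k,d-k}(\alpha,\beta)$, by Theorem~\ref{th:im_cover} (and the enlargement of the regions by an extra $\delta$, so that the triangle inequality absorbs the buffer $\eta \leq \delta$ around $\graph(f^*)$), there is a connected component $\pi$, hence by the zig-zag construction a full path $P \in \cP_\delta^{k,d-k}(\alpha,\beta)$ of length $\Delta^{-k}$, with $\graph_\eta(f^*) \subset \oR(P)$. I would then show that \emph{most} nodes along $P$ are significant. For a node $(\bm,\bh_1,\dots,\bh_{d-k})$ on $P$, the region $\oR$ contains the slab of $\graph_\eta(f^*)$ over $I_\bm$, which carries an expected signal mass of order $\eps_n \cdot |I_\bm|_k / |[0,1]^k|_k = \eps_n \Delta^k$; since $\eps_n > C(n^\rho \vee \eta^{d-k}n)$ and $\Delta^k \asymp \delta^{k/\alpha}$, the expected count $\gtrsim \eps_n \Delta^k \gg \tau n\delta^{k/\alpha+d-k}$ precisely when $\eps_n \gg \tau n \delta^{d-k}$, which the hypothesis on $\eps_n$ delivers for the chosen $\delta$. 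A binomial lower-tail (Chernoff) bound then makes each such node significant except with probability exponentially small in $\eps_n\Delta^k$; a union bound over the $\Delta^{-k}$ nodes of $P$ shows that with probability tending to $1$ \emph{every} node on $P$ is significant, so $L_\tau \geq \Delta^{-k} = \delta^{-k/\alpha} \gg \log(1/\delta)$, giving the second claim.

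\textbf{Main obstacle.} The delicate point is the interface between the threshold exponent $k/\alpha + d - k$ appearing in the definition of $S$, the region volume $\asymp \Delta^k\delta^{d-k} = \delta^{k/\alpha+d-k}$, and the two competing scalings of $\delta$; I must verify that with $\delta = n^{-\alpha/(k+\alpha(d-k))}\vee\eta$ the per-node expected null count $nq$ stays bounded below (so the exponential tail bounds, rather than the weaker polynomial ones, govern the union bound) while simultaneously the per-node signal count $\eps_n\Delta^k$ exceeds $\tau nq$ by a growing factor --- and that this holds uniformly across both the statistically-limited regime ($\delta = n^{-\alpha/(k+\alpha(d-k))}$, where $\eps_n \asymp n^\rho$) and the resolution-limited regime ($\delta = \eta$, where $\eps_n \asymp \eta^{d-k}n$). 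Getting the constant $\tau$ and the constant $C$ in the hypothesis on $\eps_n$ to be compatible, independently of $n$, is the crux; the rest is a fairly standard two-sided binomial concentration plus union-bound argument, parallel to the corresponding result in \cite{MSDFS}.
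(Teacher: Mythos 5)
Your null-hypothesis argument is essentially the paper's: bound the single-node significance probability by a quantity $q_0(\tau)$ that tends to $0$ as $\tau\to\infty$ (the paper defines $q_0(\tau)=\max\pr{\Bin(n,a_1p)>\tau np}$ over $np\geq 1$, which is exactly your observation that $nq\gtrsim 1$ for the chosen $\delta$), use disjointness of the regions along a zig-zag path to get the product bound $q_0^{\ell}$ for a run of length $\ell$, and union-bound over the $O(\delta^{-(d-k)c_3}\Delta^{(d-k)c_4-k}6^{(d-k)c_3\ell})$ runs. That half is fine.

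The alternative half has a genuine gap. You claim that each node on the covering path $P^*$ is significant except with probability exponentially small in $\eps_n\Delta^k$, and then union-bound over the $\Delta^{-k}$ nodes to conclude that \emph{every} node is significant. But at the claimed detection threshold the per-node expected count under the signal is of order $n\eps_n\Delta^k\asymp C\,n\delta^{d-k}\cdot\delta^{k/\alpha}=C\,n\delta^{k/\alpha+d-k}$, and with $\delta=n^{-\alpha/(k+\alpha(d-k))}$ one has $n\delta^{k/\alpha+d-k}=1$ exactly. So the per-node expected count is $O(1)$, the significance probability of a single node is only some constant $q_1=q_1(C)<1$ (bounded away from $1$, though close to $1$ for $C$ large), and the probability that all $\Delta^{-k}\to\infty$ nodes of $P^*$ are simultaneously significant is roughly $q_1^{\Delta^{-k}}\to 0$, not $\to 1$. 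Your conclusion $L_\tau\geq\Delta^{-k}$ is therefore false with high probability in the statistically-limited regime. This is precisely why the statistic is a \emph{longest significant run} rather than a full-path test: the paper instead applies the Erd\H{o}s--R\'enyi law (in a version tolerating weak dependence) to the Bernoulli-like sequence of significance indicators along $P^*$, obtaining a run of length $\log(|P^*|)/\log(1/q_1)\,(1+o(1))=\frac{k}{\alpha}\log(1/\delta)/\log(1/q_1)\,(1+o(1))$, and then chooses $C$ large enough that $q_1$ is close enough to $1$ for this to exceed $\log(1/\delta)$. Without this (or an equivalent longest-run large-deviation argument) the second claim of the theorem does not follow from your estimates.
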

\begin{proof}
The proof of Theorem \ref{th:LSRT} is in the Appendix.
\end{proof}

Therefore, the LSRT achieves the detection rate established for the GLRT (Theorem \ref{th:GLRT}) and its approximation (Theorem \ref{th:GLRT-approx}). 
Moreover, it can be computed using dynamic programming ideas since again the paths considered are oriented and without loops.
This is done in \cite{MSDFS}.

The same approach applies essentially unchanged to the case of H\"older immersions, though with pathological cases obscuring the exposition, so that we omit details.

A similar approach is advocated in \cite{water-quality}, where a network of streams is monitored for pollution levels and large connected components of areas marked as problematic (polluted) are of interest. 
Two of the same researchers suggest a hybrid method in \cite{patil-upper} applied to the identification of regions of interest (hot spots) in raster maps.

\subsection{Detection in Grey-Level Images}

The results we developed for point clouds can be obtained (in slightly different form) for digitized images, which is the context for the experiments of Section \ref{sec:experiments}.

Suppose we observe a $d$-dimensional pixel array $Y$, with a total of $n$ pixels, of the form:
$$Y = \mu \ \xi_{\graph_\eta(f^*)} + \sigma Z,$$
where $Z$ is white Gaussian noise, with independent, standard normal entries; $\sigma > 0$ is the noise level; $\mu$ is the signal level; and for a subset $S \subset [0,1]^d$, $\xi_S$ is the array with $\ell^2$-norm 1 identifying the pixels that $S$ intersects, namely $\xi_S(\bi) \propto \bone\{S \cap \pix(\bi) \neq \emptyset\}$.

We observe $Y$ and want to decide between $H_0$ and $H_1$ below:
$$\begin{array}{rll}
H_0: &\quad \mu = 0; \\
H_1: &\quad \mu > 0, \text{ and } f^* \in \Hold^{k,d-k}(\alpha, \beta) \text{ is unknown}.
\end{array}$$
This setting is considered in \cite{MGD} (with a slightly different definition for $\xi_S$) in the context of parametric objects.

Following the same arguments as for point clouds, we find that the GLRT asymptotically separates $H_0$ and $H_1$ if $\mu \geq C (n^{k/(2 \alpha d)} \vee \eta^{-k/(2 \alpha)})$ with $C$ large enough, and that a detection threshold of same order of magnitude is achieved both by the approximate GLRT and the LSRT, with $\delta = \eta \vee n^{-1/d}$.
We omit details.

\section{Beamlet and Beamlet-Like Networks}
\label{sec:curves}

We now focus on curves ($k = 1$) of H\"older smoothness with $\alpha \in (1,2]$.
The corresponding net described in Section \ref{sec:hold} is made of piecewise linear functions.
In this section, we show that such a net may be obtained by carefully chaining beamlets as suggested in \cite{2D-beamlets}, yielding a more economical net at a comparable degree of approximation.

The case of curves is special in the sense that it is the simplest, and in particular allows us to use a parametrization by arclength.
Higher dimensional surfaces, even though smooth, may exhibit strange behavior, for example a very thin 2D surface in 3D.

\subsection{H\"older Curves}

We adopt here a slightly more intrinsic definition for curves.
For $\alpha \in (1,2]$ and $\lambda, \kappa > 0$, let
$\Gamma(\alpha,\lambda,\kappa)$ be the set of curves $\gamma \subset [0,1]^d$
with $\length(\gamma) \leq \lambda$ and parametrized by arclength with
\begin{equation}
\label{eq:beam_taylor1}
\|\gamma(t)-\gamma(s) - (t-s) \gamma'(s)\| \leq \kappa\
|t-s|^\alpha,
\quad \forall s,t \in [0,\length(\gamma)].
\end{equation}
$\Gamma(\alpha,\lambda,\kappa)$ is in close correspondence with $\Hold^{1,d}(\alpha,\beta)$ as defined in Section 2.
Note that the case $\alpha=2$ includes all twice differentiable curves with curvature bounded by $2 \kappa$.

Curves in $\Gamma(\alpha,\lambda,\kappa)$ satisfy the following properties.

\begin{lemma} \label{lem:beam_taylor_derivative}
For all $\gamma \in \Gamma(\alpha,\lambda,\kappa)$,
$$\|\gamma'(t) - \gamma'(s)\| \leq 2 \kappa |t-s|^{\alpha-1}, \quad \forall s,t.$$
\end{lemma}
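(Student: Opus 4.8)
The plan is to derive the Lipschitz-type bound on $\gamma'$ directly from the defining inequality~(\ref{eq:beam_taylor1}), by applying it twice — once from $s$ to $t$ and once from $t$ to $s$ — and combining the two estimates via the triangle inequality. The point is that~(\ref{eq:beam_taylor1}) controls how far $\gamma$ departs from its tangent line at a basepoint, and by looking at the segment between $s$ and $t$ from both endpoints, the curve values $\gamma(t)-\gamma(s)$ cancel, leaving only a comparison of the two tangent directions $\gamma'(s)$ and $\gamma'(t)$.

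Concretely, I would first write down~(\ref{eq:beam_taylor1}) with basepoint $s$:
$$\|\gamma(t) - \gamma(s) - (t-s)\gamma'(s)\| \leq \kappa |t-s|^\alpha,$$
and then with the roles of $s$ and $t$ swapped, using basepoint $t$:
$$\|\gamma(s) - \gamma(t) - (s-t)\gamma'(t)\| \leq \kappa |s-t|^\alpha = \kappa |t-s|^\alpha.$$
Rewriting the second inequality as $\|\gamma(t) - \gamma(s) - (t-s)\gamma'(t)\| \leq \kappa |t-s|^\alpha$ (multiplying the vector inside by $-1$ leaves the norm unchanged) and subtracting, the term $\gamma(t)-\gamma(s)$ drops out, so by the triangle inequality
$$|t-s| \, \|\gamma'(t) - \gamma'(s)\| = \|(t-s)\gamma'(t) - (t-s)\gamma'(s)\| \leq 2\kappa |t-s|^\alpha.$$
Dividing through by $|t-s|$ (the case $s=t$ being trivial) gives $\|\gamma'(t) - \gamma'(s)\| \leq 2\kappa |t-s|^{\alpha-1}$, which is exactly the claim.

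There is essentially no obstacle here: the argument is a two-line manipulation, and the only thing to be slightly careful about is the degenerate case $t=s$ (handled separately, where both sides vanish) and keeping track that $|s-t|^\alpha = |t-s|^\alpha$. If one wanted to be slightly more economical in the constant, one could note that the bound is tight only up to the factor $2$ coming from adding the two one-sided estimates, but the factor $2$ is what is asserted, so no refinement is needed. I would simply present the two applications of~(\ref{eq:beam_taylor1}) and the triangle inequality, then divide.
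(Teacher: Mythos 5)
Your argument is exactly the paper's: apply (\ref{eq:beam_taylor1}) once at basepoint $s$ and once at basepoint $t$, add the two bounds via the triangle inequality so that $\gamma(t)-\gamma(s)$ cancels, and divide by $|t-s|$. The proof is correct and matches the paper's proof step for step.
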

\begin{proof}
Fix $0 \leq s < t \leq \length(\gamma)$.
The triangle inequality and (\ref{eq:beam_taylor1}) give
\begin{eqnarray*}
(t-s) \|\gamma'(t) - \gamma'(s)\| & \leq & \|\gamma(t) - \gamma(s) -
\gamma'(s)(t-s)\| + \|\gamma(s) - \gamma(t) - \gamma'(t)(s-t)\|\\
& \leq & 2 \kappa (t-s)^\alpha.
\end{eqnarray*}
\end{proof}

\begin{lemma} 
\label{lem:beam_taylor2}
Let $\gamma \in \Gamma(\alpha,\lambda,\kappa)$. 
For all arclengths $r < s < t$,
$$\left\|\gamma(s) - \gamma(r) - \frac{s - r}{t - r}\ (\gamma(t)
- \gamma(r))\right\| \leq 2 \kappa (t - r)^\alpha.$$
\end{lemma}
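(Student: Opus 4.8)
The plan is to expand both $\gamma(s)$ and $\gamma(t)$ to first order about the common base point $r$ using the defining inequality (\ref{eq:beam_taylor1}), and then observe that the leading linear terms cancel in the combination $\gamma(s) - \gamma(r) - \frac{s-r}{t-r}(\gamma(t) - \gamma(r))$, leaving only a controlled remainder.

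Concretely, I would first write, using (\ref{eq:beam_taylor1}) with base point $r$,
\[
\gamma(s) - \gamma(r) = (s-r)\,\gamma'(r) + E_s, \qquad \gamma(t) - \gamma(r) = (t-r)\,\gamma'(r) + E_t,
\]
where $\|E_s\| \leq \kappa (s-r)^\alpha$ and $\|E_t\| \leq \kappa (t-r)^\alpha$. Substituting these into the expression of interest, the $\gamma'(r)$ contributions cancel exactly, since $\frac{s-r}{t-r}\cdot(t-r)\gamma'(r) = (s-r)\gamma'(r)$:
\[
\gamma(s) - \gamma(r) - \frac{s-r}{t-r}\bigl(\gamma(t) - \gamma(r)\bigr) = E_s - \frac{s-r}{t-r}\, E_t.
\]

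Then I would bound the right-hand side by the triangle inequality and the estimates above,
\[
\left\| E_s - \frac{s-r}{t-r}E_t \right\| \leq \|E_s\| + \frac{s-r}{t-r}\|E_t\| \leq \kappa (s-r)^\alpha + \frac{s-r}{t-r}\,\kappa (t-r)^\alpha,
\]
and finish by noting that $0 < s - r < t - r$ implies both $(s-r)^\alpha \leq (t-r)^\alpha$ and $\frac{s-r}{t-r} \leq 1$, so each of the two terms is at most $\kappa (t-r)^\alpha$, yielding the claimed bound $2\kappa(t-r)^\alpha$.

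There is no genuine obstacle here; the only points requiring a moment's care are that $\gamma'(r)$ is well-defined — which holds because (\ref{eq:beam_taylor1}) forces $\gamma$ to be differentiable at every arclength with derivative $\gamma'$ — and that the cancellation of the linear terms is exact, which it is by construction. Note in particular that Lemma \ref{lem:beam_taylor_derivative} is not needed for this argument.
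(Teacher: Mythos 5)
Your proposal is correct and is essentially the paper's own argument: the paper likewise applies (\ref{eq:beam_taylor1}) at base point $r$ to both $\gamma(s)$ and $\frac{s-r}{t-r}(\gamma(t)-\gamma(r))$, cancels the common linear term $(s-r)\gamma'(r)$, and concludes by the triangle inequality with each remainder bounded by $\kappa(t-r)^\alpha$. No differences worth noting.
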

\begin{proof}
Applying (\ref{eq:beam_taylor1}) twice yields
$$\|\gamma(s) - \gamma(r) - (s-r) \gamma'(r)\| \leq \kappa (s-r)^\alpha \leq
\kappa (t-r)^\alpha$$
and 
$$\left\|\frac{s - r}{t - r}\ (\gamma(t) - \gamma(r)) - (s-r) \gamma'(r)\right\|
\leq \kappa\ (s-r) (t-r)^{\alpha-1} \leq \kappa (t-r)^\alpha.
$$
Then apply the triangle inequality and conclude.
\end{proof}

\subsection{Beamlets}
\label{sec:beamlets}

Beamlets were introduced in 2D by Donoho and Huo \cite{2D-beamlets}, and then in 3D by Donoho and Levi \cite{3D-beamlets, spie-beamlets}.
We define them here in any dimension $d \geq 2$.
Fix a maximum scale $J \in \bbN$.
Define $\delta_0 = 2^{-J}$ and at scale $j \in \{0,\dots,J\}$, define $\Delta = 2^{-j}$.
For a given coordinate $r = 1, \dots, d$, hyperplanes of the form 
$$\{(x_1, \dots, x_d): x_r = h \Delta\},$$
where $h \in \{0,\dots,\Delta^{-1}\}$, are called $r$-hyperplanes.
Such hyperplanes will be called $\Delta$-hyperplanes; they partition the unit hypercube $[0,1]^d$ into smaller hypercubes of sidelength $\Delta$ that we call $\Delta$-hypercubes.
On each $\Delta$-hyperplane, we consider a regular square grid with spacing $\delta_0$.
Formally, we consider gridpoints of the form $(h_1 \delta_0, \dots,h_d \delta_0)$, $h_r = 0, 1, \dots, \delta_0^{-1}$, with at least one coordinate an integer multiple of $\Delta \delta_0^{-1}$; 
if this happens at the $r$th coordinate, we speak of an $r$-gridpoint, which by definition belongs to an $r$-hyperplane.
A beamlet is simply a line-segment joining two gridpoints belonging to the same $\Delta$-hypercube.
See Figures \ref{fig:2D-beamlets}. 

\begin{figure}[htbp]
  \centering
  \subfloat[Coarsest scale, j=0]{%
    \includegraphics[height = 2in]{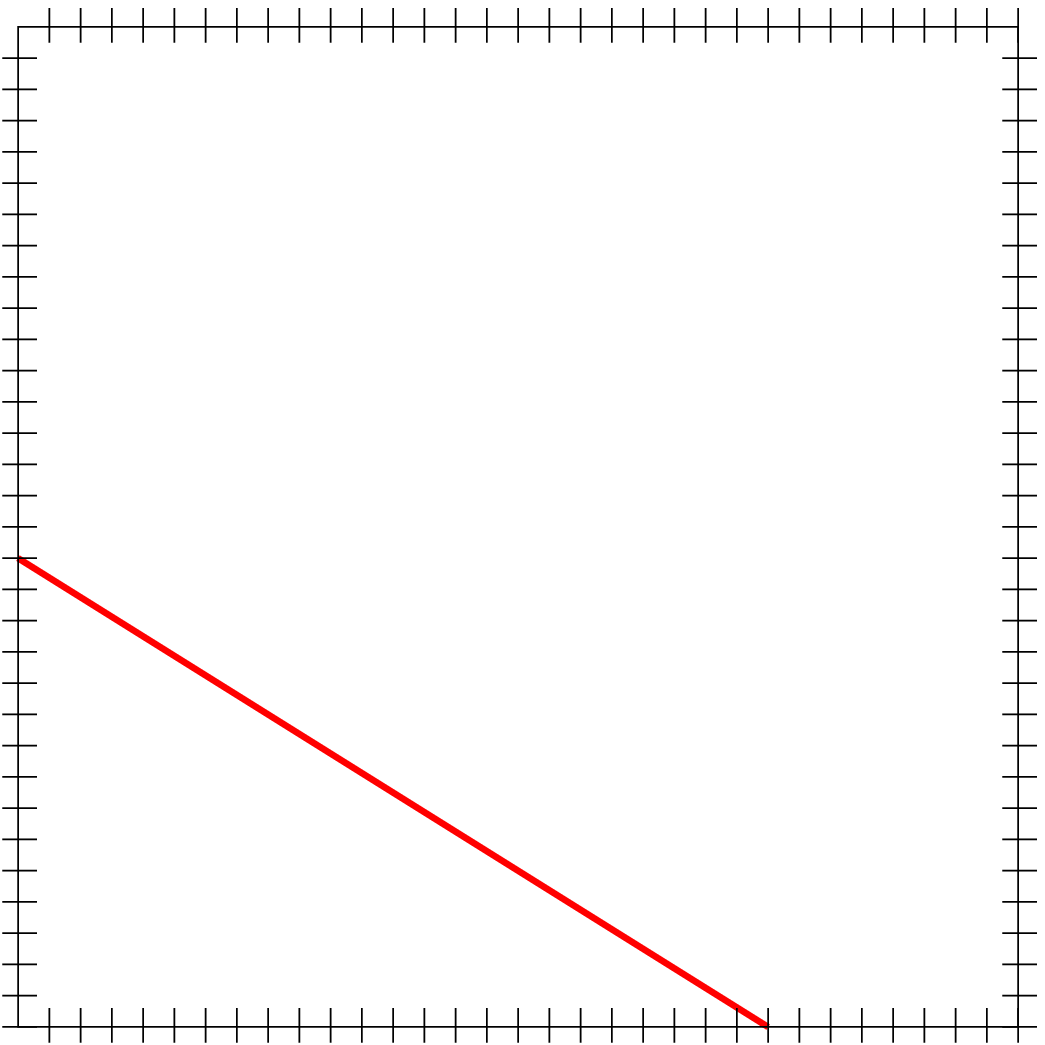}}%
  \quad%
  \subfloat[Next finer scale, j=1]{%
    \includegraphics[height = 2in]{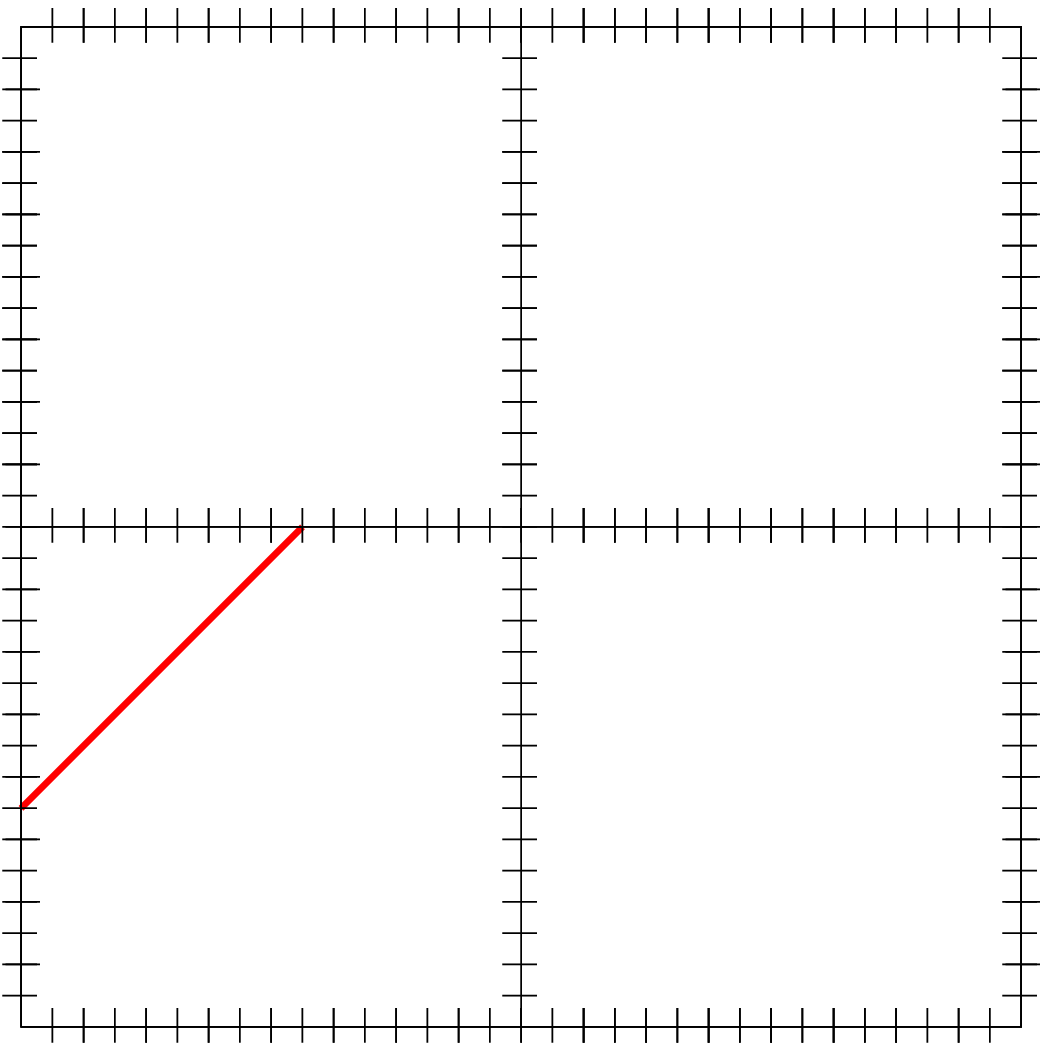}} \\
  \subfloat[Coarsest scale, j=0]{%
    \includegraphics[trim = 10mm 40mm 135mm 30mm, clip=true,height = 2in]{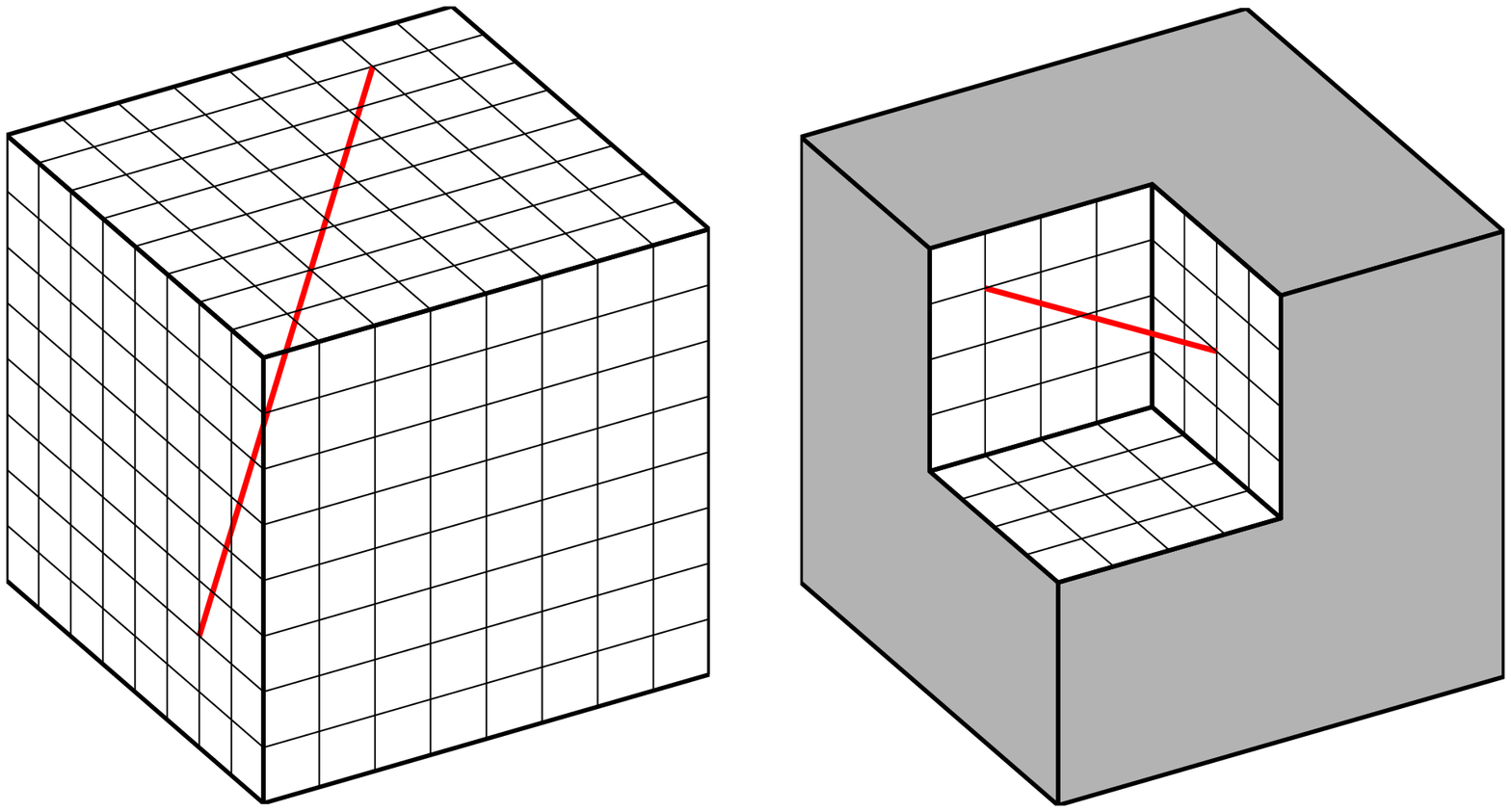}}%
  \quad%
  \subfloat[Next finer scale, j=1]{%
    \includegraphics[trim = 130mm 40mm 35mm 30mm, clip=true,height = 2in]{figs//fig_beamlets.eps}}%
    \caption{Examples of beamlets in 2D and 3D.}
  \label{fig:2D-beamlets}
\end{figure} 


The beamlet graph in \cite{2D-beamlets} refers to the graph with nodes the gridpoints and edges the beamlets.
Our interest instead is in building a beamlet good-continuation graph (i.e. network), with nodes the beamlets themselves and edges between beamlets in good-continuation, such that paths in that network provide a useful net for smooth curves.   
Such a graph is used in some experiments performed in \cite{2D-beamlets}, yet never formally defined there or elsewhere, though a related construction is presented in \cite{MSDFS} for curves that are graphs of H\"older functions.
Formally, two beamlets are neighbors if they are of the form
$[b_1,b_2]$ and $[b_2,b_3]$ and satisfy
\begin{eqnarray}
&& \|(b_{r,2} - b_{r,1})(b_{3} - b_{2}) - (b_{r,3} - b_{r,2})(b_{2} - b_{1})\| \nonumber \\
&& \qquad \qquad \qquad \qquad \qquad \qquad
 \leq 2^{j-J} (\|b_{3} - b_{2}\|  + \|b_{2} - b_{1}\|), \quad \forall r = 1, \dots, d. \label{eq:beamlet-neigh}
\end{eqnarray}
This is basically a constraint on the angle formed by $[b_1,b_2]$ and $[b_2,b_3]$; see Figure \ref{fig:2D-beamlets-neigh}.

\begin{figure}[htbp]
  \centering
\includegraphics[height = 2in]{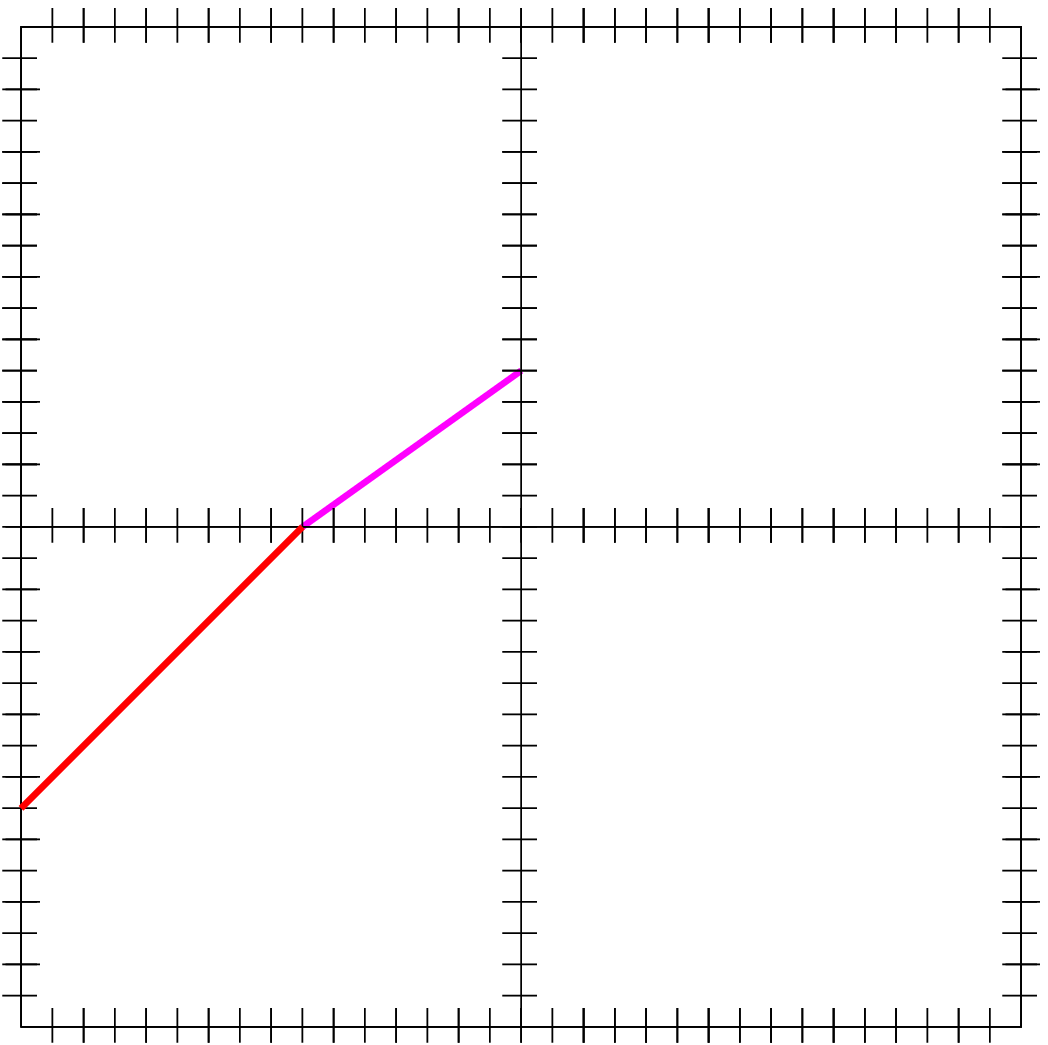}
  \qquad%
    \includegraphics[height = 2in]{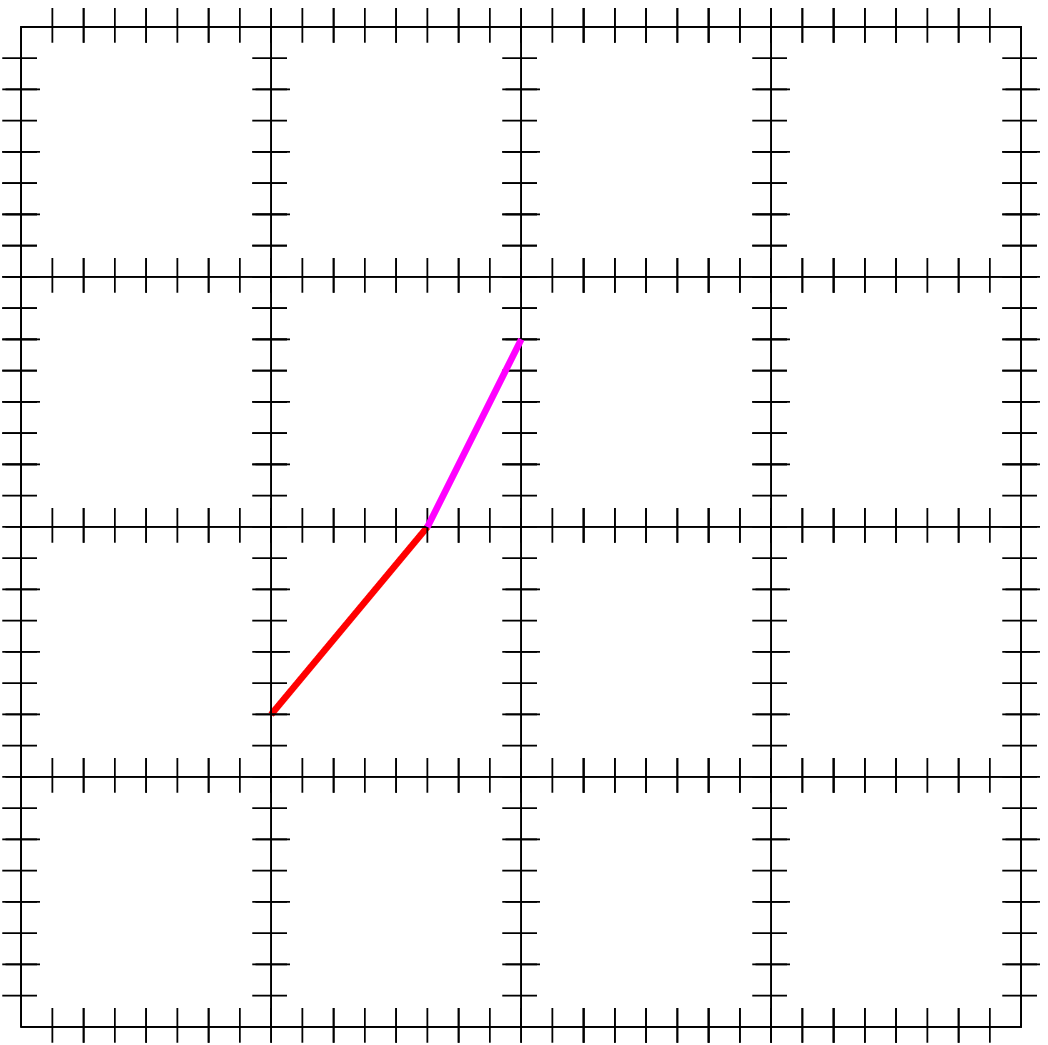}
  \caption{Examples of 2D beamlets in good-continuation.}
  \label{fig:2D-beamlets-neigh}
\end{figure}

Let $\bbB_{j,J}^d$ denote the resulting beamlet good-continuation network at scale $j$.   

\begin{lemma}
\label{lem:beamlet-count}
The number of beamlets at scale $j$ is of order $\bbB_{j,J}^d = O(d^2 2^{2(d-1)J - (d-2)j})$ and a given beamlet $B \in \bbB_{j,J}^d$ of length $|B|$ has $O(d |B|^{-(d-1)})$ neighbors. 
\end{lemma}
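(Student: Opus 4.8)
The plan is to separately count beamlets (nodes) and neighbors (degree), each by a direct combinatorial enumeration at a fixed scale $j$ with $\Delta = 2^{-j}$ and grid spacing $\delta_0 = 2^{-J}$.

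For the node count, I would first count the $\Delta$-hypercubes: there are $\Delta^{-d} = 2^{dj}$ of them. Within a single $\Delta$-hypercube, a beamlet joins two gridpoints lying on its faces (the gridpoints live on $\Delta$-hyperplanes, hence on hypercube faces). Each face of a $\Delta$-hypercube is a $(d-1)$-dimensional square of sidelength $\Delta$ carrying a $\delta_0$-spaced grid, so it contains $O((\Delta/\delta_0)^{d-1}) = O(2^{(d-1)(J-j)})$ gridpoints; summing over the $2d$ faces gives $O(d\, 2^{(d-1)(J-j)})$ gridpoints per hypercube. A beamlet is an unordered pair of such gridpoints, so there are $O(d^2 2^{2(d-1)(J-j)})$ beamlets per hypercube. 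Multiplying by the number of hypercubes yields $O(d^2 2^{dj} 2^{2(d-1)(J-j)}) = O(d^2 2^{2(d-1)J - (d-2)j})$, as claimed. (One should note the double-counting of beamlets lying on shared faces only affects constants, and that a beamlet is determined by its two endpoints so this overcount is bounded.)

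For the degree bound, fix a beamlet $B = [b_1, b_2]$ of length $|B| = \|b_2 - b_1\|$; I count beamlets $B' = [b_2, b_3]$ satisfying the good-continuation constraint \eqref{eq:beamlet-neigh}. The endpoint $b_3$ must be a gridpoint in the $\Delta$-hypercube on the far side of $b_2$ from $b_1$, and \eqref{eq:beamlet-neigh} forces $b_3$ into a thin cone/slab around the ray continuing $B$: rearranging coordinate $r$ of \eqref{eq:beamlet-neigh}, $b_3$ must lie within distance $O(2^{j-J}(\|b_3 - b_2\| + |B|)/|B|) = O(2^{j-J} \cdot \text{const})$ of the line through $b_1, b_2$ in each transverse direction, since $\|b_3 - b_2\| \le \Delta\sqrt d = O(2^{-j})$ and $|B| \ge \delta_0 = 2^{-J}$ gives the ratio bounded. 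Thus the admissible $b_3$ lie in a tube of length $O(\Delta)$ and transverse cross-section of radius $O(2^{j-J} \cdot \Delta \cdot |B|^{-1} \cdot \text{something})$ — more carefully, the transverse tolerance scales like $2^{j-J}\Delta/|B|$ per dimension — and the gridpoints in it, spaced $\delta_0 = 2^{-J}$ apart, number $O(d)$ times the product of (length/$\delta_0$) with $(d-1)$ transverse factors of (tolerance/$\delta_0$), giving $O(d\, |B|^{-(d-1)})$ after the algebra. I would present this as: the number of lattice points of spacing $\delta_0$ in an axis-skewed box of dimensions $\Delta \times (2^{j-J}\Delta/|B|)^{d-1}$ is $O(d)\cdot (\Delta/\delta_0)(2^{j-J}\Delta/(|B|\delta_0))^{d-1} = O(d\,|B|^{-(d-1)})$ using $\Delta = 2^{-j}$, $\delta_0 = 2^{-J}$.

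The main obstacle is the degree estimate: one must be careful that \eqref{eq:beamlet-neigh} genuinely confines $b_3$ to a region of the stated transverse width — the constraint is stated with $\|b_3 - b_2\|$ appearing on the right-hand side, so it is implicit, and I would resolve it by using the a priori bound $\|b_3 - b_2\| \le \sqrt d\,\Delta$ (both endpoints lie in one $\Delta$-hypercube) to make the constraint explicit, then verifying that dividing through by $|B|$ does not blow up because $|B| \ge \delta_0$. A secondary subtlety is that "$b_3$ in the adjacent hypercube" should really be "$b_3$ a gridpoint reachable by a beamlet from $b_2$," i.e. in one of the (up to $2d$) hypercubes touching $b_2$, but this only multiplies the count by $O(d)$, already absorbed. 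The remaining steps are routine lattice-point counting and I would not grind through the constants.
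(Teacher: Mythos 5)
Your count of the number of beamlets is correct and is essentially the paper's own argument: $O(\Delta^{-d})$ hypercubes, $O\bigl(d(\Delta/\delta_0)^{d-1}\bigr)$ gridpoints on the boundary of each, squared to count pairs.

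The degree bound, however, has a genuine gap in the final counting step. You count the admissible endpoints $b_3$ as the $\delta_0$-lattice points of a $d$-dimensional tube of length $\Delta$ and transverse radius $\rho = 2^{j-J}\Delta/|B| = \delta_0/|B|$, namely $(\Delta/\delta_0)\cdot(\rho/\delta_0)^{d-1}$. But $(\Delta/\delta_0)(\rho/\delta_0)^{d-1} = 2^{J-j}\,|B|^{-(d-1)}$, not $|B|^{-(d-1)}$: the longitudinal factor $\Delta/\delta_0 = 2^{J-j}$ does not cancel, so your bound as written is too large by $2^{J-j}$, which is unbounded as $J \to \infty$ at a fixed scale $j$. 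The source of the error is treating the gridpoints as a full $d$-dimensional $\delta_0$-lattice filling the tube. They are not: by construction a gridpoint has at least one coordinate that is a multiple of $\Delta$, so $b_3$ must lie on one of the $O(d)$ $\Delta$-hyperplanes bounding the hypercube containing $b_2$ and $b_3$. The correct count is: for each such hyperplane, condition (\ref{eq:beamlet-neigh}) forces $b_3$ to lie within sup-distance $O(\delta_0/|B|)$ of the single point where the line through $b_1,b_2$ meets that hyperplane, and a $(d-1)$-dimensional grid of spacing $\delta_0$ has $O\bigl((\delta_0/|B|)/\delta_0\bigr)^{d-1} = O(|B|^{-(d-1)})$ points in such a neighborhood; summing over the $O(d)$ hyperplanes gives $O(d\,|B|^{-(d-1)})$. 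This is exactly the paper's route: it takes $r$ with $|b_{r,2}-b_{r,1}| = \|b_2-b_1\|$, sets $b_3^* = b_2 + \frac{b_{r,3}-b_{r,2}}{b_{r,2}-b_{r,1}}(b_2-b_1)$, deduces $\|b_3 - b_3^*\| = O(2^{-J}|B|^{-1})$ from (\ref{eq:beamlet-neigh}), and then counts $r'$-gridpoints in a ball of that radius. A secondary slip: your parenthetical claim that $(\|b_3-b_2\|+|B|)/|B|$ is bounded by a constant is false (it can be as large as $\Delta/\delta_0 + 1$); what is small is the product $2^{j-J}\cdot\Delta/|B| = \delta_0/|B|$, which you do recover in your ``more carefully'' sentence, so only the lattice-point count at the end actually needs repair.
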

\begin{proof}
The proof of Lemma \ref{lem:beamlet-count} is in the Appendix.
\end{proof}

\begin{remark}
Short beamlets in the network $\bbB_{j,J}^d$ are highly connected.
This is an undesirable feature and we were not able to avoid it, other than defining a closely related system in Section \ref{sec:beams}.
\end{remark}

To each beamlet $B$ we associate a tubular region:
$$R(B) = \{\bx \in [0,1]^d: \min_{\by \in B} \|\bx - \by\| \leq 2^{j-J}\}.$$
We extend this definition to subsets of beamlets $R(\pi) = \bigcup_{B \in \pi} R(B)$.

\begin{theorem}
\label{th:beamlet-curve-covering}
Fix $\alpha \in (1,2]$ and $\lambda,\kappa>0$.
Let $j(\alpha, \kappa) = \left\lceil \frac{\log_2(\kappa) + J}{\alpha + 1} \right\rceil.$
There is a universal constant $K$ such that, when $J$ is large enough and $j \geq j(\alpha,\beta) + K$, to each curve $\gamma \in \Gamma(\alpha,\lambda,\kappa)$ corresponds a path $\pi_{j}$ in $\bbB_{j,J}^d$ chaining at most $(2d)(\lambda 2^j + 2)$ beamlets such that $\gamma$ is included in $R(\pi_j)$.
\end{theorem}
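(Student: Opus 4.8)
The plan is to attach to each $\gamma\in\Gamma(\alpha,\lambda,\kappa)$ a polygonal curve whose successive segments are beamlets of $\bbB_{j,J}^d$ in good continuation and whose associated tubes cover $\gamma$ --- the curve analogue of the (folklore) fact underlying \cite{2D-beamlets} that a straight segment is well approximated by a chain of beamlets. The new ingredient is a \emph{curvature budget} supplied by Lemmas \ref{lem:beam_taylor_derivative} and \ref{lem:beam_taylor2} together with the hypothesis $j\ge j(\alpha,\kappa)+K$: with $\Delta=2^{-j}$ and $\delta_0=2^{-J}$, over any window of arclength $\le\Delta$ the unit tangent $\gamma'$ turns by at most $2\kappa\Delta^{\alpha-1}$ (Lemma \ref{lem:beam_taylor_derivative}), and $\gamma$ deviates from any of its chords of arclength $\le\Delta$ by at most $2\kappa\Delta^{\alpha}$ (Lemma \ref{lem:beam_taylor2}); both are $o(2^{j-J})$ once $J$ is large and $j\ge j(\alpha,\kappa)+K$, which is the quantitative content of the threshold $j(\alpha,\kappa)=\lceil(\log_2\kappa+J)/(\alpha+1)\rceil$.

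\textbf{Step 1 (partition into single-hypercube arcs).} First I would cut $[0,\length(\gamma)]$ at the (at most $\lambda 2^j+1$) multiples of $\Delta$, and then refine each resulting piece by cutting at the arclengths where $\gamma$ crosses a $\Delta$-hyperplane, so that every sub-arc $\gamma([s_{i-1},s_i])$ lies in a single closed $\Delta$-hypercube $Q_i$ with $\gamma(s_{i-1}),\gamma(s_i)$ on $\partial Q_i$. Because a sub-arc of arclength $\le\Delta$ is, by near-straightness, nearly a line segment of length $\le\Delta$, it crosses only $O(d)$ $\Delta$-hyperplanes --- except that near a point where $\gamma$ is tangent to a hyperplane it may cross that hyperplane arbitrarily often; such an excursion stays in a band of width $\le 2\kappa\Delta^{\alpha}$ about the hyperplane, small compared with the tube radius $2^{j-J}$, and I would coalesce the whole excursion into one sub-arc. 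This yields $M=O(d\lambda 2^j)$ sub-arcs, and with the obvious bookkeeping (and a harmless extension of the two end sub-arcs, whose free endpoints need not lie on a hyperplane) $M\le(2d)(\lambda 2^j+2)$.

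\textbf{Step 2 (beamlets and the tube).} For each $i$ round $\gamma(s_{i-1}),\gamma(s_i)$ to the nearest gridpoints $b_{i-1},b_i$ on the $\Delta$-hyperplanes through them, moving each point by at most $\tfrac{\sqrt d}{2}\,\delta_0$ (for a coalesced excursion, round to a gridpoint on the nearby hyperplane, moving by at most $2\kappa\Delta^{\alpha}+\tfrac{\sqrt d}{2}\delta_0$). Then $b_{i-1},b_i$ are gridpoints of the common hypercube $Q_i$, so $B_i:=[b_{i-1},b_i]$ is a beamlet of $\bbB_{j,J}^d$, and consecutive beamlets share an endpoint by construction, so $\pi_j:=(B_1,\dots,B_M)$ is a path. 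By Lemma \ref{lem:beam_taylor2}, every point of $\gamma([s_{i-1},s_i])$ is within $2\kappa\Delta^{\alpha}$ of the chord $[\gamma(s_{i-1}),\gamma(s_i)]$, and rounding shifts that chord by at most $\sqrt d\,\delta_0$; substituting $\Delta=2^{-j}$, $\delta_0=2^{-J}$ one checks $2\kappa\Delta^{\alpha}+\sqrt d\,\delta_0\le 2^{j-J}$ once $j\ge j(\alpha,\kappa)+K$, whence $\gamma([s_{i-1},s_i])\subset R(B_i)$ and $\gamma\subset R(\pi_j)$.

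\textbf{Step 3 (good continuation) --- the main obstacle.} It remains to verify (\ref{eq:beamlet-neigh}) for consecutive beamlets $B_i=[b_{i-1},b_i]$, $B_{i+1}=[b_i,b_{i+1}]$. Writing $u=b_i-b_{i-1}$, $v=b_{i+1}-b_i$, the left side of (\ref{eq:beamlet-neigh}) is the cross-product-type quantity $\|u_r v-v_r u\|$, which vanishes when $u\parallel v$; the crude bound $\|u_r v-v_r u\|\le 2\|u\|\,\|v\|$ is far too weak (it would force $j\gtrsim J/2$, which fails for $\alpha$ near $2$), so the near-parallelism of $u$ and $v$ must be exploited quantitatively. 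Using $b_{i-1}\approx\gamma(s_{i-1})$ etc.\ up to $O(\delta_0)$, then (\ref{eq:beam_taylor1}) to replace $\gamma(s_i)-\gamma(s_{i-1})$ by $(s_i-s_{i-1})\gamma'(s_{i-1})$ up to $O(\kappa\Delta^{\alpha})$ (and similarly for $v$), and Lemma \ref{lem:beam_taylor_derivative} to replace $\gamma'(s_i)$ by $\gamma'(s_{i-1})$ up to $O(\kappa\Delta^{\alpha-1})$, the parallel leading terms cancel and one is left with $\|u_r v-v_r u\|=O(\kappa\Delta^{\alpha+1}+\Delta\,\delta_0)$; since the right side of (\ref{eq:beamlet-neigh}) is $2^{j-J}(\|u\|+\|v\|)$ with $\|u\|+\|v\|$ comparable to $s_{i+1}-s_{i-1}$ (after merging any degenerate sub-arcs of length $O(\delta_0)$), the inequality holds for all $r$ once $j\ge j(\alpha,\kappa)+K$. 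Combining the three steps gives a path $\pi_j$ through at most $(2d)(\lambda 2^j+2)$ beamlets with $\gamma\subset R(\pi_j)$. I expect Step 3 to be the delicate one --- it is where the specific form of the neighbor condition (\ref{eq:beamlet-neigh}) and the specific threshold $j(\alpha,\kappa)$ must be reconciled --- with the coalescing-of-oscillations bookkeeping in Step 1 a close second.
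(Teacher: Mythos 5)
Your overall strategy is sound and your Step~3 estimates are essentially the right ones (Taylor expansion about the shared endpoint, cancellation of the parallel leading terms, error $O(\kappa\Delta^{\alpha}+\delta_0)$ per unit of $\|u\|+\|v\|$, reconciled with the tolerance $2^{j-J}=\delta_0/\Delta$ exactly at the threshold $j(\alpha,\kappa)$). But note first that you took a different route from the paper: the paper does \emph{not} prove this theorem directly. It first proves Theorem~\ref{th:curve-covering}, covering $\gamma$ by a chain of at most $\lambda 2^j+2$ \emph{beams} in good continuation, and then observes that each beam --- being a straight segment --- meets at most $2d$ $\Delta$-hyperplanes and is therefore approximated within $\delta_0$ by a chain of at most $2d$ beamlets, whose good continuation is inherited from that of the beams by the triangle inequality. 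This two-stage reduction is what produces the factor $(2d)(\lambda 2^j+2)$ cleanly.

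The genuine gap in your proposal is Step~1. Cutting $\gamma$ at \emph{every} crossing of a $\Delta$-hyperplane does not yield a bounded number of pieces: a curve in $\Gamma(\alpha,\lambda,\kappa)$ tangent to a hyperplane may cross it arbitrarily many (even infinitely many) times, and your proposed fix --- coalescing each oscillatory excursion into one sub-arc --- is not worked out. In particular you do not verify that a coalesced sub-arc still has both endpoints on hyperplanes bounding a \emph{common} $\Delta$-hypercube (an excursion in a band about an $r$-hyperplane straddles two hypercubes), nor that the resulting count is $\le(2d)(\lambda 2^j+2)$ rather than merely $O(d\lambda 2^j)$; even ignoring oscillations, "at most $2d$ crossings per arclength-$\Delta$ piece" gives up to $2d+1$ sub-pieces per piece, which overshoots the stated constant. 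The paper's construction sidesteps all of this: in the proof of Theorem~\ref{th:curve-covering} the cut points $s_i$ are defined as the first time after $s_{i-1}$ that $\gamma$ hits an $r$-hyperplane \emph{and} has advanced by a full $\Delta$ in the dominant coordinate $r$, which forces $s_{i+1}-s_i\ge\Delta$ (see~(\ref{eq:sdiff})), eliminates degenerate and oscillatory pieces at the source, and bounds the number of beams by $\lambda 2^j+2$; the residual hyperplane crossings are then handled inside each straight beam, where oscillation is impossible. To repair your argument you would either need to adopt such a "record a crossing only after progress $\Delta$" rule (at which point you are essentially reconstructing the beam decomposition) or carry out the coalescing argument in full, including the verification that coalesced endpoints round to gridpoints of a common hypercube and that the tiny beamlets so produced do not break the count.
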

\begin{proof}
Theorem \ref{th:beamlet-curve-covering} is a consequence of Theorem \ref{th:curve-covering}; see the comments following Theorem \ref{th:curve-covering}.
\end{proof}

Such a covering is illustrated in Figure \ref{fig:2D-beamlets-approx}.

\begin{figure}[htbp]
  \centering
\includegraphics[height = 2.5in]{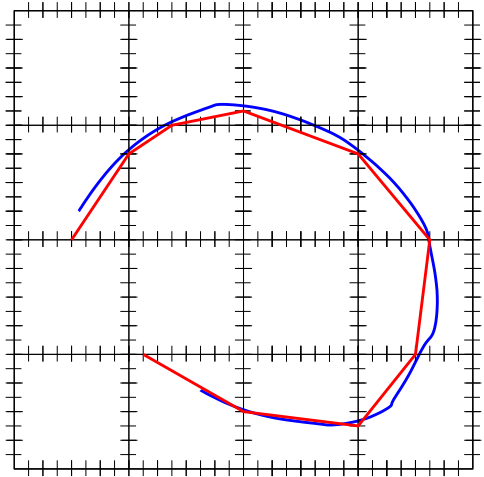}
  \caption{Example of a curve (blue) approximation by a chain of 2D beamlets (red) in good continuation.}
  \label{fig:2D-beamlets-approx}
\end{figure}

\subsubsection{Multiscale beamlet good-continuation network}

The chaining algorithms proposed in \cite{2D-beamlets} involve linking beamlets at different scales.
Such a multiscale network may simply be constructed by taking the union of all $\bbB_{j,J}^d, j=0,\dots,J$ and adding edges between beamlets at different scale satisfying (\ref{eq:beamlet-neigh}) with $j$ replaced by the maximum of the two scales involved; see Figure \ref{fig:2D-beamlets-neigh-two-scales}.
As in Section \ref{sec:hold-multiscale}, an approximation is built using an RDP, this time of the ambient space $[0,1]^d$.
Such a network provides more precise approximations to curves with varying smoothness, however at the cost of (substantially) increasing connectivity. 

\begin{figure}[htbp]
  \centering
\includegraphics[height = 2in]{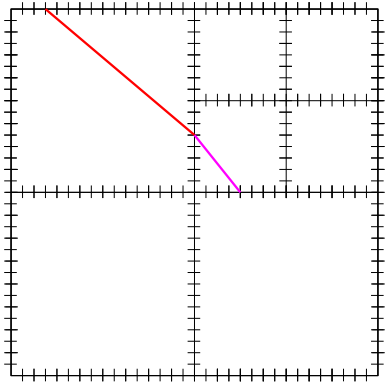}
  \caption{Example of 2D beamlets at different scales in good continuation.}
  \label{fig:2D-beamlets-neigh-two-scales}
\end{figure}

\subsection{Beams}
\label{sec:beams}

We found in Lemma \ref{lem:beamlet-count} that defining a notion of good continuation directly between beamlets is problematic in that small corner beamlets become hubs, connected to a large number of other beamlets.
The construction presented here uses other line-segments akin to beamlets, that we call beams.
(Note that the term `beam' refers to something else in \cite{2D-beamlets}.)
Beams at a given scale are of similar length, thus avoiding the problem just mentioned, and each beam can be well approximated by a short chain of beamlets at that same scale.

Another drawback of beamlets as defined in Section \ref{sec:beamlets} is that the quantization by $\delta_0$ does not change with the scale as $\Delta$ does; this results in a system that is more rich than needed, therefore wasteful.
Just as in Section \ref{sec:hold-scale}, at scale $j$ we define $\delta = 2^{j-J}$, and $\Delta = 2^{-j}$ as before; we assume that $j \leq J/2$, so that $\Delta$ is an integer multiple of $\delta$.

We define two kinds of beams.
For $r = 1, \dots, d$, an $r$-beam is a line segment joining two $r$-gridpoints $b_1$ and $b_2$ belonging to the same $\Delta$-hypercube on opposite sides, and such that 
\begin{equation} \label{eq:h-beam}
\|b_1 - b_2\| \leq \Delta + \delta.  
\end{equation}
Therefore, an $r$-beam makes an angle of about 45 degrees or less with the $r$th axis.
 
For $r_1, r_2 = 1, \dots, d$, with $r_1 \neq r_2$, an $r_1r_2$-beam is a line segment joining an $r_1$-gridpoint $b_1$ and a $r_2$-gridpoint $b_2$ such that
\begin{align}
\Delta \leq |b_{r_1,1} - b_{r_1,2}| \vee |b_{r_2,1} - b_{r_2,2}| &< 2 \Delta, \label{eq:hh-beam-1}\\
\left|\ |b_{r_1,1} - b_{r_1,2}| - |b_{r_2,1} - b_{r_2,2}|\ \right| &\leq \delta, \label{eq:hh-beam-2}\\ 
(|b_{r_1,1} - b_{r_1,2}| \vee |b_{r_2,1} - b_{r_2,2}|) - |b_{r_3,1} - b_{r_3,2}| &\geq - \delta, \quad \forall r_3 = 1, \dots, d \label{eq:hh-beam-3}.
\end{align}
Note that $b_1$ and $b_2$ do not belong to the same $\Delta$-hypercube and that (\ref{eq:hh-beam-3}) is void in dimension $d=2$.
Hence, an $r_1r_2$-beam connects an $r_1$-hyperplane and a $r_2$-hyperplane, making angles of about 45 degrees at the intersection with those hyperplanes.  
The reason $r_1r_2$-beams are so restricted is that they are only used to connect $r_1$-beams and $r_2$-beams.

\begin{figure}[htbp]
  \centering
  \subfloat[$r2$-beam ($j=0$)]{%
    \includegraphics[height = 2in]{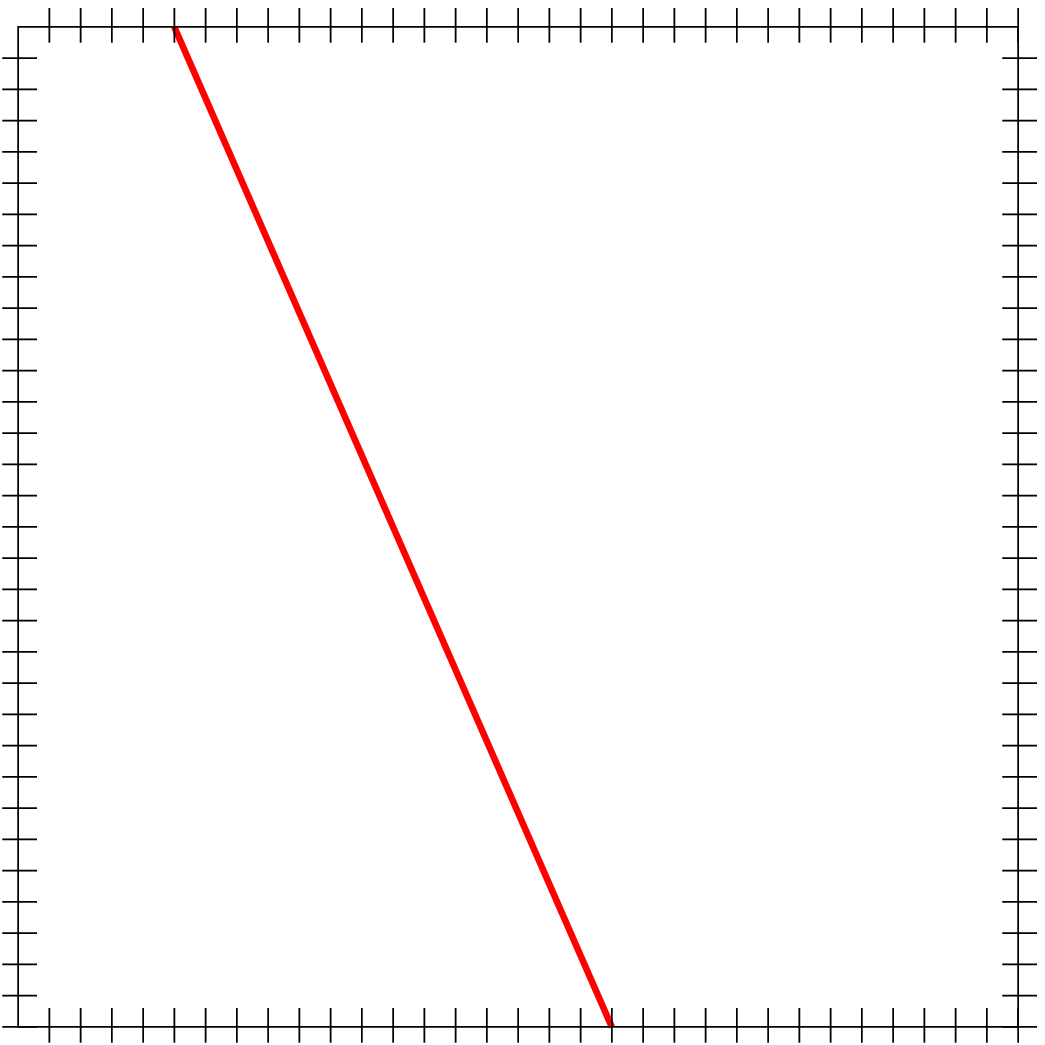}}%
  \quad%
  \subfloat[$r1$-beam ($j=1$)]{%
    \includegraphics[height = 2in]{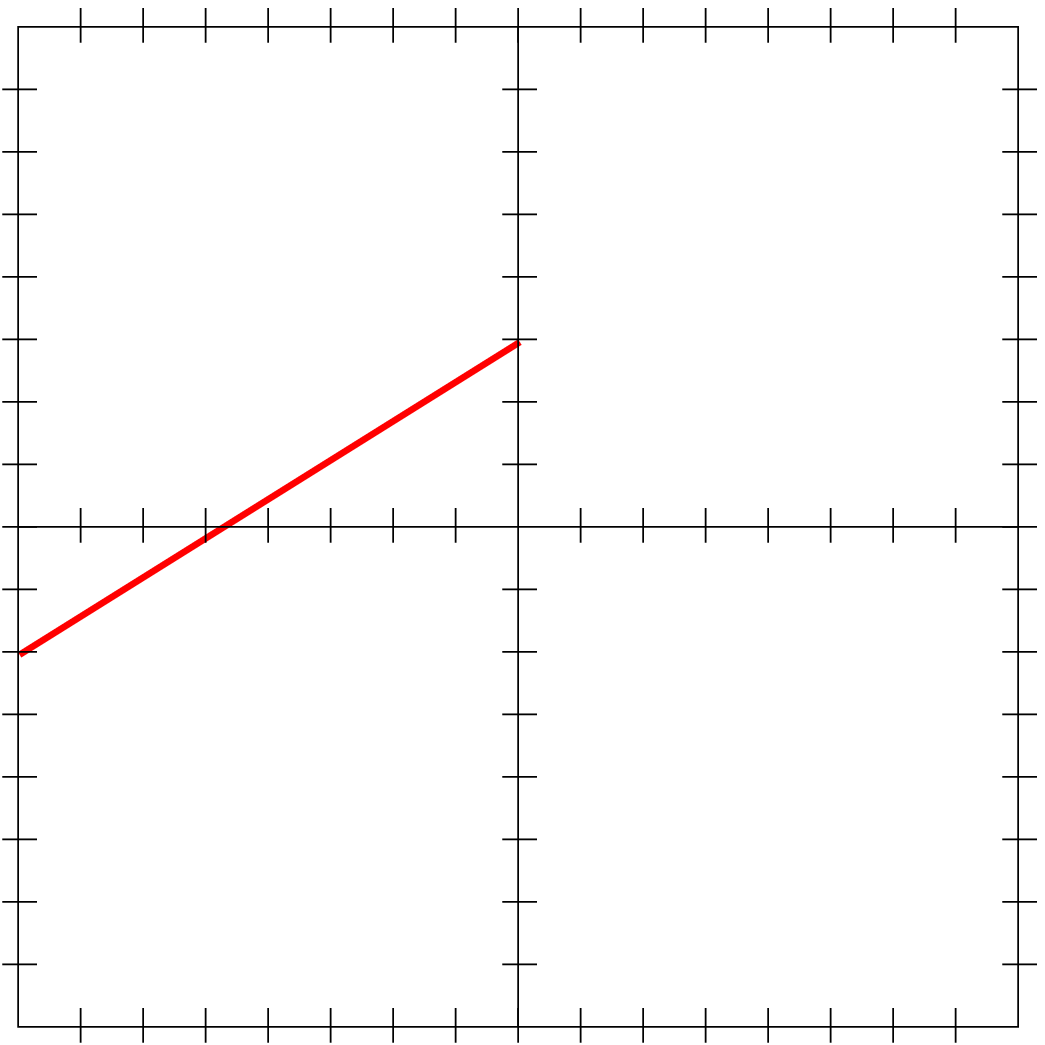}}%
  \quad%
  \subfloat[$r1r2$-beam ($j=2$)]{%
    \includegraphics[height = 2in]{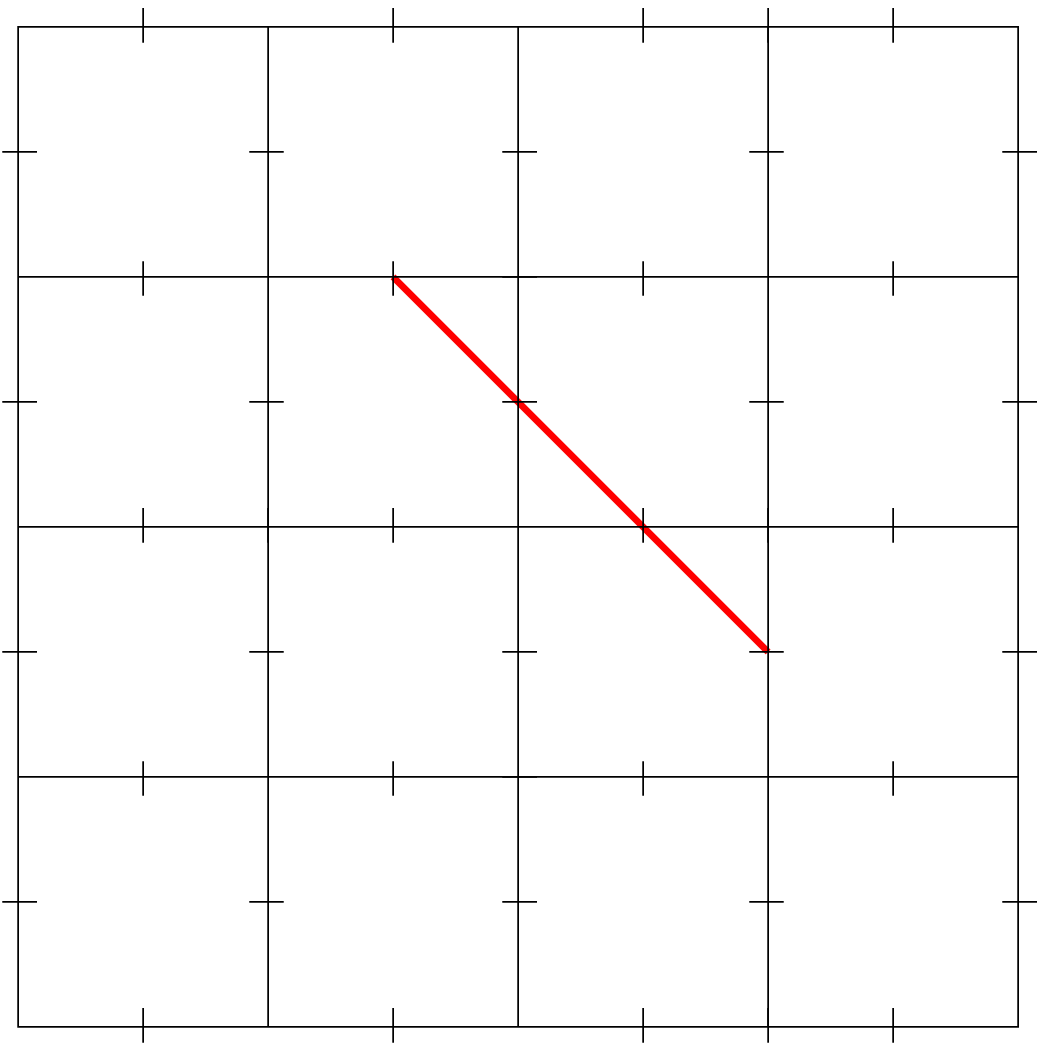}}%
  \caption{Example of 2D beams.}
  \label{fig:2D-beams}
\end{figure} 


We define neighborhood relationships between beams as we did for beamlets in (\ref{eq:beamlet-neigh}); see Figure \ref{fig:2D-beams-neigh}.
Specifically, two beams are neighbors if they are of the form
$[b_1,b_2]$ and $[b_2,b_3]$ and satisfy
\begin{eqnarray}
&& \|(b_{r,2} - b_{r,1})(b_{3} - b_{2}) - (b_{r,3} - b_{r,2})(b_{2} - b_{1})\| \nonumber \\
&& \qquad \qquad \qquad \qquad \qquad \qquad \leq (11\delta/20) (\|b_{3} - b_{2}\|  + \|b_{2} - b_{1}\|), \quad \forall r = 1, \dots, d. \label{eq:beam-neigh}
\end{eqnarray}
For example, suppose $[b_1,b_2]$ and $[b_2,b_3]$ are both $r_1$-beams.
If $b_3^*$ is the intersection of the line $(b_1,b_2)$ with the $r_1$-hyperplane $b_3$ belongs to, then condition (\ref{eq:beam-neigh}) implies $\|b_3 - b_3^*\| < (5/2) \delta$, so that $b_3$ is among the $5^{d-1}$ $r_1$-gridpoints closest to $b_3^*$.  

\begin{figure}[htbp]
  \centering
  \subfloat[Two $h1$-beams]{%
    \includegraphics[height = 2in]{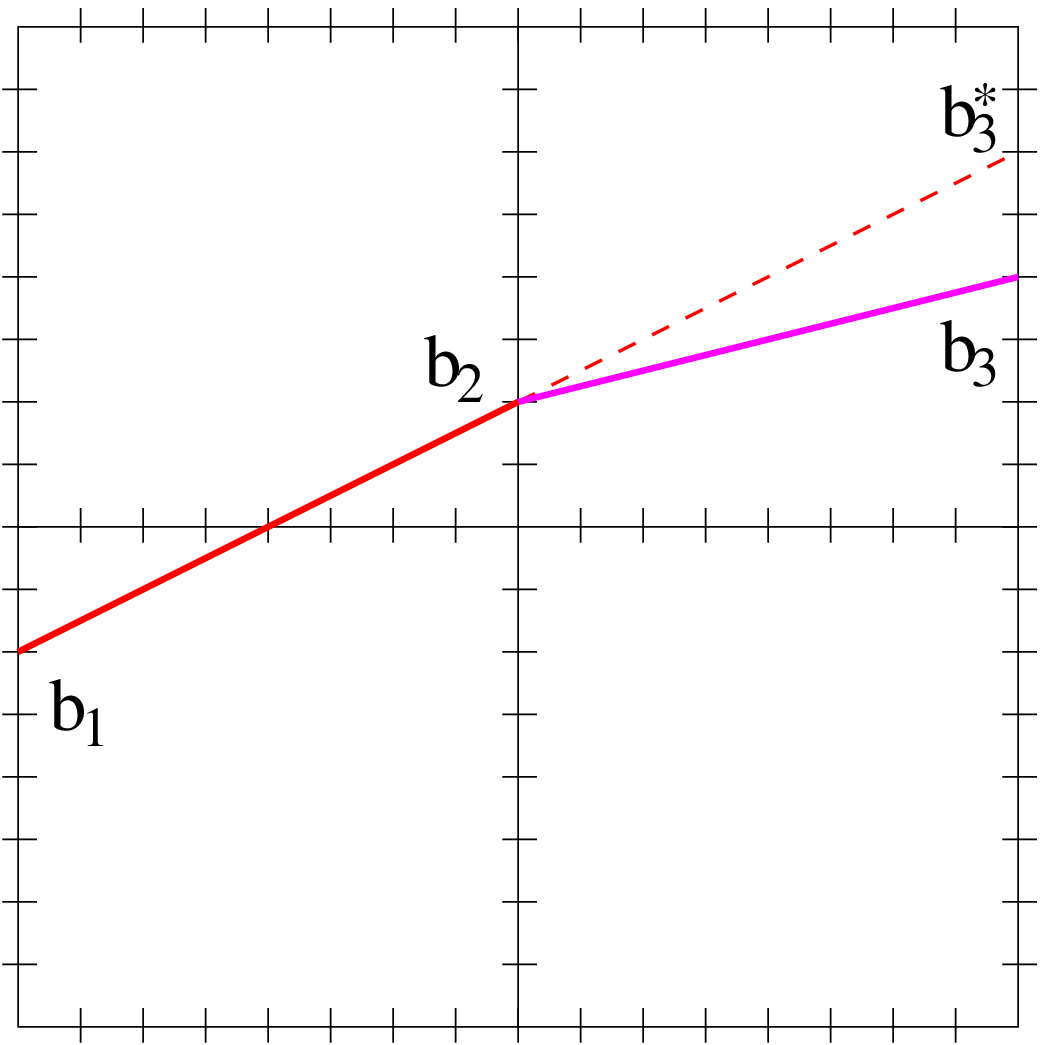}}%
  \quad%
  \subfloat[$h1h2$-beam and $r_1$-beam]{%
    \includegraphics[height = 2in]{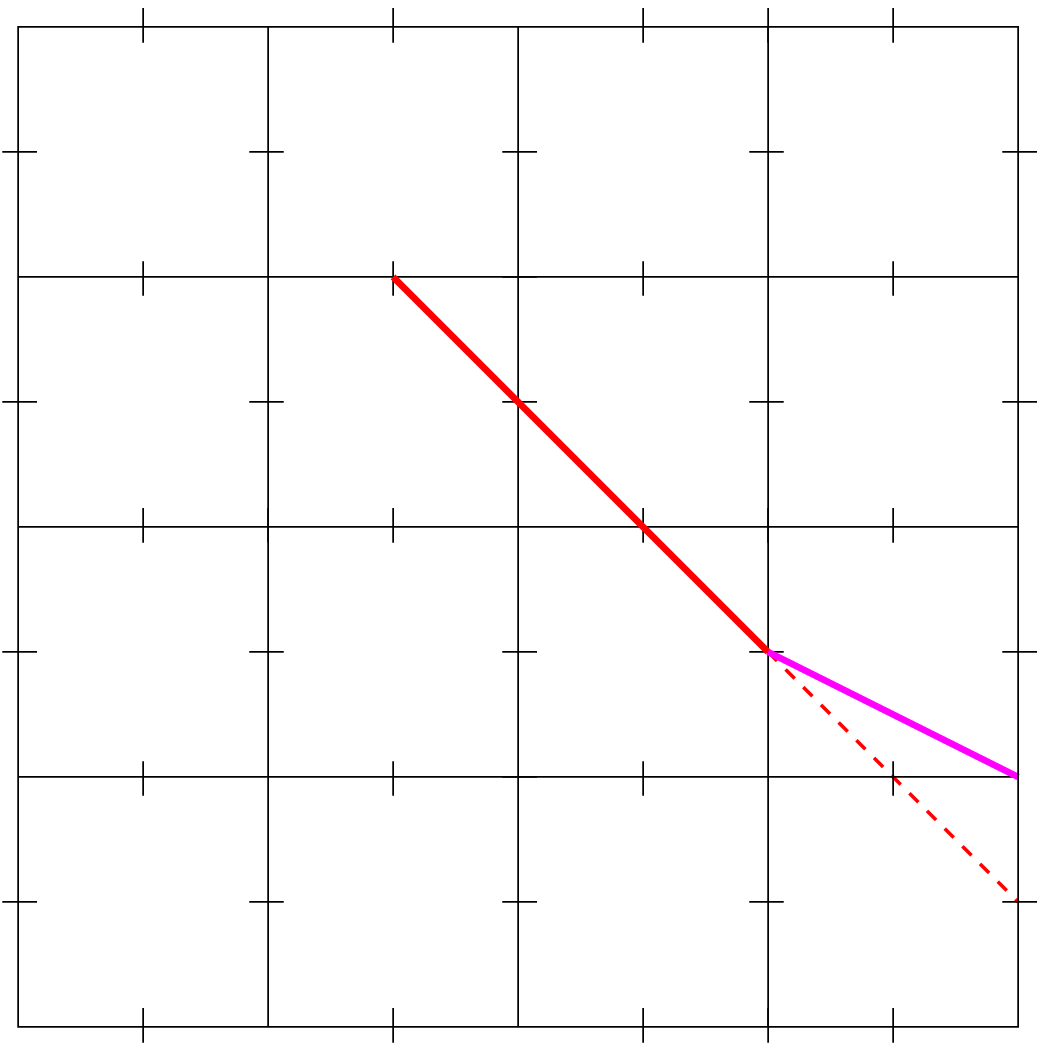}}%
  \quad%
  \subfloat[Two $h1h2$-beams]{%
    \includegraphics[height = 2in]{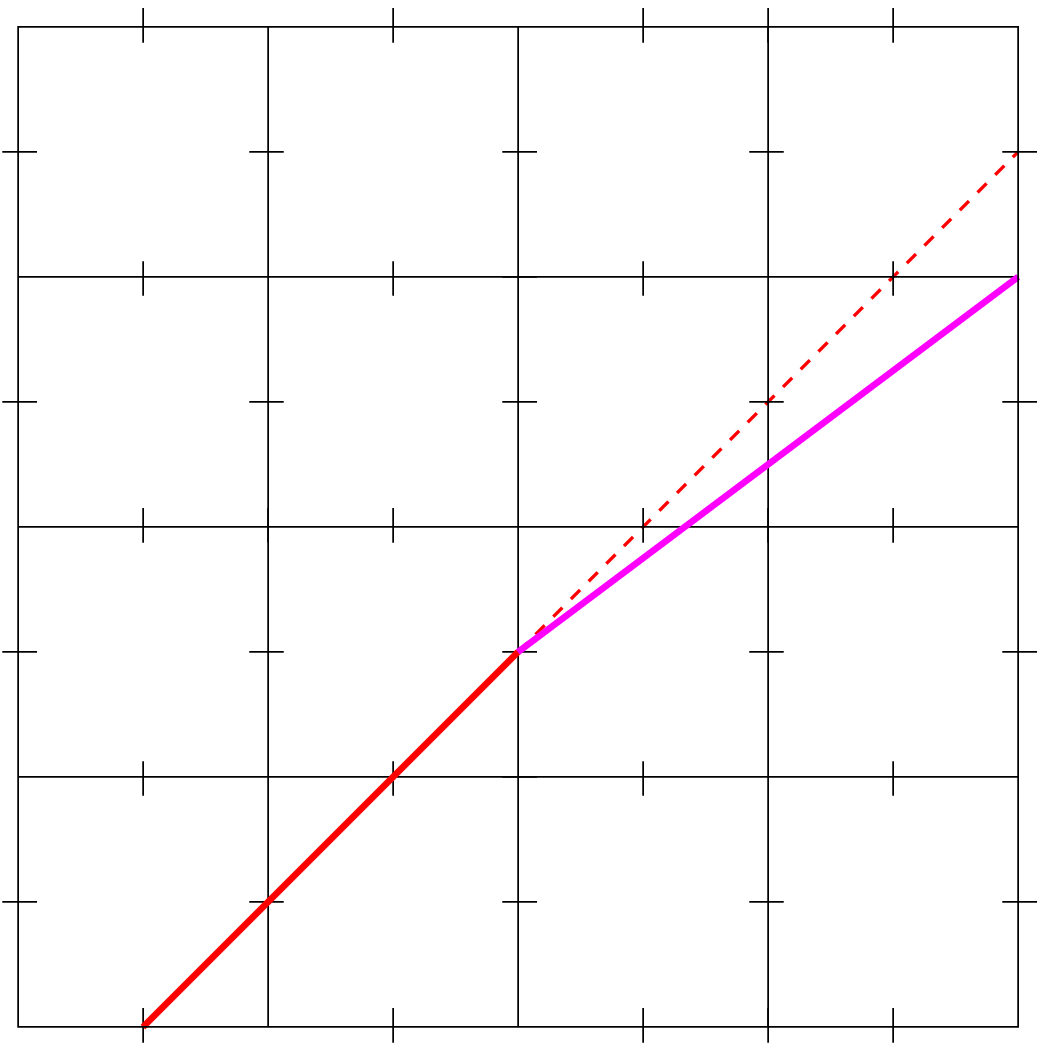}}%
  \caption{Example of 2D beams in good continuation.}
  \label{fig:2D-beams-neigh}
\end{figure} 


Let $\overline{\bbB}_{j,J}^d$ denote the resulting beam good-continuation network at scale $j$.   

\begin{lemma}
\label{lem:beam-count}
The number of nodes in $\overline{\bbB}_{j,J}^d$ is of order $O(d 2^{2(d-1)J - (3d-4)j} (1 + d 2^{2j - J}))$.
Moreover, all nodes have at most $2 d \cdot 7^{d-1}$ neighbors, with most nodes having at most $2 \cdot 5^{d-1}$ neighbors.
\end{lemma}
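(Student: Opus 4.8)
The plan is to count the two families of beams separately and then to bound the degree by exploiting the near‑collinearity that (\ref{eq:beam-neigh}) imposes on a pair of neighbouring beams. For the node count I would first record that at scale $j$ the admissible gridpoints have spacing $\delta = 2^{j-J}$ and that $\Delta/\delta = 2^{J-2j}$ is a positive integer (this is where $j\le J/2$ enters), so that an $r$‑gridpoint has its $r$th coordinate among the $2^{j}+1$ multiples of $\Delta$ and its remaining $d-1$ coordinates among the $2^{J-j}+1$ multiples of $\delta$, giving $O(2^{(d-1)J-(d-2)j})$ of them for each of the $d$ coordinates $r$. An $r$‑beam is an (unordered) pair $\{b_1,b_2\}$ of such gridpoints on opposite sides of a common $\Delta$‑hypercube in the $r$th direction; once $b_1$ is fixed, the $r$th coordinate of $b_2$ differs from that of $b_1$ by exactly $\Delta$ up to sign and, by (\ref{eq:h-beam}), each of its remaining $d-1$ coordinates lies within $\Delta+\delta$ of the corresponding coordinate of $b_1$, i.e. among $O(\Delta/\delta)=O(2^{J-2j})$ values, so there are $O((2^{J-2j})^{d-1})$ choices for $b_2$; multiplying and summing over $r$ produces $O(d\,2^{2(d-1)J-(3d-4)j})$ $r$‑beams. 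For an $r_1r_2$‑beam I would fix the $r_1$‑gridpoint $b_1$ ($O(2^{(d-1)J-(d-2)j})$ choices) and then count the $r_2$‑gridpoints $b_2$: by (\ref{eq:hh-beam-1}) the $r_2$th coordinate of $b_2$ is one of $O(1)$ multiples of $\Delta$; having fixed it, (\ref{eq:hh-beam-2}) confines the $r_1$th coordinate of $b_2$ to within $\delta$ of two prescribed values, hence $O(1)$ choices; and for each of the $d-2$ remaining coordinates, (\ref{eq:hh-beam-1})--(\ref{eq:hh-beam-3}) force a displacement $<2\Delta+\delta$, hence $O(\Delta/\delta)=O(2^{J-2j})$ choices. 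This gives $O((2^{J-2j})^{d-2})$ choices for $b_2$ and, summing over the $O(d^{2})$ ordered pairs $(r_1,r_2)$, $O(d^{2}\,2^{(2d-3)J-3(d-2)j})$ such beams. Adding the two contributions and factoring out $d\,2^{2(d-1)J-(3d-4)j}$ (the two exponents differ by $2j-J$) yields the announced count $O(d\,2^{2(d-1)J-(3d-4)j}(1+d\,2^{2j-J}))$.

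For the degree I would fix a beam $B$, note that a neighbour shares exactly one endpoint with $B$, and bound by symmetry the number of beams of the form $[b_2,b_3]$ in good continuation with $B=[b_1,b_2]$ (then double). Writing $u=b_2-b_1$, condition (\ref{eq:beam-neigh}), read coordinate by coordinate, says precisely that $b_3-b_2$ is nearly parallel to $u$, with relative angular error $O(\delta/\Delta)$ since both segment lengths are of order $\Delta$. The endpoint $b_3$ lies on at least one coordinate hyperplane $\{x_r=m\Delta\}$; for each of the $d$ possible values of $r$, near‑collinearity together with the fact that a beam at scale $j$ has length between $\Delta$ and $2\Delta$ forces $b_{r,3}$ to be the single multiple of $\Delta$ reached from $b_{r,2}$ in the direction $u$ (and forbids any admissible $b_3$ on an $r$‑hyperplane when $|u_r|$ is too small), and then $b_3$ must fall within $O(\delta)$ of the point $b_3^*$ where the ray out of $b_2$ along $u$ meets that hyperplane. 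Exactly as in the computation displayed after (\ref{eq:beam-neigh}) --- which gives $\|b_3-b_3^*\|<(5/2)\delta$ when both beams are $r$‑beams --- the slightly larger lengths permitted by the other types yield $\|b_3-b_3^*\|<(7/2)\delta$ in general, so $b_3$ is one of at most $7^{d-1}$ gridpoints; summing over the $d$ choices of $r$ and doubling gives $2d\cdot 7^{d-1}$ for every node.

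For the improved bound I would observe that the $r$‑beams dominate the node count (the $r_1r_2$‑beams contribute a factor $2^{2j-J}<1$) and that all but an $O(\delta/\Delta)=O(2^{2j-J})=o(1)$ fraction of the $r$‑beams have $b_1$ and $b_2$ each on a single coordinate hyperplane and direction bounded away from the $45^\circ$ cone; for such $B$ only a plain $r$‑beam can continue it past $b_2$ (an $r_1r_2$‑beam has its two largest coordinate increments nearly equal, incompatible with a near‑axial direction), so the tighter estimate $\|b_3-b_3^*\|<(5/2)\delta$ applies, $b_3$ ranges over at most $5^{d-1}$ gridpoints, and doubling for the endpoint $b_1$ gives $2\cdot 5^{d-1}$.

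The main obstacle is the passage from (\ref{eq:beam-neigh}) to the quantitative confinement $\|b_3-b_3^*\|=O(\delta)$: this is the analogue of the text's displayed computation and has to be run case by case over the finitely many combinations of types for $[b_1,b_2]$ and $[b_2,b_3]$, each time using that the two beam lengths lie between $\Delta$ and $2\Delta$ so that the $O(\delta\Delta)$ slack on the cross products in (\ref{eq:beam-neigh}) translates into an $O(\delta)$ transverse error, and keeping careful track of which coordinate hyperplane $b_3$ can possibly lie on so that the enumeration over $r$ really involves only $d$ choices. Making the word ``most'' precise is then just the elementary counting that off‑generic beams form an $O(2^{2j-J})$ fraction; everything else is bookkeeping.
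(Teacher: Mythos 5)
Your proposal is correct and follows essentially the same route as the paper: the node count is the identical gridpoint-then-endpoint enumeration (with the $r_1r_2$-beams contributing the extra $d\,2^{2j-J}$ factor), and the degree bound is the same confinement argument pinning $b_3$ within $O(\delta)$ of the collinear point $b_3^*$, with the $(5/2)\delta$ versus $(7/2)\delta$ dichotomy giving $5^{d-1}$ versus $7^{d-1}$ choices and the off-$45^\circ$ $r$-beams supplying the ``most nodes'' refinement. The case-by-case verification you defer as bookkeeping is exactly what the paper's proof carries out (including the intermediate bound $|b_{r_3,1}-b_{r_3,2}|>|b_{r_2,1}-b_{r_2,2}|-6\delta$ needed for the $r_2r_3$-beam case), so nothing essential is missing.
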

\begin{proof}
The proof of Lemma \ref{lem:beam-count} is in the Appendix.\end{proof}

Just as we did for beamlets, to each beam $B$ we associate a tubular region:
$$R(B) = \{\bx \in [0,1]^d: \min_{\by \in B} \|\bx - \by\| \leq \delta\}.$$

\begin{theorem}
\label{th:curve-covering}
Fix $\alpha \in (1,2]$ and $\lambda,\kappa>0$.
There is a universal constant $K$ such that, when $J$ is large enough and $j \geq j(\alpha,\beta) + K$, to each curve $\gamma \in \Gamma(\alpha,\lambda,\kappa)$ corresponds a path $\pi_{j}(\gamma)$ in $\overline{\bbB}_{j,J}^d$ chaining at most $\lambda 2^j + 2$ beams such that $\gamma$ is included in $R(\pi_j)$.
\end{theorem}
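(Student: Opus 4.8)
\emph{Strategy.} The plan is to build the covering chain by hand, marching along $\gamma$ at scale $\Delta=2^{-j}$ and putting down one beam per $\Delta$-slab the curve crosses. The first step is to cash in the hypothesis: since $\kappa\Delta^\alpha/\delta=\kappa\,2^{J-(\alpha+1)j}\le 2^{-(\alpha+1)K}$ for $j\ge j(\alpha,\kappa)+K$, choosing the universal constant $K=K(d,\alpha)$ large enough lets us assume $\kappa\Delta^\alpha\le\epsilon\,\delta$ for a small $\epsilon$ to be fixed along the way, and hence, by Lemma~\ref{lem:beam_taylor_derivative}, also $2\kappa\Delta^{\alpha-1}\le\epsilon\,\delta/\Delta$. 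In words: at scale $\Delta$ a curve in $\Gamma(\alpha,\lambda,\kappa)$ is almost a line segment whose direction drifts, per $\Delta$-step, by much less than one grid spacing $\delta$. I also record that $j\le J/2$ keeps $\delta/\Delta=2^{2j-J}$ small.

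\emph{Construction.} I would cut $[0,\length(\gamma)]$ at the successive arclengths $0=s_0<s_1<\cdots$ at which the coordinate $r=r(i)$ that is largest in $\gamma'(s_i)$ completes a crossing of a $\Delta$-slab (so $|\gamma_{r}(s_{i+1})-\gamma_{r}(s_i)|=\Delta$); by Lemma~\ref{lem:beam_taylor_derivative} this $r$ stays essentially largest on $[s_i,s_{i+1}]$, so the piece has arclength in $[\Delta,\sqrt d\,\Delta]$ and its transverse displacement $D_i:=\gamma_\perp(s_{i+1})-\gamma_\perp(s_i)$ satisfies $\|D_i\|\le\Delta$ and $\|D_{i+1}-D_i\|\le C_d\,\kappa\Delta^\alpha\le C_d\,\epsilon\,\delta$. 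On such a ``dominant-$r$ stretch'' I choose lattice transverse increments $v_i\in\delta\bbZ^{d-1}$ by the digital-straight-segment rule (pick each $v_i$ so that $\sum_{l\le i}v_l$ is the $\delta$-lattice point nearest $\gamma_\perp$), start from a lattice point $b_0$ within sup-distance $\delta/2$ of $\gamma$ on the entry hyperplane, and set $b_{i+1}=b_i+\Delta e_{r(i)}+v_i$, $B_i=[b_i,b_{i+1}]$; a direct check of~(\ref{eq:h-beam}) shows $B_i$ is a legitimate $r(i)$-beam ($r(i)$-displacement exactly $\Delta$, transverse displacement $\|v_i\|\le\|D_i\|+\delta\le\Delta+\delta$). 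At the finitely many arclengths where $r$ changes I splice in one transition segment, which one verifies satisfies~(\ref{eq:hh-beam-1})--(\ref{eq:hh-beam-3}) because the curve is in near-$45^\circ$ position there (its turn at scale $\Delta$ being only $O(\epsilon\delta/\Delta)$) and which also moves us from an $r$-hyperplane to an $r'$-hyperplane. Since every piece carries arclength $\ge\Delta$ there are at most $\lceil\length(\gamma)/\Delta\rceil\le\lambda2^j+1$ of them; extrapolating the (at most one) partial piece at each end out to the nearest hyperplane adds at most two beams, for the claimed total $\lambda2^j+2$.

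\emph{The two conclusions.} For $\gamma\subset R(\pi_j(\gamma))$: on $[s_i,s_{i+1}]$ Lemma~\ref{lem:beam_taylor2} puts every $\gamma(s)$ within $2\kappa(\sqrt d\,\Delta)^\alpha\le C_d'\epsilon\delta$ of the chord $[\gamma(s_i),\gamma(s_{i+1})]$, and that chord lies within $\delta/2+C_d\epsilon\delta$ of $B_i$ (the digital partial sums track $\gamma_\perp$ to within $\delta/2$, the distance of two segments being governed by that of their endpoints), so for $\epsilon$ small the sum is $<\delta$ and $\gamma\subset\bigcup_iR(B_i)=R(\pi_j(\gamma))$. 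For $\pi_j(\gamma)$ being an honest path I must verify the neighbour condition~(\ref{eq:beam-neigh}) for consecutive $B_{i-1}=[b_{i-1},b_i]$, $B_i=[b_i,b_{i+1}]$. Within a stretch, $v:=b_{i+1}-b_i$ and $u:=b_i-b_{i-1}$ differ by $v-u=v_i-v_{i-1}\in\{0,\pm\delta e_p\}$, and then $u_r v-v_r u=\pm\delta(u_r e_p-[r=p]\,u)$ has sup-norm $\le\delta\|u\|_\infty$, which is $\le(11\delta/20)(\|u\|_\infty+\|v\|_\infty)$ precisely because $\delta/\Delta$ is small --- this is how the constant $11/20$ is calibrated. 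So everything rests on the digital rule turning by at most one lattice step from one piece to the next, which holds because $\gamma_\perp$ is essentially linear on any $O(1)$ consecutive pieces (its slope varies by $O(\kappa\Delta^{\alpha-1})\ll\delta/\Delta$), so its rounded partial sums are a digital straight segment there and never reverse within two steps; the transition beams are checked against~(\ref{eq:beam-neigh}) the same way.

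\emph{Main obstacle, and the corollary.} The delicate point is this last verification: the gridded path must shadow a genuinely curved target while turning by no more than the single lattice quantum $\delta/\Delta$ per step, and~(\ref{eq:beam-neigh}) leaves essentially no room --- a two-step turn already fails it. This is affordable only because the extra constant $K$ drives $\gamma$'s scale-$\Delta$ turning ($\sim\kappa\Delta^\alpha$) far below the resolution $\delta$; with merely $j\ge j(\alpha,\kappa)$ the curvature term is comparable to $\delta$ and the condition can fail. (An alternative route is to restrict $\gamma$ to a dominant-$r$ stretch, view it as a graph over the $x_r$-axis in $\Hold^{1,d-1}(\alpha,C_d\kappa)$, invoke Theorem~\ref{th:graph_cover}, and reconcile the resulting network with the beam network --- but the bookkeeping is comparable.) Finally, Theorem~\ref{th:beamlet-curve-covering} drops out of Theorem~\ref{th:curve-covering}: cut each beam of $\pi_j(\gamma)$ at its at most $2d$ crossings of $\Delta$-hyperplanes into that many beamlets --- which only enlarges the tubes --- and note that consecutive beamlets inherit~(\ref{eq:beamlet-neigh}) from~(\ref{eq:beam-neigh}).
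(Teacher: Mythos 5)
Your construction is essentially the paper's: cut $\gamma$ at the arclengths where its dominant coordinate completes a $\Delta$-slab crossing, round the cut points to nearby gridpoints, verify that the resulting segments are $r$- or $r_1r_2$-beams and that consecutive ones satisfy (\ref{eq:beam-neigh}), bound the count by $\length(\gamma)/\Delta$ plus end extensions, and cover via Lemma~\ref{lem:beam_taylor2}. Your ``digital straight segment'' rule is nearest-gridpoint rounding of $\gamma_\perp(s_i)$ in disguise, and your reduction of Theorem~\ref{th:beamlet-curve-covering} to Theorem~\ref{th:curve-covering} matches the paper's. (One bookkeeping caveat: in the paper the transition segments are not \emph{spliced in} as extra beams --- the $B_i$ whose endpoints lie on hyperplanes of different types simply \emph{is} the $r_1r_2$-beam --- which is what keeps the total at $\lambda 2^j+2$.)

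The one step where your justification does not hold as stated is the claim that consecutive transverse lattice increments satisfy $v_i-v_{i-1}\in\{0,\pm\delta\be_p\}$ because ``rounded partial sums of a near-linear sequence never reverse within two steps.'' Nearest-point rounding does not guarantee this: if a transverse coordinate of $\gamma$ hovers near a half-grid value, taking values $(1/2-\eta)\delta$, $(1/2+\eta)\delta$, $(1/2-\eta)\delta$ at three consecutive cut points with $\eta$ arbitrarily small (which is compatible with arbitrarily small curvature, hence with any choice of $K$), the rounded values are $0,\delta,0$ and the chain zigzags by a full $\delta$ in that coordinate. For such a pair $u=\Delta\be_{r_1}+\delta\be_p$, $v=\Delta\be_{r_1}-\delta\be_p$ the left side of (\ref{eq:beam-neigh}) at coordinate $r_1$ is $\|\Delta(v-u)\|=2\Delta\delta$, against a budget of about $(11/10)\Delta\delta$, so the combinatorial shortcut cannot close the argument by itself. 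The paper does not argue through any constraint on the lattice increments: it first establishes the near-collinearity inequality for the raw curve increments $\gamma(s_{i+1})-\gamma(s_i)$ --- where the error is $O(\kappa\Delta^\alpha)$, made negligible by the constant $K$ --- and then transfers to the gridpoints by the triangle inequality using $\|b_i-\gamma(s_i)\|\le\delta/2$ from (\ref{eq:b-triangle}); all of the slack in the constant $11/20$ is spent absorbing those $O(\delta)$ endpoint perturbations. You correctly identified this as the delicate point, but the specific mechanism you propose for it needs to be replaced by (or reconciled with) that perturbation argument.
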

\begin{proof}
The proof of Theorem \ref{th:curve-covering} is in the Appendix.
\end{proof}

\begin{figure}[htbp]
  \centering
\includegraphics[height = 2.5in]{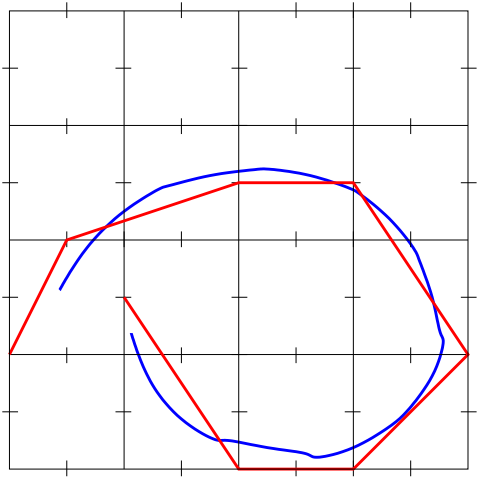}
  \caption{Example of a curve (blue) approximation by a chain of 2D beams (red) in good continuation.}
  \label{fig:2D-beam-approx}
\end{figure}

We now explain how Theorem \ref{th:curve-covering} implies Theorem \ref{th:beamlet-curve-covering}.
First, any beam may be approximated within distance $\delta_0$ by a chain of at most $2d$ beamlets at the same scale.
Indeed, a beam touches at most $2d$ $\Delta$-hyperplanes (at most $d+1$ for a typical $r$-beam); on each one, select a $(\Delta,\delta_0)$-gridpoint closest to the beam.
By successively connecting those gridpoints with line-segments, a chain of beamlets is born.
Therefore, a chain of beams of length at most $\lambda 2^j + 2$ may be approximated by a chain of beamlets of length at most $(2d)(\lambda 2^j + 2)$.
That the successive beamlets in such a chain are in good continuation according to (\ref{eq:beamlet-neigh}) comes from the fact that the beams themselves are in good continuation according to (\ref{eq:beam-neigh}), how beamlets are chained to approximate a beam, and the triangle inequality, in parallel to how Claim 3 is established in the proof of Theorem \ref{th:curve-covering}.

\begin{remark}
Both $\overline{\bbB}_{j,J}^d$ and $\bbG_{j,J}^{1,d}$ provide $\eps$-nets for H\"older curves embedded in the $d$-dimensional unit hypercube, with $\eps$ of order $2^{j-J}$. 
Which one is more economical?  
In terms of number of nodes, $\log_2(|\bbG_{j,J}^{1,3}|) \sim 2dJ - (3d-1)j$ (Lemma \ref{lem:hold-count}), while $\log_2(|\overline{\bbB}_{j,J}^d|) \sim 2(d-1)J - (3d-4)j$ (Lemma \ref{lem:beam-count}); the latter is smaller for all relevant scales $j \leq J/2$.  
Perhaps more importantly, $\bbG_{j,J}^{1,d}$ is substantially more connected, with most nodes with $2 \cdot 36^d$  neighbors (2592 in 2D; 93312 in 3D), compared to $\overline{\bbB}_{j,J}^d$, with most nodes having $2 \cdot 5^d$ neighbors (50 in 2D; 250 in 3D). 
\end{remark}

\section{Numerical Experiments}
\label{sec:experiments}

We perform some numerical experiments showing the power of approximation networks built on good-continuation principles.
Specifically, we compare the filamentary content of simulated 3D datasets with a variety of beamlet-based algorithms of our own creation.
Using software developed in \cite{shas} and a basic notion of good-continuity as introduced in Section \ref{sec:beamlets} is enough to outperform simpler algorithms disregarding any spatial information, such as introduced in \cite{spie-beamlets}.

We consider three situations illustrated in Figure \ref{fig:input_images}, where each column corresponds to a different case, and for each case the goal is to distinguish between the top and bottom situations.  
In setting (a), we compare a random point cloud (top) with a set of random filaments of different lengths, orientations and curvatures (bottom).
In setting (b), we compare a set of random filaments (top) with a set of random filaments constrained to pass through a small number of hubs (bottom). 
In setting (c), we compare a set of short random filaments (top) with a set of long random filaments (bottom), all filaments oriented in the direction of the first coordinate.

\begin{figure}[htbp]
\centering
\includegraphics[trim = 0mm 0mm 0mm 0mm, clip=true,width=0.8\textwidth]{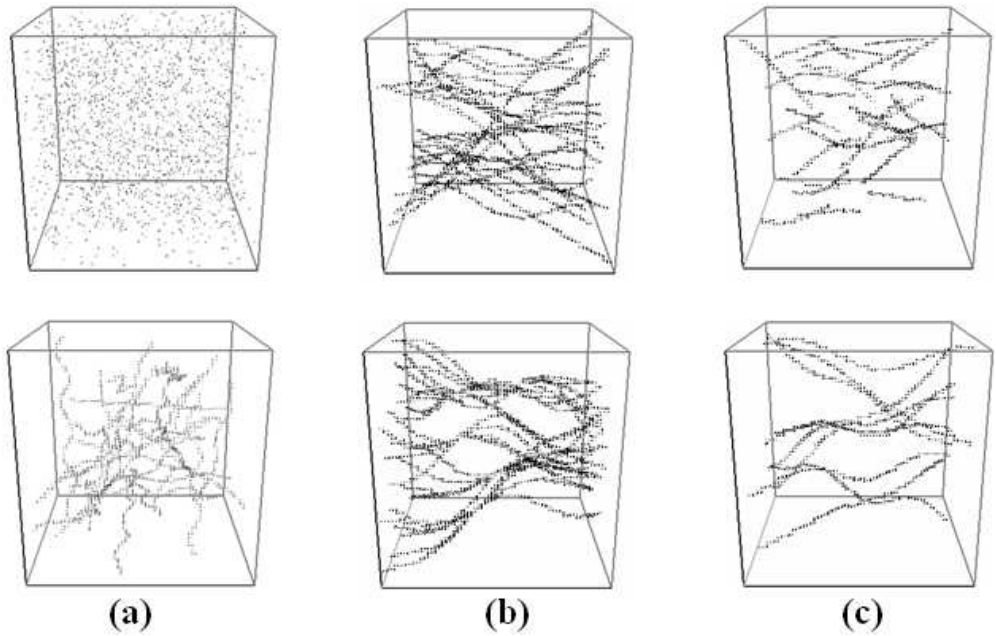}
\caption{Simulated datasets: (a) random point cloud (top) vs. random filaments (bottom); (b) random filaments (top) vs. random filaments with hubs (bottom); (c) random short filaments vs. random long filaments.}
\label{fig:input_images}
\end{figure}

Each dataset is a pixel array of size $64^3$.
The images are corrupted with a certain amount of additive white Gaussian noise (see Figure \ref{fig:noisy}), calibrated so the algorithm introduced in \cite{spie-beamlets} (and described in Section \ref{sec:LSI}) is powerless, i.e. essentially useless at distinguishing between the top and bottom settings. 
All filaments in these datasets are synthesized using trigonometric functions, each with randomly selected location, amplitude, frequency and phase shift.
   
\begin{figure}[htbp]
\centering
\begin{tabular}{cc}
\includegraphics[trim = 0mm 0mm 0mm 0mm, clip=true,width=0.45\textwidth]{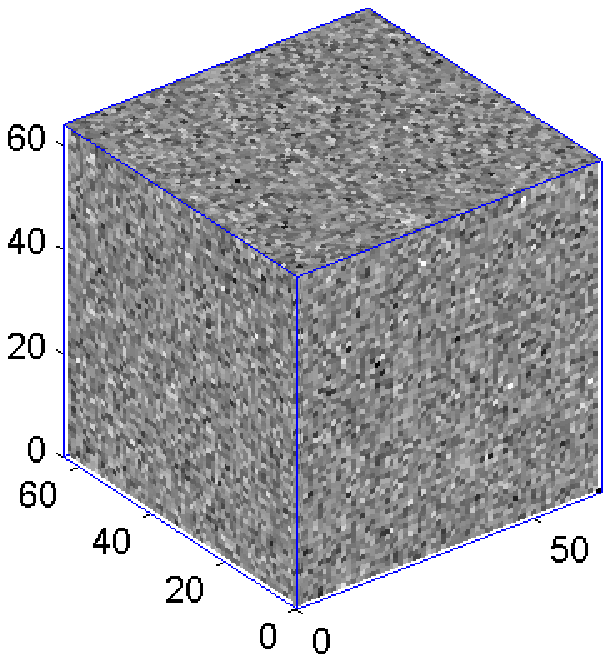} & \includegraphics[width=0.45\textwidth,trim = 0mm 0mm 0mm 0mm, clip=true]{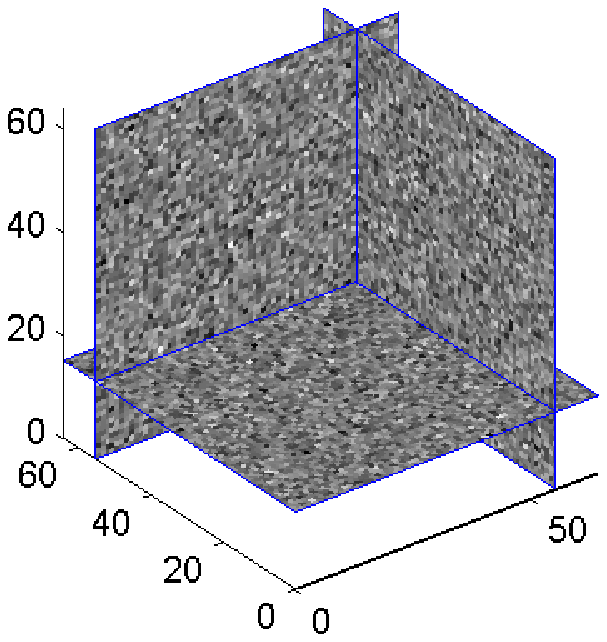}\\
(a) Surface of the noisy data volume & (b) Orthogonal slices through the noisy data
\end{tabular}
\caption{Noisy version of the simulated data in the lower panel of column (a) in Figure \ref{fig:input_images}.}
\label{fig:noisy}
\end{figure}

Though we introduce a variety of statistics, the workflow is the same:
\begin{enumerate} \setlength{\itemsep}{1pt}
\item {\it Compute the beamlet transform.}  
This amounts to computing all the beamlet coefficients, where a beamlet coefficient is defined as the line-integral of the image along the beamlet \cite{2D-beamlets, 3D-beamlets}.
\item {\it Thresholding of the beamlet coefficients.}  
Beamlets with a large coefficient are suspected to be intersecting with a filament, and therefore informative since our objective involves detecting filaments.
We therefore focus on those beamlets with large coefficients by discarding those with low coefficients \cite{2D-beamlets, 3D-beamlets}.
The choice of threshold is not determined in advance, rather a variety of thresholds are considered within some range.    
\item {\em Construction of a good-continuation network (GCN) for each scale.}  
Based only on the beamlets surviving thresholding, neighboring relationships as introduced in Section \ref{sec:beamlets} are considered.
\item {\em Extraction of some relevant network statistics.}
We extract a number of statistics from the beamlet network that are sensitive to different degrees and kinds of filamentarity such as node and edge cardinality or connected components counts, centrality measures or even more sophisticated statistics based on path search in the network.
\end{enumerate}
Our software is based on the beamlet transform as implemented in \cite{shas} and graph algorithms from the LEDA library \cite{leda}.
Our code is available online \cite{code}. 

In what follows, we will say that a test or method is `powerful' if it faithfully (more than 95\% of the time) distinguishes between the top and bottom settings.

\subsection{Number of Edges vs. Number of Nodes}
\label{section:edges_by_nodes}
Our simplest algorithm looks at how the number of edges and the number of nodes vary together as functions of the threshold used.

Consider column (a) of Figure \ref{fig:input_images}, where we compare a random point cloud (top) with a set of filaments with random lengths, orientations and curvatures (bottom).
In Figure \ref{fig:edges_by_nodes}, the number of edges is plotted as a function of the number of nodes for both top and bottom datasets.
As expected, the curve corresponding to the random filaments rests above the curve corresponding to the random point cloud, since the GCN is more connected in the former setting.
From our simulation studies, we found that this statistic is not powerful at per pixel signal-to-noise ratios (SNRs) below 0.8 for this specific situation.
The dataset with filaments contains 20 of them with random lengths in the range $[10,64]$ (in number of pixels).

\begin{figure}[htbp]
\centering
\includegraphics[trim = 0mm 0mm 0mm 5mm, clip=true,width=0.46\textwidth,height=0.35\textwidth]{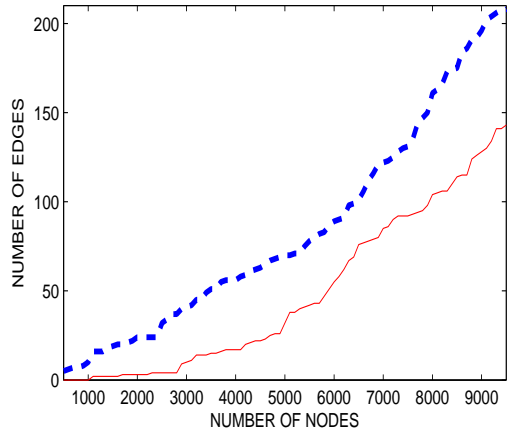}
\caption{Number of edges vs. number of vertices for the random point cloud (red, solid curve) and the dataset of random filaments (blue, dashed curve) from column (a) of Figure \ref{fig:input_images}. }
\label{fig:edges_by_nodes}
\end{figure}

\subsection {Vertex Betweeness}
Centrality attributes are well-known from network analysis \cite{networks_structure} and used for example in social networks studies \cite{network_social}.  The \emph{betweeness} of vertex $v$ is defined as
$$
C_B(v) =\sum_{s\neq v \neq t}\frac{\rho_{st}(v)}{\rho_{st}},
$$
where $s,t$ run through all vertices except $v$, $\rho_{st}$ is the number of longest paths from $s$ to $t$, and $\rho_{st}(v)$ the number of longest paths from $s$ to $t$ that pass through a vertex $v$.
(Note that shortest paths are sometimes used instead.) 

Consider column (b) in Figure \ref{fig:input_images}, where the top image contains randomly distributed filaments and the bottom image also contains random filaments but with hubs.  
The number of filaments and their individual characteristics are the same in both cases. 
In Figure \ref{fig:betweeness_cum} we plot the number of vertices with betweeness exceeding a given threshold for both datasets.
As expected, the curve corresponding to the random filaments with hubs rests above the curve corresponding to the random filaments without hubs, since in the former case the hubs translate into vertices in the GCN with large betweeness.
From our simulation studies, we also found this statistic not powerful at SNRs below 0.8 for the type of filaments chosen here.
The data set without hubs (top) contains 20 filaments with horizontal orientations and random lengths in range $[60,63]$. The data set with hubs contains 5 groups of 4 filaments with horizontal orientations and length 64, then all 4 filaments in the group have common filamentarity region of length 3 (hub).  

\begin{figure}[]
\centering
\includegraphics[trim = 0mm 0mm 0mm 0mm, clip=true,width=0.5\textwidth]{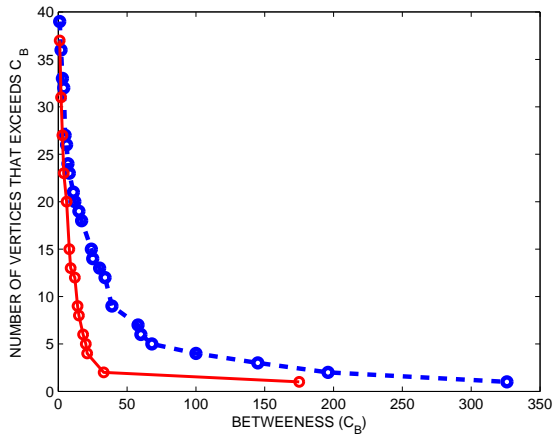}
\caption{Number of vertices whose betweeness exceeds a given threshold for the dataset of random filaments without hubs (red, solid) and the dataset of random filaments with hubs (blue, dashed) from column (b) of Figure \ref{fig:input_images}.}
\label{fig:betweeness_cum}
\end{figure}
 
Now, consider a slightly different setting with just one large hub, illustrated in the left column of Figure \ref{fig:max_betweeness_snr}.  
In the right column of the Figure \ref{fig:max_betweeness_snr} we compare the maximal betweeness of both datasets for different noise levels, which is significantly larger for the dataset with a hub.
Again, we found that the maximal betweeness statistic is not effective when the voxel-level SNR is less than $0.8$.
The data set without hub (top) contains 18 filaments with horizontal orientations and random lengths in range $[48,64]$. The data set with hub contains 18 filaments with horizontal orientations and length 64, then all filaments in the group have common filamentarity region of length 5 (hub).  

\begin{figure}[htbp]
\centering
\begin{tabular}{cc}
\includegraphics[trim = 0mm 3mm 0mm 7mm, clip=true,height=0.4\textwidth]{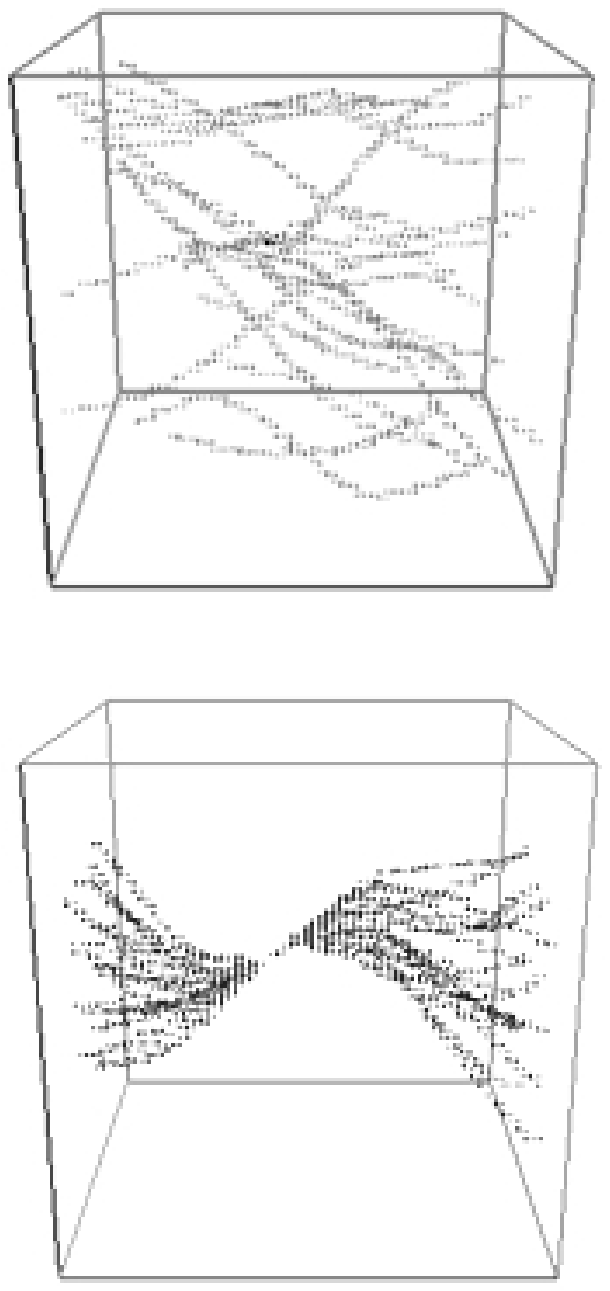} & \includegraphics[height=0.4\textwidth,trim = 0mm 3mm 0mm 0mm, clip=true]{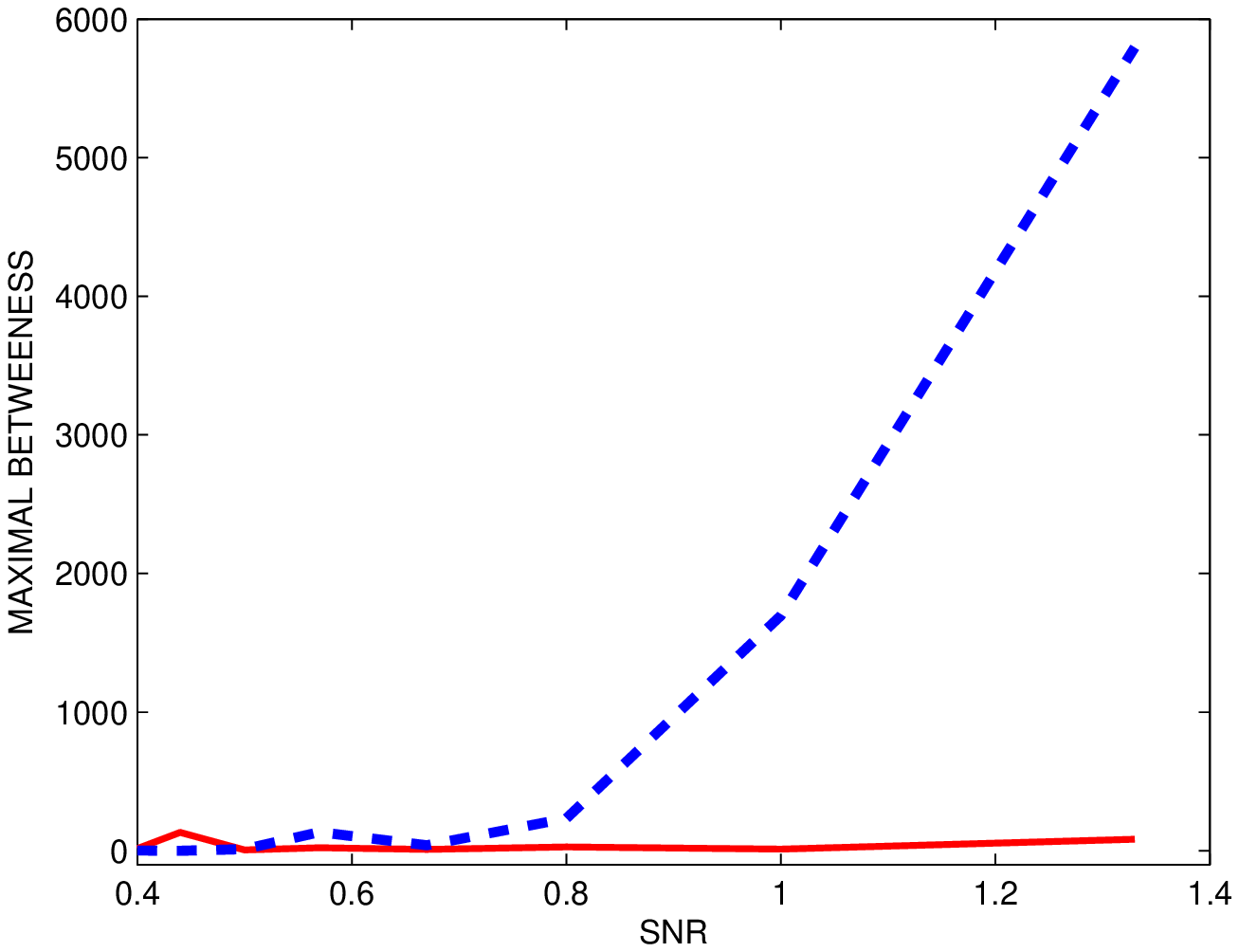}
\end{tabular}
\caption{Maximal betweeness as a function of SNR, for the dataset of random filaments without hub (red, solid) and the dataset of random filaments with a hub (blue, dashed) from the left column.}
\label{fig:max_betweeness_snr}
\end{figure}

\subsection {Filamentarity Survival Index}

For each beamlet $v$ in the network, let $\omega(v)$ denote its coefficient (or weight), i.e. the line-integral of the image along this beamlet. 
The weight of a beamlet path $p$ is then defined as the sum of the weights of the beamlets it passes through:
$$\omega(p)=\sum_{v\in p}\omega(v).$$ 
Let $P$ be the partitioning of $\bbG$ into disjoint paths defined recursively as follows:
\begin{equation}\label{eq_lp}
\left\{ \begin{array}{l l}
   p_1 = \arg\max_{p\in\bbG}\omega(p);\\ 
   p_i = \arg\max_{p\in\bbG\backslash  \{p_1,\ldots,p_{i-1} \}}\omega(p), & \quad i>1.
\end{array}\right.
\end{equation} 
We define the \emph{Filamentarity Survival Index} $D(t)$ as the fraction of paths in $P$ with weight greater than $t$.
The computation of the piecewise constant function  $D(t)$ is iterative, where at each iteration the \emph{Longest Weighted Path} (LWP) is found and removed from the network. The algorithm terminates when the network is empty.

We next define the \emph{Filamentarity Survival Ratio} (FSR) which compares the filamentary survival index of a given dataset $I$ with  the filamentary survival index a random point cloud with same energy, denoted $\tilde{I}$
$$
R(t)= \frac{\left(D_{I}\left(t\right)+\epsilon \right)}{\left(D_{\tilde{I}}\left(t\right)+\epsilon \right)},
$$
where $\epsilon$ is a small positive number, used to avoid divide-by-zero situations. 

Consider column (c) in Figure \ref{fig:input_images}, where we compare a dataset of random short filaments (top) with a dataset of random long filaments (bottom), all filaments oriented in the direction of the first coordinate.
Figure \ref{fig:FSR_1} presents the FSR curves for these two datasets.  Additionally, this figure contains the FSR curve for the case of purely random points, as in the top image in column (a) in Figure \ref {fig:input_images}.  
As expected, the FSR curve for the purely random dataset (black, dotted line) remains close to 1.  
For the dataset containing long filaments, however, the FSR (blue, dashed line) is significantly higher than 1 for certain values of $t$. 
For the dataset containing short filaments, the FSR curve (red, solid line) is found between the other two curves.
We found that the FSR statistic is powerless at distinguishing between short and long filaments (in this setting) for SNRs below 1 in this specific situation.    
The data set with short filaments (top) contains 30 filaments with horizontal orientations and lengths 20. The data set with long filaments (bottom) contains 10 filaments with horizontal orientations and lengths 60. 

\begin{figure}[htbp]
\centering
\includegraphics[width=0.6\textwidth]{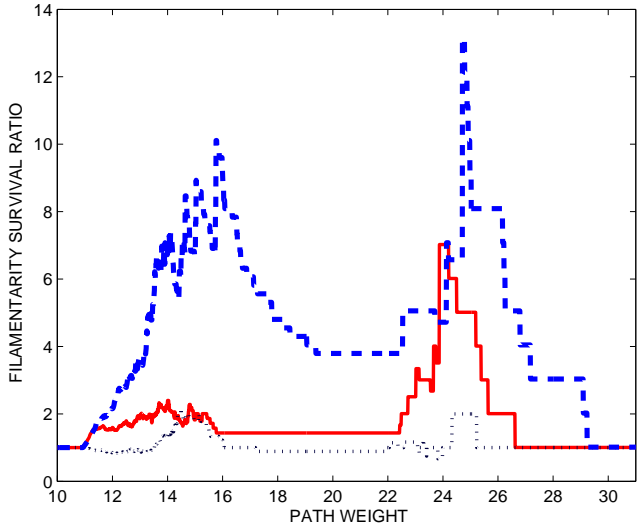}
\caption{Filamentarity Survival Ratios for the dataset of random short filaments (red, solid) and for the dataset of random long filaments (blue, dashed) in column (c), and for the random point cloud (black, dotted) in column (a) of Figure \ref{fig:input_images}.}
\label{fig:FSR_1}
\end{figure}

\subsection{Discussion}

\subsubsection{Comparison with statistics based on beamlet coefficients only}
\label{sec:LSI}

The numerical experiments we presented provide evidence that using the spatial relationship between beamlets allows for the design of algorithms that have the ability to perform well even in the case of very low SNR, where statistics based only on the beamlets coefficients fail.  
More precisely, we compared our different algorithms described above with the \emph{Log-Survival Index} (LSI) introduced in \cite{spie-beamlets}; see Table \ref{tab:summary_exp}. 
The LSI is defined as
$$
S_j(t)=\frac{\log(1+ N_j(t) )}{\log(1+N_j)},
$$
where $N_j(t)$ is the number of beamlet coefficients at the $j$-th scale that exceed $t$, and $N_j$ is the total number of beamlet coefficients at scale $j$.\\ 

\begin{table}[h]
\centering
\begin{tabular}{|c|c|c|c|}
\hline
Case  & (a) & (b) & (c) \\ \hline
Statistic & Graph connectivity &
Vertex betweeness & FSR \\ \hline
Lowest SNR & \quad 0.8\ (1.33) \quad & \quad 0.8\ (2) \quad & \quad 1\ (1.33) \quad \\ \hline
\end{tabular}
\caption{Summary of the experiments.  The numbers (resp. numbers in parentheses) in the last row correspond to the lowest SNR at which the statistic used for that particular case (resp. the LSI statistic) is still powerful.
}
\label{tab:summary_exp}
\end{table}

Figure \ref{fig:lsis} shows the behavior of this statistic for situation (a) of Figure \ref{fig:input_images}. 
The LSI curves are clearly disjoint at SNR = 4; however, they merge at SNR = 0.8, resulting in the LSI being powerless, while the statistic presented in Section 5.1 is still powerful.

\begin{figure}[htbp]
  \centering
  \subfloat[SNR = 4]{\label{fig:lsi_working}%
    \includegraphics[trim = 0mm 11mm 0mm 0mm, clip=true,width=0.4\textwidth]{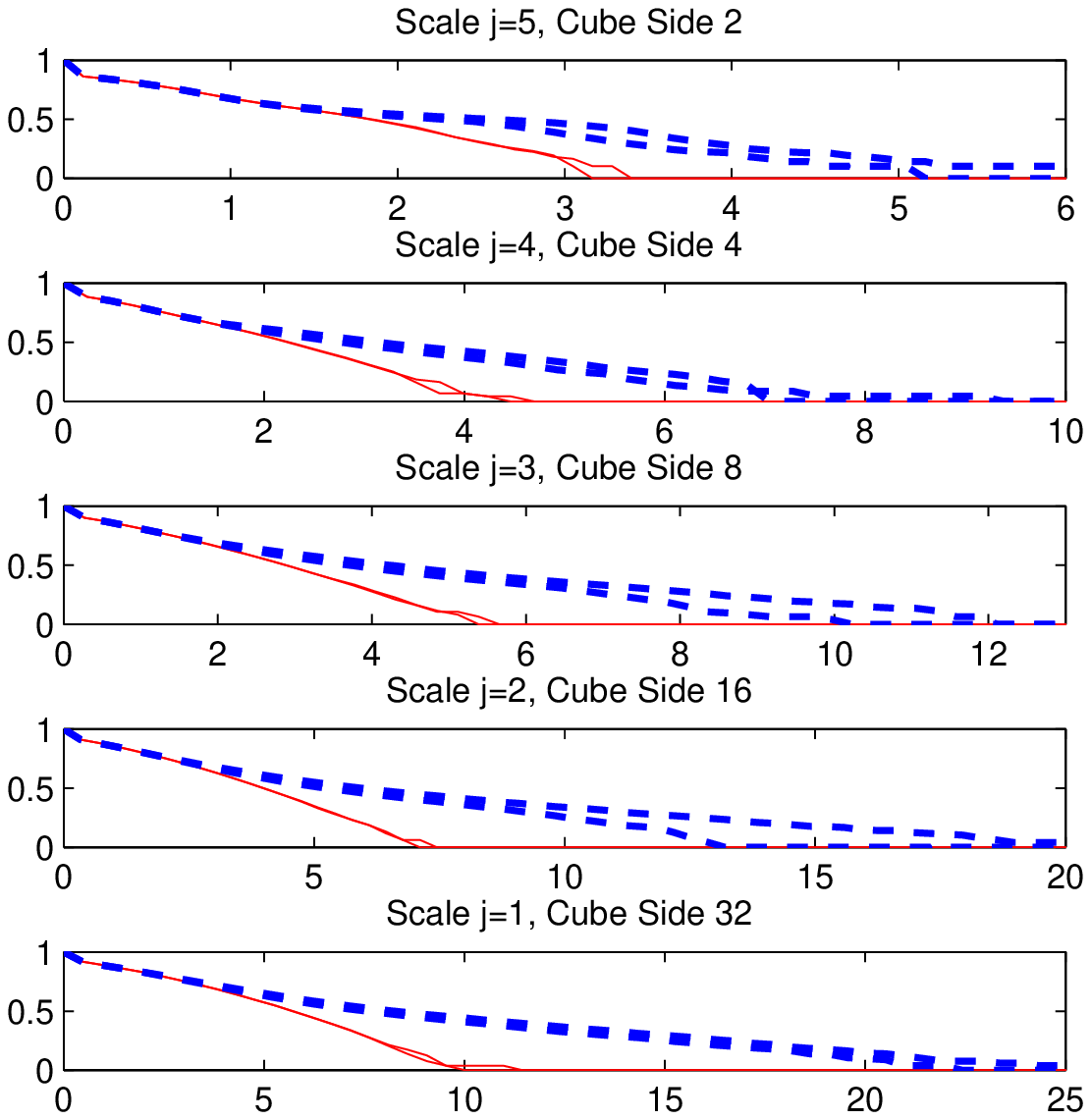}}%
  \quad%
  \subfloat[SNR = 0.8]{\label{fig:lsi_not_working}%
    \includegraphics[trim = 0mm 10mm 0mm 0mm, clip=true,width=0.4\textwidth]{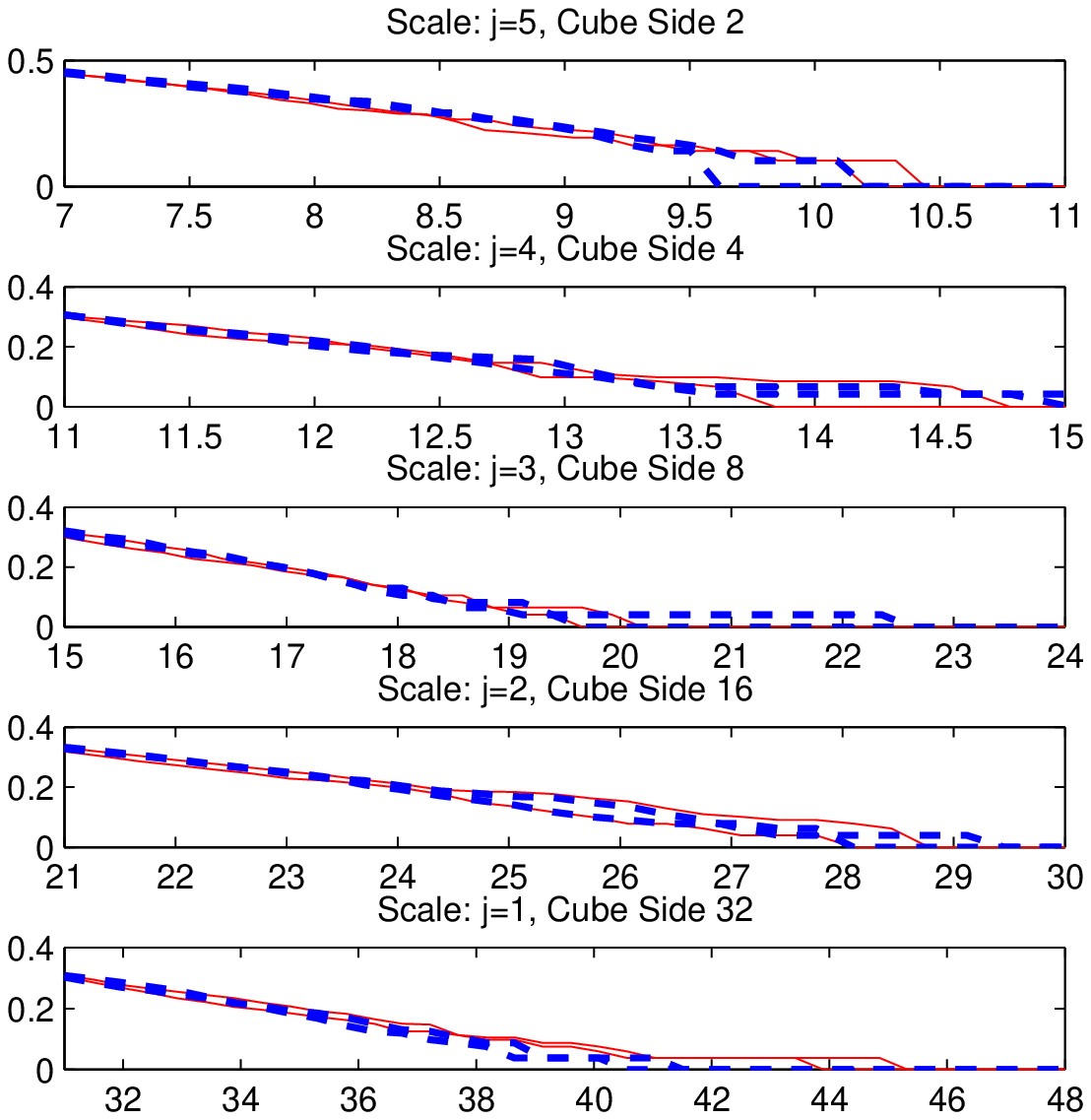}}%
  \caption{LSI curves at various scales for the random point cloud (red,  solid) and for the dataset with random filaments (blue, dashed) in column (a) of Figure \ref{fig:input_images}.}
  \label{fig:lsis}
\end{figure}

\subsubsection{Computational Complexity}

The computational burden comes from the beamlet transform. In these experiments, we used the version developed in \cite{shas}, which runs in order $O(n^5)$ flops for an $n^3$ pixel array.
The version based on the Fast Slant Stack \cite{3D-beamlets, FSS} is in theory faster, $O(n^4 \log n)$ flops, yet in practice the implementation in \cite{shas} is more precise and faster on smaller arrays as considered here.

We mention that beamlets have only been developed for 2D and 3D datasets, partly because the method suffers from the curse of dimensionality, since the number of beamlets increases exponentially with the ambient dimension; see Lemma \ref{lem:beamlet-count}.
Note also that the implementation in \cite{2D-beamlets} runs in $O(n^2 \log n)$ for an $n^2$ pixel array, so that the 3D implementations are comparatively heavier.  

\subsubsection{Computation of the Longest Weighted Paths}
Since finding the LWP is an NP-Hard problem \cite{longest-path-NP}, in the experiments described above we restricted ourselves to acyclic networks.  We assume that the foreground signal consists of a set of curves of the form $\gamma(s)=(s,\gamma_{y}(s),\gamma_{z}(s))$.
Therefore, it is possible to approximate such curves with beamlets oriented along the fist axis only, so their chaining will never induce a cycle.  This rather artificial assumption was made only for computational reasons; for general datasets one would be forced to use approximations \cite{longest-path-approx} or other strategies \cite{dp_tracking,2D-beamlets}.
We tried statistics based on connected components, which are computationally tractable, but were not able to improve on the LSI.


\subsubsection{Full multiscale analysis} 
We built a good-continuation network for each scale separately along the lines presented in Section \ref{sec:beamlets}.  
Based on Theorem \ref{th:curve-covering}, this is fine if we expect filaments of homogenous smoothness.  
If we want to test for filaments of varying smoothness, building a single good-continuation network that includes neighboring relationships across scales may be more useful.  

\vspace{.3in}
\begin{center}
{\Large \bf Appendix}
\end{center}




\subsection*{Proof of Lemma \ref{lem:hold-count}}
\label{proof:hold-count}
We first count the number of nodes $(\bm,\bh)$.
There are $\Delta^{-k}$ choices for $\bm$ and $2 \beta \delta_{|\bs|}^{-1}$ for each $h^{(\bs)}$, so for $\bh$ a total of 
$$\prod_{|\bs| \leq \afloor} 2 \beta \delta_{|\bs|}^{-1} = (2 \beta)^{c_3} \delta^{-c_3} \Delta^{c_4}.$$

For a given node $(\bm,\bh)$, $\bm$ has at most $2k$ neighbors $\bm_\star$'s in the square grid.
And for each one of them, there are at most $6^{c_3}$ $\bh_\star$'s satisfying the left part of (\ref{eq:hold_neigh}), since at each $\bs$ there are at most 6 choices. \QED

\subsection*{Proof of Theorem \ref{th:graph_cover}}
\label{proof:imm_cover}
In view of Lemma \ref{lem:local_approx}, it is enough to show that, for $\bm,\bm_\star \in \{1,\dots,\Delta^{-1}\}^k$ with $\bm_\star = \bm + \be_i$, 
$$\left|h^{(\bs)}(\bm_\star,f) - \sum_{t \leq \afloor - |\bs|}
  \frac{h^{(\bs+t \be_i)}(\bm,f)}{t!}\right| < 3.$$ 

By the triangle inequality, we have
\begin{eqnarray*}
\left|h^{(\bs)}(\bm_\star,f) - \sum_{t \leq \afloor - |\bs|}
  \frac{h^{(\bs+t \be_i)}(\bm,f)}{t!}\right| & \leq &
\left|h^{(\bs)}(\bm_\star,f) - \delta_{|\bs|}^{-1}
f^{(\bs)}(x_{\bm_\star})\right| + \\
& & \left|\delta_{|\bs|}^{-1} f^{(\bs)}(x_{\bm_\star}) - \sum_{t \leq \afloor - |\bs|}
\frac{\delta_{|\bs| + t}^{-1} f^{(\bs + t \be_i)}(x_{\bm})}{t!}\right| + \\
& & \sum_{t \leq \afloor - |\bs|} \frac{\left|\delta_{|\bs|+t}^{-1} f^{(\bs +
t \be_i)}(x_{\bm}) - h^{(\bs+t \be_i)}(\bm,f)\right|}{t!}.
\end{eqnarray*}

By definition of $h^{(\bs)}(\bm,f)$, the first term on the right handside
is bounded by 1/2 while the third term is bounded by $1/2\ \sum_{t} 1/t! \leq
\exp(1)/2 < 3/2$.
So we are left with showing that the second term is bounded by 1.
To do that, we use Lemma \ref{lem:taylor_ei} and the fact that $x_{\bm_\star,i} - x_{\bm,i} = \Delta$ to get
$$\left|f^{(\bs)}(x_{\bm_\star}) - \sum_{t} f^{(\bs +
t \be_i)}(x_{\bm})\ \frac{\Delta^t}{t!}\right| \leq c_1 \beta
\Delta^{\alpha-|\bs|}.$$
Since $\Delta^t = \delta_s \delta_{s+t}^{-1}$ for all $s,t$, we
further get
$$\left|\delta_{|\bs|}^{-1} f^{(\bs)}(x_{\bm_\star}) - \sum_{t}
\frac{\delta_{|\bs| + t}^{-1} f^{(\bs + t \be_i)}(x_{\bm})}{t!}\right| \leq c_1 \beta \Delta^\alpha/\delta.$$
This concludes the proof of Theorem \ref{th:graph_cover}. \QED

\subsection*{Proof of Theorem \ref{th:GLRT-approx}}
We use simplified notation for clarity.
By Boole's Inequality, we have
$$\pr{M_\cP > C \delta^{d-k} n | H_0} \leq |\cP| \cdot \max_{P \in \cP} \pr{N(\oR(P)) > C \delta^{d-k} n | H_0}.$$

For the number of paths,
\begin{equation} \label{eq:P-card}
|\cP| = O(\delta^{-(d-k)c_3} \Delta^{(d-k) c_4} \cdot (6^{(d-k) c_3})^{\Delta^{-k}}),
\end{equation} 
since there are $O(\delta^{-(d-k)c_3} \Delta^{(d-k) c_4})$ choices for the starting point and $6^{(d-k) c_3}$ at each step after that, along a path of length $\Delta^{-k}$ (see Lemma \ref{lem:hold-count-im}).
Hence, $\log(|\cP|) \leq a_1 \delta^{-k/\alpha}$ for some constant $a_1$ not depending on $\delta$.

Under $H_0$, $N(\oR(P))$ is binomial with parameters $n$ and  $|\oR(P)|_d$, so that
$$\max_{P \in \cP} \pr{N(\oR(P)) > C \delta^{d-k} n | H_0} = \pr{\Bin(n, \max_{P \in \cP} |\oR(P)|_d) > C \delta^{d-k} n}.$$
By integrating with respect to $\bz$ (the last $d-k$ coordinates) first, we have:
\begin{equation}
\label{eq:region-volume}
|\oR(\bm, \bh_1, \dots, \bh_{d-k})|_d = (c_0+1) \Delta^k \delta^{d-k}, \quad \forall (\bm, \bh_1, \dots, \bh_{d-k}).
\end{equation}
Hence, for all $P \in \cP$, $|\oR(P)|_d = (c_0+1) \delta^{d-k}$, and therefore, since $\delta = n^{-\alpha/(k+\alpha(d-k))}$,
$$\max_{P \in \cP} \pr{N(\oR(P)) > C \delta^{d-k} n | H_0} = \pr{\Bin(n, (c_0+1) \delta^{d-k}) > C \delta^{d-k} n}.$$
By standard large deviation bounds like Bernstein's Inequality \cite{ShoWel}, the logarithm of the right hand side is bounded from above by $- a_2 C \delta^{d-k} n$ for $C \geq 2 (c_0+1)$, where $a_2$ does not depend on $\delta$ or $n$.

Collecting terms, we get the following bound:
$$\log \pr{M_\cP > C \delta^{d-k} n | H_0} \leq a_1 \delta^{-k/\alpha} - C a_2 \delta^{d-k} n, \quad \forall C \geq 2 (c_0+1).$$
By choosing $C = 2 ((c_0+1) \vee (a_1/a_2))$, the right hand side is negative if $\delta \geq n^\rho$.
\QED

\subsection*{Proof of Theorem \ref{th:LSRT}}
We use simplified notation for clarity.
By (\ref{eq:region-volume}), for all $(\bm, \bh_1, \dots, \bh_{d-k}) \in \bbG^{k,d-k}(\alpha, \beta)$, we have:
$$\pr{S(\bm, \bh_1, \dots, \bh_{d-k}) = 1|H_0} = \pr{\Bin(n, a_1 \delta^{k/\alpha + d-k}) > \tau n \delta^{k/\alpha + d-k}},$$
where $a_1 = (c_0+1) (c_1 \beta)^{-k/\alpha}$.
Let $q_0(\tau) = \max \pr{\Bin(n, a_1 p) > \tau n p}$ over $n \in \bbN$ and $p \in (0,1)$ such that $np \geq 1$; note that $q_0(\tau) \to 0$ as $\tau \to \infty$.
In the same way that we obtained (\ref{eq:P-card}), we find that the number of paths of the form 
$$\{(\bm_{\rm zz}(t), \bh_1(t), \dots, \bh_{d-k}(t)): t = t_0, \dots, t_0 + \ell -1\}$$ 
is bounded above by $a_2 \delta^{-(d-k)c_3} \Delta^{(d-k) c_4} \Delta^{-k} 6^{(d-k) c_3 \ell}$, for $a_2$ not depending on $\mu$ or $\eta$.
With this fact and Boole's Inequality, we get:
$$\pr{L > \ell | H_0} \leq a_2 \delta^{-(d-k)c_3} \Delta^{(d-k) c_4-k} 6^{(d-k) c_3 \ell} \cdot q_0^\ell.$$
Hence, with $q_0$ small enough (i.e. $\tau$ large enough),
we have
$$\pr{L > \log(1/\delta)|H_0} \to 0, \quad n \to \infty,$$
where we used (\ref{eq:Delta-delta}).
We then conclude with the fact that $\delta = n^{-\alpha/(k+\alpha(d-k))} \vee \eta$.

Assume for concreteness that $\eta > 0$.
Under $H_1$, let $P^* \in \cP$ such that $\graph_\eta(f^*) \subset \oR(P^*)$.
Note that $|\graph_\eta(f^*)|_d = \eta^{d-k}$ and $|\graph_\eta(f^*) \cap \oR(\bm,\bh_1, \dots, \bh_{d-k})|_d = \Delta^k \eta^{d-k}$ for all $(\bm,\bh_1, \dots, \bh_{d-k}) \in P^*$.
Therefore, together with the behavior under the null, for all $(\bm, \bh_1, \dots, \bh_{d-k}) \in P^*$, we have:
$$\pr{S(\bm, \bh_1, \dots, \bh_{d-k}) = 1|H_1} \geq \pr{\Bin(n, \eps_n \Delta^{k}) > \tau n \delta^{k/\alpha + d-k}}.$$
Hence, with $\eps_n > C n \delta^{d-k}$ we have, for all $(\bm,\bh_1, \dots, \bh_{d-k}) \in P^*$,
$$\pr{S(\bm,\bh_1, \dots, \bh_{d-k}) = 1|H_1} \geq \pr{\Bin(n, C(c_1 \beta)^{-k/\alpha} n \delta^{k/\alpha + d-k}) > \tau n \delta^{k/\alpha + d-k}}.$$
With $\tau$ fixed as above, let $q_1(b) = \min \pr{\Bin(n, b p) > \tau np}$ over $n \in \bbN$ and $p \in (0,1)$ such that $np \geq 1$; note that $q_1(b) \to 1$ as $b \to \infty$.
Assume that $n$ is large enough that $n^\rho > 1$, so that $n \delta^{k/\alpha + d-k} > 1$.
Then, for $C = b (c_1 \beta)^{k/\alpha}$,
$$\pr{S(\bm,\bh_1, \dots, \bh_{d-k}) = 1|H_1} \geq q_1(b), \quad \forall (\bm,\bh_1, \dots, \bh_{d-k}) \in P^*.$$
Now, by the Erd\"os-R\'enyi Law \cite{MR1092983} (in fact a modified version allowing for weak dependencies, found in Appendix A.3 in \cite{ery-thesis}), the longest significant run along $P^*$ has length at least $\log(|P^*|)/\log(1/q_1)(1+o(1))$ with high probability.
Hence, for $q_1$ large enough (i.e. $C$ large enough),
$$\pr{L > \log(1/\delta)|H_1} \to 1, \quad n \to \infty,$$
where we used $|P^*| = \Delta^{-k}$ together with (\ref{eq:Delta-delta}).
\QED

\subsection*{Proof of Lemma \ref{lem:beamlet-count}}
There are $O(\Delta^{-d})$ $\Delta$-hypercubes and each one of them has $O(\Delta \delta_0^{-1})^{d-1}$ gridpoints on anyone of its $2d$ faces.
Therefore, there are $O(\Delta^{-d} d^2 (\Delta \delta_0^{-1})^{2(d-1)}) = O(d^2 2^{2(d-1)J - (d-2)j})$ beamlets at scale $j$.

We now look at the degree of a beamlet $B = [b_1, b_2] \in \bbB_{j,J}^d$.
Let $r$ be such that $|b_{r,2} - b_{r,1}| = \|b_2 - b_1\|$.
Consider another beamlet of the form $[b_2,b_3]$, and define 
$$b_3^* = b_{2} + \frac{b_{r,3} - b_{r,2}}{b_{r,2} - b_{r,1}} (b_{2} - b_{1}).$$
If $[b_1, b_2]$ and $[b_2, b_3]$ are neighbors,
\begin{eqnarray*}
\|b_{3} - b_{3}^*\| 
&\leq& 2^{j-J} \frac{\|b_{3} - b_{2}\| + \|b_{2} - b_{1}\|}{|b_{r_2,2} - b_{r_2,1}|} \\
&\leq& 2^{j-J} (\Delta/\delta_0 + 1) \\
&=& O(2^{-J} |B|^{-1}) 
\end{eqnarray*}
where the first inequality comes from (\ref{eq:beam-neigh}) and the second from the fact that for any beamlet $[b, b']$, $\delta \leq \|b' - b\| \leq \Delta$.
Now, in a ball of radius $A$ there are at most $O(A/\delta_0)^{d-1}$ $r$-gridpoints, for any $r = 1, \dots, d$.
Therefore, $B$ has $O(d |B|^{-(d-1)})$ neighbors. 
\QED

\subsection*{Proof of Lemma \ref{lem:beam-count}}
The number of $r$-gridpoints is of order $O(\Delta^{-1} \delta^{-(d-1)})$.
For a fixed $r$-gridpoint $b_1$, there are order $O(\Delta \delta^{-1})^{d-1}$ $r$-gridpoints $b_2$ such that $[b_1, b_2]$ forms an $r$-beam. 
Therefore, the number of $r$-beams is of order $O(\Delta^{d-2} \delta^{-2d+2})$.

The number of $r_1r_2$-beams is of smaller order of magnitude, because they are more constrained.
Indeed, $b_{r_1,2}$ and $b_{r_2,2}$ are determined by $b_{r_1,1}$ and $b_{r_2,1}$ up to $O(\delta)$, which corresponds to order $O(1)$ choices; while for each $b_{r_3,2}, r_3 \neq r_1, r_2$ there are order $O(\Delta \delta^{-1})$ choices.
Hence, there are order $O(\Delta^{d-3} \delta^{-2d+3})$ $r_1r_2$-beams.

All together, the number of beams is of order $O(d \Delta^{d-2} \delta^{-2d+2} (1 + d \Delta^{-1} \delta))$.\\

We now bound the degree of a vertex in $\overline{\bbB}_{j,J}^d$.
By construction, an $r_1$-beam is connected only with $r_1$-beams, and (possibly) $r_1r_2$-beams; similarly, an $r_1r_2$-beam is connected only with $r_1$- and $r_2$-beams, and (possibly) $r_3r_1$- and $r_2r_3$-beams.

Fix an $r_1r_2$-beam $[b_1,b_2]$.
First, take an $r_2$-beam $[b_2,b_3]$ and define 
$$b_3^* = b_{2} + \frac{b_{r_2,3} - b_{r_2,2}}{b_{r_2,2} - b_{r_2,1}} (b_{2} - b_{1}),$$
which is the intersection of the line $(b_1,b_2)$ with the $r_2$-hyperplane $b_3$ belongs to.
If $[b_1,b_2]$ and $[b_2,b_3]$ are neighbors,  
\begin{eqnarray*}
\|b_{3} - b_{3}^*\| 
&\leq& (11\delta/20) \frac{\|b_{3} - b_{2}\| + \|b_{2} - b_{1}\|}{|b_{r_2,2} - b_{r_2,1}|} \\
&\leq& (11\delta/20) (4 + O(\delta)) \\
&<& (5/2) \delta, \quad \text{for $\delta$ small enough}, 
\end{eqnarray*}
where the first inequality comes from (\ref{eq:beam-neigh}) and the second from (\ref{eq:h-beam})-(\ref{eq:hh-beam-3}).
Therefore, there are at most $5$ choices per coordinate of $b_3$ (except for $b_{r_2,3}$).
Hence, $[b_1,b_2]$ has at most $5^{d-1}$ neighbors that are $r_2$-beams.
Similarly, $[b_1,b_2]$ has at most $5^{d-1}$ neighbors that are $r_1$-beams.


Next, take an $r_2r_3$-beam $[b_2,b_3]$ and define 
$$b_1^* = b_{2} + \frac{b_{r_1,3} - b_{r_1,2}}{b_{r_1,2} - b_{r_1,1}} (b_{2} - b_{3}),$$
which is the intersection of the line $(b_2,b_3)$ with the $r_1$-hyperplane $b_1$ belongs to.
If $[b_1,b_2]$ and $[b_2,b_3]$ are neighbors, performing the corresponding computations we arrive at
\begin{equation}
\|b_{1} - b_{1}^*\| < (11\delta/20) (6 + O(\delta)) < (7/2) \delta, \quad \text{for $\delta$ small enough}.  \nonumber
\end{equation}
Note that $[b_2,b_1^*]$ is proportional to $[b_2,b_3]$, with a constant of proportionality between 1 and 2, and $[b_2,b_3]$ satisfies (\ref{eq:hh-beam-1})-(\ref{eq:hh-beam-3}) with $r_1$ replaced by $r_3$; this together with the bound above (applied with $r_4 = r_3$) and the triangle inequality implies:
$$|b_{r_3,1} - b_{r_3,2}| > |b_{r_2,1} - b_{r_2,2}| - 6\delta.$$
Turning things around, define 
$$b_3^* = b_{2} + \frac{b_{r_3,3} - b_{r_3,2}}{b_{r_3,2} - b_{r_3,1}} (b_{2} - b_{1}),$$
which is the intersection of the line $(b_1,b_2)$ with the $r_3$-hyperplane $b_3$ belongs to.
Again using (\ref{eq:beam-neigh}) together with properties (\ref{eq:h-beam})-(\ref{eq:hh-beam-3}), and the above inequality, we get 
\begin{equation}
\|b_{3} - b_{3}^*\| < (11\delta/20) (6 + O(\delta)) < (7/2) \delta, \quad \text{for $\delta$ small enough}.  \nonumber
\end{equation}
Therefore, there are at most $7$ choices per coordinate of $b_3$ (except for $b_{r_3,3}$).
Hence, $[b_1,b_2]$ has at most $7^{d-1}$ neighbors that are $r_2r_3$-beams.
Similarly, $[b_1,b_2]$ has at most $7^{d-1}$ neighbors that are $r_3r_1$-beams.

The reasoning is similar when $[b_1,b_2]$ is an $r_1$-beam.
In particular, an $r$-beam that does not make an angle close a 45$^o$ with the $r$-hyperplanes it connects only has $r$-beams as neighbors, at most $5^{d-1}$ on each side.
\QED

\subsection*{Proof of Theorem \ref{th:curve-covering}}
\label{proof:curve-covering}
Assume the conditions Theorem \ref{th:curve-covering} are satisfied, with the constant $K$ chosen as to make all the forthcoming appearances of $O(\kappa \Delta^\alpha)$ sufficiently small compared to $\delta$.
Note that under these conditions, both $\delta$, $\Delta$ and $\delta/\Delta$ are all decreasing functions of $J$.
We choose a smooth parametrization by arc-length of $\gamma$.
Let $\ell = \length(\gamma)$.
 
We say that $(u_1, \dots, u_d) \in \bbR^d$ is an $r$-vector if $|u_{r}| \geq |u_{r_1}|, \forall r_1$.
Without loss of generality, assume $\gamma'(0)$ is an $r$-vector with
$\gamma_{r}'(0) > 0$.
We may also assume that $\gamma(0)$ belongs to an $r$-hyperplane, for otherwise we
work with an extension of $\gamma$ that reaches an $r$-hyperplane in the direction
$-\gamma'(0)$.
We recursively define an increasing sequence of arclengths $\{s_i :0 \leq i \leq I\}$.
First, let $s_0 = 0$.
Suppose $s_{i-1}$ has been defined and assume,
without loss of generality, that $\gamma'(s_{i-1})$ is an $r$-vector; then, let 
$$s_{i} = \inf\{s > s_{i-1}: \gamma(s) \ \text{belongs to an $r$-hyperplane and } |\gamma_{r}(s) - \gamma_{r}(s_{i-1})| \geq \Delta \},$$
with the usual convention $\inf\{\emptyset\} = \infty$.
We may also assume that $s_{i} < \infty$, for otherwise we work with an extension of $\gamma$ that reaches an $r$-hyperplane (in the case above) in the direction $\gamma'(\ell)$.
If $s_{i} = \ell$, then let $I = i$ and stop the recursion.

Define $b_i$ to be the gridpoint closest to $\gamma(s_i)$, and $B_i = [b_i,b_{i+1}]$ the line-segment joining $b_i$ and $b_{i+1}$.

\begin{itemize}
\item {\bf Claim 1.} $I \leq \lambda \Delta^{-1} + 2$.
 
\item {\bf Claim 2.} For $i=0,\dots,I-1$, $B_i$ is a vertex in $\overline{\bbB}_{j,J}^d$.

\item {\bf Claim 3.} For $i=0,\dots,I-2$, $B_{i}$ and $B_{i+1}$ are neighbors
in $\overline{\bbB}_{j,J}^d$.

\item {\bf Claim 4.} For $i=0,\dots,I-1$, $\gamma([s_i,s_{i+1}]) \subset
R(B_i)$.
\end{itemize}

\noindent {\it Preliminaries.}
\begin{itemize}

\item 
For $\Delta$ small enough,  
\begin{equation}
\label{eq:sdiff}
\Delta \leq s_{i+1} - s_i < 3 d^{1/2} \Delta, \quad \forall i=0,\dots,I-1.
\end{equation}
{\it Proof.}
We start with the lower bound. 
A Taylor expansion and the fact that $\gamma$ is parametrized by arc gives
$$\|\gamma(s_{i+1}) - \gamma(s_i)\| \leq s_{i+1} - s_i,$$
and by construction, $\|\gamma(s_{i+1}) - \gamma(s_i)\| \geq \Delta$.

We now turn to proving the upper bound.
Suppose without loss of generality that $\gamma'(s_i)$ is an $r$-vector.
Then, by construction $|\gamma_{r}(s) - \gamma_{r}(s_i)| \leq 2 \Delta, \forall s \in [s_i,s_{i+1}]$, and $|\gamma_{r}'(s_i)| \geq d^{-1/2}$.
This, (\ref{eq:beam_taylor1}) and the triangle inequality imply
$$2 \Delta + \kappa (s-s_i)^\alpha - d^{-1/2} (s-s_i) \geq 0, \quad \forall s \in [s_i,s_{i+1}].$$
And for $\Delta < (3^\alpha d^{\alpha/2} \kappa)^{1/(\alpha-1)}$, the left handside is negative for $s-s_i$ replaced by $3 d^{1/2} \Delta$, so that $s_{i+1}-s_i < 3 d^{1/2} \Delta$.
\QED

\item For $i=1,\dots,I$, if $\gamma'(s_i)$ is an $r_1$-vector (say) and
$\gamma'(s_{i-1})$ is a $r_2$-vector (say), then 
\begin{equation}
\label{eq:diag_vect}
|\gamma_{r_1}'(s_{i})| - |\gamma_{r_2}'(s_{i})| \leq 4 \kappa (s_i - s_{i-1})^{\alpha-1}.
\end{equation}
{\it Proof.}
Because $|\gamma_{r_2}'(s_{i-1})| \geq |\gamma_{r_1}'(s_{i-1})|$, we have
$$|\gamma_{r_1}'(s_{i})| - |\gamma_{r_2}'(s_{i})| \leq |\gamma_{r_1}'(s_{i})| -
|\gamma_{r_1}'(s_{i-1})| - (|\gamma_{r_2}'(s_{i})| - |\gamma_{r_2}'(s_{i-1})|).$$
Using Lemma \ref{lem:beam_taylor_derivative}, this implies
$$|\gamma_{r_1}'(s_{i})| - |\gamma_{r_2}'(s_{i})| \leq 4 \kappa (s_i -
s_{i-1})^{\alpha-1}.$$
\QED

\item For $i=1,\dots,I$, 
\begin{equation}
\label{eq:b-triangle}
\|b_{i} - \gamma(s_{i})\| \leq \delta/2.
\end{equation}
{\it Proof.}  By construction.
\QED

\end{itemize}

\noindent {\it Proof of Claim 1.}
A straightforward consequence of the fact that $s_{i+1} - s_i \geq \Delta$ for all $i = 0, \dots, I-1$.
The `+2' comes from possibly extending the curve as described above.  \QED

\noindent {\it Proof of Claim 2.}
We first prove that $B_0$ is a beam.
Assume without loss of generality that $\gamma(s_0)$ is on an $r_1$-hyperplane and $\gamma'(s_0)$ is a
$r_1$-vector; we now show that $B_0$ is an $r_1$-beam.
Applying (\ref{eq:beam_taylor1}) and the fact that
$|\gamma_{r_2}'(s_0)| \leq |\gamma_{r_1}'(s_0)|$ we get
$$|\gamma_{r_2}(s_1) - \gamma_{r_2}(s_0)| \leq |\gamma_{r_1}'(s_0)| (s_1 - s_0) + \kappa
(s_1 - s_0)^\alpha.$$
Using (\ref{eq:beam_taylor1}) again, together with $|\gamma_{r_1}(s_1) -
\gamma_{r_1}(s_0)| = \Delta$, we have
$$|\gamma_{r_1}'(s_0)| (s_1 - s_0) \leq \Delta + \kappa (s_1 - s_0)^\alpha.$$
Therefore,
$$|\gamma_{r_2}(s_1) - \gamma_{r_2}(s_0)| \leq \Delta + 2 \kappa (s_1 - s_0)^\alpha.$$
On the right handside, we use (\ref{eq:sdiff}) to get $\kappa (s_1 - s_0)^\alpha = O(\kappa \Delta^\alpha)$.
All together,
$$|\gamma_{r_2}(s_1) - \gamma_{r_2}(s_0)| \leq \Delta + O(\kappa \Delta^\alpha);$$
similarly
$$|\gamma_{r_3}(s_1) - \gamma_{r_3}(s_0)| \leq \Delta + O(\kappa \Delta^\alpha).$$
This, together with the triangle inequality and (\ref{eq:b-triangle}), shows that
$$\|b_1 - b_0\| \leq \delta + \Delta + O(\kappa \Delta^\alpha) < 2 \delta +
\Delta,$$
which implies that $\|b_1 - b_0\| \leq \delta + \Delta$ since $\|b_1 - b_0\|$ is an integer multiple of $\delta$.  This proves that $B_0$ is an $r_1$-beam.

We next consider $i=1,\dots,I-1$ and prove that $B_i$ is a beam.
Assume without loss of generality that $\gamma'(s_i)$ is an $r_1$-vector.
If $\gamma'(s_{i-1})$ is an $r_1$-vector, then $\gamma(s_{i})$ is on an $r_1$-hyperplane
and the situation is as above.
Therefore, assume without loss of generality that $\gamma'(s_{i-1})$ is a $r_2$-vector; we
now show that $B_i$ is an $r_1r_2$-beam.
With (\ref{eq:beam_taylor1}), we get
$$|\ |\gamma_{r_1}(s_{i+1}) - \gamma_{r_1}(s_{i})| - |\gamma_{r_2}(s_{i+1}) -
\gamma_{r_2}(s_{i})|\ | \leq (|\gamma_{r_1}'(s_{i})| - |\gamma_{r_2}'(s_{i})|) (s_{i+1} -
s_i) + 2 \kappa (s_{i+1} - s_i)^\alpha;$$
together with (\ref{eq:diag_vect}), this implies
$$|\ |\gamma_{r_1}(s_{i+1}) - \gamma_{r_1}(s_{i})| - |\gamma_{r_2}(s_{i+1}) - \gamma_{r_2}(s_{i})|\ | \leq 6 \kappa (s_{i+1} - s_i)^\alpha,$$
and with (\ref{eq:sdiff}), 
$$|\ |\gamma_{r_1}(s_{i+1}) - \gamma_{r_1}(s_{i})| - |\gamma_{r_2}(s_{i+1}) - \gamma_{r_2}(s_{i})|\ | = O(\kappa \Delta^\alpha).$$
On the other hand, using (\ref{eq:beam_taylor1}) and the fact
that $|\gamma_{r_1}'(s_{i})| \geq |\gamma_{r_3}'(s_{i})|$, we also get
$$|\gamma_{r_1}(s_{i+1}) - \gamma_{r_1}(s_{i})| - |\gamma_{r_3}(s_{i+1}) - \gamma_{r_3}(s_{i})| \geq - 2 \kappa (s_{i+1} - s_i)^\alpha,$$
which by (\ref{eq:sdiff}) implies 
$$|\gamma_{r_1}(s_{i+1}) - \gamma_{r_1}(s_{i})| - |\gamma_{r_3}(s_{i+1}) - \gamma_{r_3}(s_{i})| \geq - O(\kappa \Delta^\alpha).$$

Using the equations above together with the triangle inequality and (\ref{eq:b-triangle}),
$$|\ |b_{r_1,i+1}-b_{r_1,i}| - |b_{r_2,i+1}-b_{r_2,i}|\ | \leq \delta + O(\kappa \Delta^\alpha) < 2\delta,$$
$$|b_{r_1,i+1}-b_{r_1,i}| - |b_{r_3,i+1}-b_{r_3,i}| \geq -  \delta - O(\kappa \Delta^\alpha) > -2 \delta.$$  
Since the differences above are multiple integers of $\delta$, the first inequality may be replaced by $\leq \delta$ and the second by $\geq - \delta$.
Therefore $B_i$ is an $r_1r_2$-beam. 
\QED

\noindent {\it Proof of Claim 3.}
Fix coordinate $r_1$; we want to show that:
\begin{eqnarray*}
&& \|(b_{r_1,i+1} - b_{r_1,i})(b_{i+2} - b_{i+1}) - (b_{r_1,i+2} - b_{r_1,i+1})(b_{i+1} - b_{i})\| \\
&& \qquad \qquad \qquad \qquad < (11\delta/20) (\|b_{i+2} - b_{i+2}\| + \|b_{i+1} - b_{i}\|).
\end{eqnarray*}

We first prove a similar inequality involving $\gamma$:
\begin{eqnarray*}
&& \|(\gamma_{r_1}(s_{i+1}) - \gamma_{r_1}(s_{i}))(\gamma(s_{i+2}) - \gamma(s_{i+1})) - (\gamma_{r_1}(s_{i+2}) - \gamma_{r_1}(s_{i+1})) (\gamma(s_{i+1}) - \gamma(s_i)) \| \\
&& \qquad \qquad \qquad \qquad \leq O(\kappa \Delta^\alpha) (\|\gamma(s_{i+2}) - \gamma(s_{i+1})\| + \|\gamma(s_{i+1}) - \gamma(s_{i})\|).
\end{eqnarray*}
This simply comes from applying (\ref{eq:beam_taylor1}) to get
\begin{align} 
\gamma_{r_1}(s_{i+2}) - \gamma_{r_1}(s_{i+1}) &= \gamma_{r_1}'(s_{i+1}) (s_{i+2}-s_{i+1}) + O(\kappa (s_{i+2}-s_{i+1})^\alpha), \nonumber \\
\gamma_{r_1}(s_{i+1}) - \gamma_{r_1}(s_{i}) &= \gamma_{r_1}'(s_{i+1}) (s_{i+1}-s_i) + O(\kappa (s_{i+1}-s_i)^\alpha), \nonumber \\
\gamma(s_{i+2}) - \gamma(s_{i+1}) &= \gamma'(s_{i+1}) (s_{i+2}-s_{i+1}) + O(\kappa (s_{i+2}-s_{i+1})^\alpha), \nonumber \\
\gamma(s_{i+1}) - \gamma(s_{i}) &= \gamma'(s_{i+1}) (s_{i+1}-s_i) + O(\kappa (s_{i+1}-s_i)^\alpha), \nonumber
\end{align}
and then using the triangle inequality and (\ref{eq:sdiff}).

Using this inequality, the triangle inequality and (\ref{eq:b-triangle}), we then get  
\begin{eqnarray*}
&& \|(b_{r_1,i+1} - b_{r_1,i})(b_{i+2} - b_{i+1}) - (b_{r_1,i+2} - b_{r_1,i+1})(b_{i+1} - b_{i})\| \\
&& \qquad \qquad \qquad \qquad < (\delta/2 + O(\kappa \Delta^\alpha)) (\|b_{i+2} - b_{i+2}\| + \|b_{i+1} - b_{i}\| + 2 \delta) \\
&& \qquad \qquad \qquad \qquad < (\delta/2 + O(\kappa \Delta^\alpha)) (1 + \Delta^{-1} \delta) (\|b_{i+2} - b_{i+2}\| + \|b_{i+1} - b_{i}\|),
\end{eqnarray*}
where we used the fact that any beam has (as a vector) supnorm at least $\Delta$.
We conclude by making $O(\kappa \Delta^\alpha)$ sufficiently small compared with $\delta$ and $\delta$ itself sufficiently small compared with $\Delta$.

Therefore, $B_i = [b_i,b_{i+1}]$ and $B_{i+1} = [b_{i+1},b_{i+2}]$ are neighbors in $\overline{\bbB}_{j,J}^d$.  \QED

\noindent {\it Proof of Claim 4.}
Applying Lemma \ref{lem:beam_taylor2} with $r = s_{i}$ and $t = s_{i+1}$, it
follows that $\gamma([s_{i},s_{i+1}])$ belongs to the $2 \kappa (s_{i+1}-
s_i)^\alpha$-neighborhood of $[\gamma(s_{i}),\gamma(s_{i+1})]$.
Because of (\ref{eq:sdiff}), (\ref{eq:b-triangle}) and the triangle inequality, this implies
that $\gamma([s_{i},s_{i+1}]) \subset R(B_i)$. \QED


\bibliographystyle{abbrv}
\addcontentsline{toc}{section}{\refname}
\bibliography{MinMaxDetect}

\end{document}